
\documentclass[10pt,journal,compsoc]{IEEEtran}

\usepackage{amsmath} 
\usepackage{amsfonts} 
\usepackage{amssymb}
\usepackage{amsthm}
\usepackage[utf8]{inputenc}
\usepackage{subfigure}
\usepackage{caption}
\usepackage{bbm}
\usepackage{cuted}
\usepackage[pdftex]{graphicx} 

%


%

%
\ifCLASSOPTIONcompsoc
  \usepackage[nocompress]{cite}
\else
  \usepackage{cite}
\fi
%

%
\ifCLASSINFOpdf
\else
\fi
\DeclareMathOperator*{\argmin}{argmin}

\newtheorem{defi}{Definition}
\newtheorem{prop}{Proposition}
\newtheorem{cor}{Corollary}
\newtheorem{theo}{Theorem}
\newtheorem{lem}{Lemma}
\newtheorem{NR}{Numerical Result}

\hyphenation{op-tical net-works semi-conduc-tor}

\begin{document}
%
\title{Analysis of Static Cellular Cooperation between Mutually Nearest Neighboring Nodes}
%
%
%
%

\author{Luis~David~\'Alvarez~Corrales,
        Anastasios~Giovanidis,~\IEEEmembership{Member,~IEEE,}
        Philippe~Martins,~\IEEEmembership{Senior~Member,~IEEE,}
        and~Laurent~Decreusefond
\IEEEcompsocitemizethanks{\IEEEcompsocthanksitem Luis David \'Alvarez Corrales conducted this research while at T\'el\'ecom ParisTech, 23 avenue d'Italie, 75013, Paris, France.\protect\\
E-mail: luis.alvarez-corrales@telecom-paristech.fr
\IEEEcompsocthanksitem Anastasios Giovanidis is with the CNRS. He conducted this research while affiliated with T\'el\'ecom ParisTech, 23 avenue d'Italie, 75013, Paris, France.  \protect\\
E-mail: anastasios.giovanidis@telecom-paristech.fr

He is now affiliated with University Pierre et Marie Curie, CNRS-LIP6.
\protect\\
E-mail: anastasios.giovanidis@lip6.fr
\IEEEcompsocthanksitem Philippe Martins is with T\'el\'ecom ParisTech, 23 avenue d'Italie, 75013, Paris, France.
\protect\\
E-mail: philippe.martins@telecom-paristech.fr
\IEEEcompsocthanksitem Laurent Decreusefond is with T\'el\'ecom ParisTech, 23 avenue d'Italie, 75013, Paris, France.
\protect\\
E-mail: laurent.decreusefond@telecom-paristech.fr}
}

\IEEEtitleabstractindextext{%
\begin{abstract}
Cooperation in cellular networks is a promising scheme to improve
system performance. Existing works consider that a user dynamically chooses the stations
that cooperate for his/her service, but such assumption often has practical limitations. Instead,
cooperation groups can be predefined and static, with nodes linked by fixed infrastructure. To
analyze such a potential network, we propose a grouping method based on node
proximity. With the Mutually Nearest Neighbour Relation, we allow
the formation of singles and pairs of nodes. Given an initial topology for the stations, two new
point processes are defined, one for the singles and one for the pairs. We derive structural characteristics for these processes and analyse the resulting interference fields. When the node positions follow a Poisson Point Process (PPP) the processes of singles and pairs are not Poisson. However, the performance of the original model can be approximated by the superposition of two PPPs. This allows the derivation of exact expressions for the coverage probability. Numerical evaluation shows coverage gains from different signal cooperation that can reach up to $15\%$ compared to the standard noncooperative coverage. The analysis is general and can be applied to any type of cooperation in pairs of transmitting nodes.
\end{abstract}

\begin{IEEEkeywords}
Cooperation; Static groups; Poisson cellular network; Thinning; Interference; Poisson superposition.
\end{IEEEkeywords}}

\maketitle

\IEEEdisplaynontitleabstractindextext

%
\IEEEpeerreviewmaketitle

\IEEEraisesectionheading{\section{Introduction}\label{sec:introduction}}

\IEEEPARstart{C}{ooperation} between wireless nodes, such as cellular base stations (BSs) is receiving in recent years a lot of attention. It is considered as a way to reduce intercell interference in future cellular networks and consequently improve network capacity. It is particularly beneficial for users located at the cell-edge, where significant $\mathrm{SINR}$ gains can be achieved in the downlink. In the wireless literature, there is a considerable amount of research on the topic, which relates to the concept of CoMP \cite{CoorMulConIrmer,TheRolSmallCellsJungV}, Network MIMO \cite{CapMeasuJungnick2008,GesMultMIMO2010,GioA012012}, or C-RAN 
\cite{CloudRANChecko2015,DynResAllocLyaz2016}. It is also expected to play a significant role due to the coming densification of networks with HetNets \cite{DhillonBest12,OptSmallCellsChecko}. The various strategies proposed differ in the number of cooperating nodes, the type of signal cooperation, the amount of information exchange, and the way groups (clusters) are formed. 

Recent studies analyse such cooperative networks with Stochastic Geometry as the main analytic tool \cite{BacBlaVol1}. Modeling the position of wireless nodes via a Point Process gives the possibility to include the impact of irregularity of BS locations on the users' performance (e.g. $\mathrm{SINR}$, throughput, delay). Furthermore, the gains from cooperation can be quantified in a systematic way, so there is no need to test each different instance of the network topology by simulations. Closed formulas are very important for an operator that wants to plan and deploy an infrastructure with cooperation functionality, because these can provide intuition on the relative influence of various design parameters.

\subsection{Related Work}

There are important results available for BS cooperation in wireless networks. In \cite{BacAStoGeo2015}, Baccelli and Giovanidis analyse the case where BSs are modeled by a Poisson Point Process (PPP) and each user-terminal triggers the cooperation of its two closest BSs for its service. The authors show coverage improvements and an increase of the coverage cell. In \cite{NigCoordMul2014}, Nigam et al consider larger size of clusters, showing that BS cooperation is more beneficial for the worst-case user. The $\mathrm{SINR}$ experienced by a typical user when served by the $K$ strongest BSs is also investigated by Blaszczyszyn and Keeler in \cite{BlasStuSINTFact2015}, where the authors derive tractable integral expressions of the coverage probability for general fading by the use of factorial moment measures. An analysis of a similar problem with the use of Laplace Transforms (LT) is provided by Tanbourgi et al in \cite{TanTracMod2014}. Sakr and Hossain propose in \cite{LocAwaSakrHoss} a scheme between BSs in different tiers for downlink CoMP. Outside the Stochastic Geometry framework, we find \cite{GioA012012} and \cite{PapaADynmClust2008}. In \cite{PapaADynmClust2008}, Papadogiannis et al propose a dynamic clustering algorithm incorporating multi-cell cooperative processing. All the above works assume that a user-terminal dynamically selects the set of stations that cooperate for its service, which changes the cluster formation for every different configuration of users. This is difficult to be applied in practice. 

Other works propose to group BSs \textit{in a static way}, so that the clusters are a-priori defined and do not change over time. The appropriate static clustering should result in considerable performance benefits for the users, with a cost-effective infrastructure. In favour of the static grouping approach are Akoum and Heath \cite{AkouIntCoor2013}, who randomly group BSs around virtual centres; Park et al \cite{ParkColBSs}, who form clusters by using edge-coloring for a graph drawn by Delaunay triangulation; Huang et al \cite{AStocGemMultCellHuang2011}, who cluster BSs using a hexagonal lattice, and Guo et al who analyse in \cite{GuoSPGP2014} the coverage benefits of cooperating pairs modeled by a Gauss-Poisson point process \cite{StocGeoWirHaen2012}. The existing static clustering models either group BSs in a random way \cite{AkouIntCoor2013}, or they randomly generate additional cluster nodes around a cluster center \cite{GuoSPGP2014,AfshFundClustCent}, which is translated in the physical world into installing randomly new nodes in the existing infrastructure. A more appropriate analysis should have a map of existing BS locations as the starting point, and from this define in a systematic way cooperation groups. The criterion for grouping should be based on node proximity, in order to limit the negative influence of first-order interference.

\subsection{Mutually Nearest Neighbor cooperation}

Consider a fixed deployment of single antenna BSs on the plane. As argued above, we wish to organize these BSs (or atoms) into \textit{static cooperative groups}, with possibly different sizes. These groups must be mutually disjoint and their union should exhaust the whole set of BSs. Additionally, the groups must be invariable in size and elements with respect to the random parameters of the telecommunication network (e.g. fading, shadowing, or user positions). Hence, we look for a criterion that aims at \textit{network-defined, static clusters} as opposed to the user-driven selection of other works. 

For this reason, we will propose rules that depend only on geometry: An atom takes part in a group, based solely on its relative distance to the rest of the atoms. Geometry is related to the pathloss factor of the channel gain, so it encompasses important aspects that influence signal power.

The specific grouping criterion (for static geometric clusters) that we propose in this work is the \textit{Mutually Nearest Neighbor Relation (MNNR)}. 
The main idea is that two BSs belong to the same group if one of the two is the nearest neighbor of the other. The MNNR is the keystone that allows us to construct static clusters of singles and pairs. It is inspired by a model studied by Häggström and Meester \cite{HaggNNHS1996}, the \textit{Nearest Neighbor Model (NNM)}, and further analysed in \cite{DescChainLilyModDalLast2005,Kozakova06,DalStoSto99,GioAnalInt2015,AlvCovGains2016}, where each atom connects to its geometrically \textit{Nearest Neighbor} by an unidirected edge. Although we will consider here groups of size at most 2, the NNM can allow for an extension of our approach to include larger groups. This is part of our ongoing research. 
 
\subsection{Contributions}
This paper provides the following contributions:

\begin{itemize}
\item We introduce the MNNR, a grouping method for BSs whose positions are modeled by a stationary point process $\Phi$ (Section \ref{SecII}). Most results are derived when $\Phi$ is chosen to be a Poisson Point Process (PPP).
\item We analyse wireless networks with two types of clusters (groups): Single nodes, that do not cooperate, and pairs of nodes that cooperate with each other (Section \ref{SecII}).
\item From the dependent thinning determined by the MNNR, we construct two point processes $\Phi^{(1)}$ and $\Phi^{(2)}$, the processes of singles and pairs, respectively. Structural properties of both are provided: (a) the average proportion of atoms from $\Phi$ that belong to $\Phi^{(1)}$ and $\Phi^{(2)}$, (b) the average proportion of Voronoi surface related to each one of them, (c) their respective Palm measures, as well as (d) properties concerning repulsion/attraction (Section \ref{SecII}).   
\item Our analysis is done in a general sense, without restricting ourselves to specific cooperating signal schemes (Section \ref{SecIII}). Altogether, we provide the analytic tools that evaluate various strategies for transmitter cooperation/coordination, as those in \cite{SharStratKerr2013,TransCoopKerret2013,CoorMulConIrmer,TheRolSmallCellsJungV,GesMultMIMO2010}.
\item We provide an analysis of the interference generated by the processes $\Phi^{(1)}$ and $\Phi^{(2)}$, and derive explicit expressions for the corresponding expected values, along with a methodology to obtain their Laplace Transfrom (LT) (Section \ref{SecIV}).   
\item Based on the structural characteristics of the singles and the cooperative pairs, we introduce an approximate model: the superposition of two independent PPPs. Using this, a complete analysis of the coverage probability is provided, for two different scenarios of user-to-BS association (Section \ref{SecV}).
\item In Section \ref{SecVI} the analytic formulas are validated through simulations and the gains of static nearest neighbor grouping are quantified. Section \ref{SecVII} presents some pros and cons of the model. The final conclusions are drawn in Section \ref{SecVIII}.
\end{itemize}

\subsection{Notation}

Let all random elements be defined on a common probability space $(\Omega,\mathcal{F},\mathbb{P})$, and let $\mathbb{E}$ denote the expectation under $\mathbb{P}$. 



Let $\Phi=\{\phi\}$ be a point process, with values in $\mathbb{R}^2$, where every $\phi$ represents an element of the space of simple, and locally finite configurations of $\mathbb{R}^2$ points \cite{KalRanMea}.  

If an atom $x$ and a configuration $\phi$ are fixed, $\phi\cup \{x\}$ denotes the simple and locally finite configuration containing all the elements from $\phi$ plus $x$ in the case where $x$ does not belong to $\phi$, otherwise it just denotes $\phi$. In the same fashion, $\phi\backslash \{x\}$ denotes the simple and locally finite configuration containing all the elements from $\phi$ without the point $\{x\}$ in the case where $x$ actually belongs to $\phi$, otherwise it represents just $\phi$. 


The Euclidean distance and the Euclidean surface on $\mathbb{R}^2$ are denoted by $\| \cdot \|$ and $\mathcal{S}( \cdot )$, respectively.  

For $x\in \mathbb{R}^2$ and a $A$ closed subset of $\mathbb{R}^2$, we denote 
\begin{equation*}
d(x,A):=\inf_{y\in A} \|x-y\|,
\end{equation*}
the distance from $x$ to $A$. 

Finally, for $x\in \mathbb{R}^2$ and $r>0$, let 
\begin{equation*}
B(x,r):=\{y\in \mathbb{R}^2 \ | \ \|x-y\|<r\}.
\end{equation*}

\section{The Mutually Nearest Neighbor model}
\label{SecII}
 
\subsection{Singles and pairs}
\label{SecIIB}

For two different atoms $x$ and $y$ in the configuration $\phi$, we say that $x$ is in \textit{Nearest Neighbor Relation (NNR)} with $y$ (with respect to $\phi$) if
\begin{equation*}
y=\argmin_{z\in \phi \backslash \{x\}} \|x-z\|, 
\end{equation*}
and we write $x \stackrel{\phi}{\rightarrow} y$. When the atom $x$ is not in NNR with $y$, we write $x \stackrel{\phi}{\not\rightarrow} y$. 

Henceforth, we will  only consider stationary point processes $\Phi=\{\phi\}$ whose realisations fulfill the \textit{uniqueness} of the nearest neighbor a.s. Note however that this condition does not generally hold. For example, within the finite configuration $\phi=\{(0,0),(1,0),(0,1)\}$, the Euclidean origin is in NNR with both $(1,0)$ and $(0,1)$. 
\begin{defi}
\label{Single}
Two different atoms $x,y$ are in Mutually Nearest Neighbor Relation (MNNR) if and only if (iff) $x \stackrel{\phi}{\rightarrow} y$ and $y \stackrel{\phi}{\rightarrow} x$, and we denote it by $x \stackrel{\phi}{\leftrightarrow} y$. In telecommunication terms, we say that the two BSs $x$ and $y$ are in cooperation. 
\end{defi} 

\begin{defi}
\label{Pair}
An atom $x\in \phi$ is called single iff it is not in MNNR (does not cooperate) with any other atom in $\phi$. That is, if for every $y\in \phi $ such that $x \stackrel{\phi}{\rightarrow} y$, then $y \stackrel{\phi}{\not \rightarrow} x$.
\end{defi}

\begin{figure}[htbp] 
\centering
\subfigure[]{\includegraphics[trim = 0mm 0mm 0mm 0mm, clip,width=0.3\textwidth]{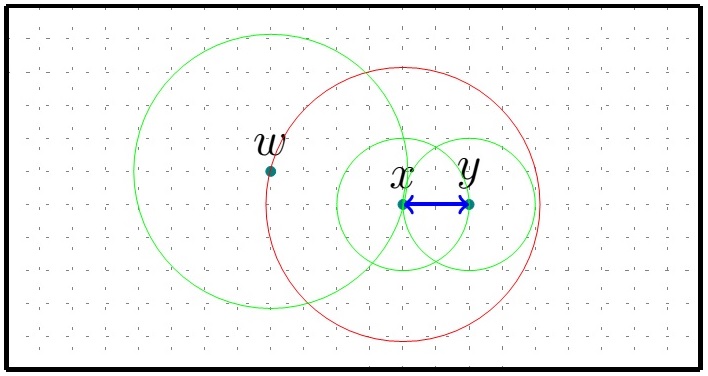}}
\subfigure[]{\includegraphics[trim = 20mm 100mm 20mm 100mm, clip,width=0.45\textwidth]{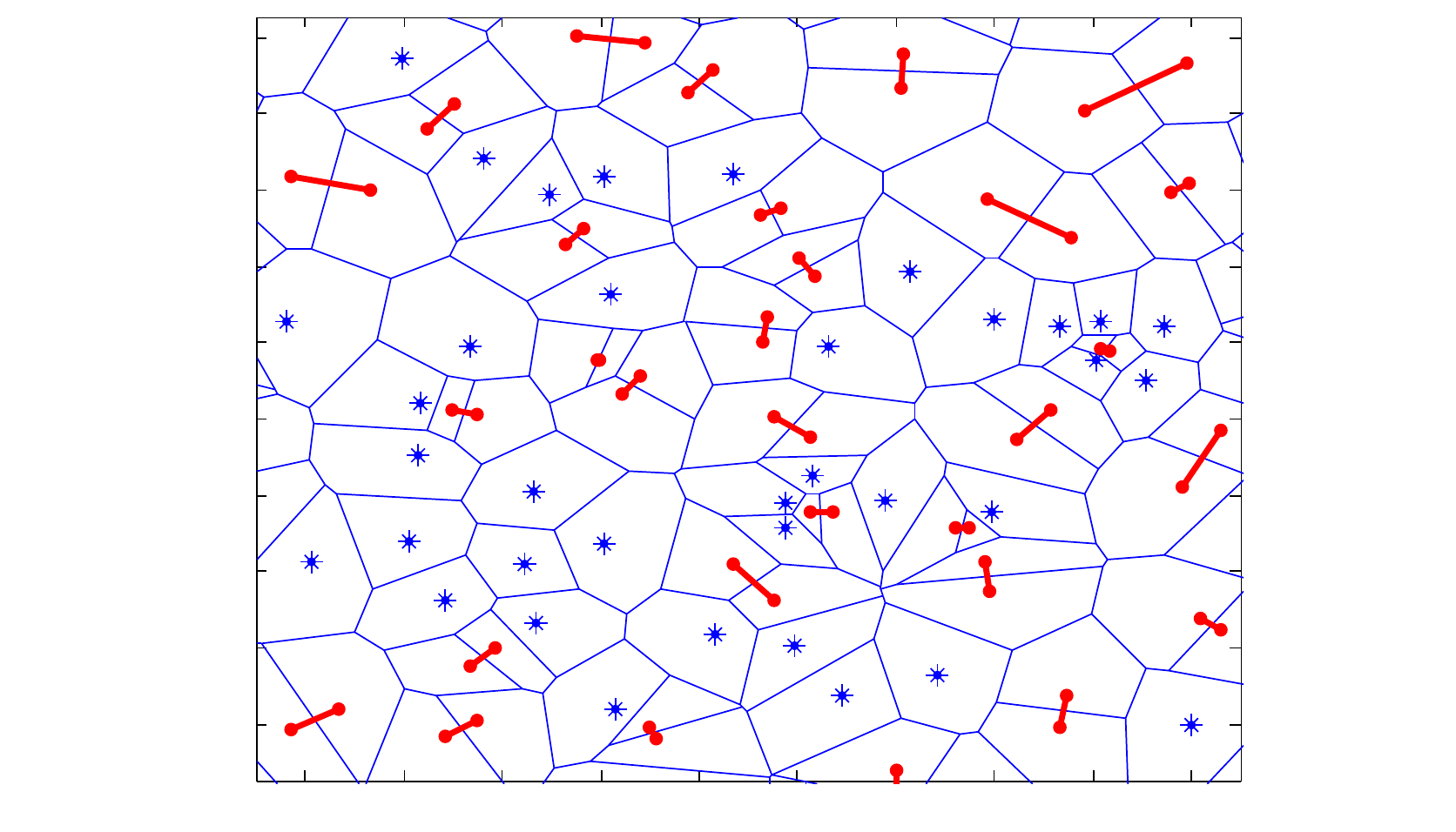}} 
\caption{(a)\ The atoms $x$ and $y$ are mutually nearest neighbors, so, they work in pair. The atom $x$ is the nearest neighbor of $w$, but $w$ is not the closest atom to $x$, thus $w$ is single. (b) A Poisson realisation with its corresponding Voronoi diagram. The asterisks are the single BSs, the connected dots are the cooperating pairs. }   \label{Groups}
\end{figure}
For $x,y\in \phi$ fixed, denote the area
\begin{equation*}
C(x,y):=B(x,\|x-y\|)\cup B(y,\|x-y\|).
\end{equation*}
In geometric terms, the relation $x \stackrel{\phi}{\rightarrow} y$ holds iff the disc $B(x,\|x-y\|)$ is empty of atoms from $\phi$. Consequently, the relation $x \stackrel{\phi}{\leftrightarrow} y$ holds iff, there are no atoms from $\phi$ inside $C(x,y)$. The Euclidean surface of $C(x,y)$ is $\pi\|x-y\|^2(2-\gamma)$, where $\gamma:=\frac{2}{3}-\frac{\sqrt{3}}{2\pi}$ is a constant number equal to the surface, divided by $\pi$, of the intersection of two discs with unit radius and centres lying on the circumference of each other. An illustration of the above explanations is given in Figure \ref{Groups}. 

When $\Phi=\{ \phi \}$ is a PPP, we have an expression of its empty space function. With this in mind, along with the above argument, it is possible to give a closed form to the probability of two given atoms being in pair. 

\begin{lem} \label{Lemma1}
Given a PPP $\Phi$, with density $\lambda>0$, for two different and fixed atoms $x,y\in \mathbb{R}^2$, 
\begin{equation*}
\mathbb{P}(x \stackrel{\Phi}{\leftrightarrow} y)=e^{-\lambda \pi\|x-y\|^2(2-\gamma)}.
\end{equation*}
\end{lem}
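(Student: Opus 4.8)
The plan is to reduce the probabilistic statement to a single void-probability computation for the PPP, after properly interpreting what it means for $x$ and $y$ to be fixed atoms. First I would invoke the geometric characterization established immediately before the lemma: $x \stackrel{\Phi}{\rightarrow} y$ holds iff the open disc $B(x,\|x-y\|)$ contains no atom of $\Phi$, and symmetrically for $y \stackrel{\Phi}{\rightarrow} x$. The relation $x \stackrel{\Phi}{\leftrightarrow} y$ is the conjunction of these two events, hence it is equivalent to the region $C(x,y)=B(x,\|x-y\|)\cup B(y,\|x-y\|)$ being empty of atoms of $\Phi$ other than $x$ and $y$ themselves. (For a PPP the nearest neighbor is a.s. unique, so the defining $\argmin$ is well posed.)

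Second I would address the conditioning, since almost surely a PPP places no atom at a prescribed deterministic location, so $\mathbb{P}(x \stackrel{\Phi}{\leftrightarrow} y)$ must be understood under the reduced Palm distribution given atoms at $x$ and $y$. Here I would appeal to Slivnyak's theorem: conditionally on $\Phi$ containing atoms at the two fixed points $x$ and $y$, the remaining configuration is distributed as an independent copy of the original PPP of the same intensity $\lambda$. Consequently the MNNR probability equals the void probability that this fresh PPP drops no atom inside $C(x,y)$.

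Third I would apply the empty space (void) function of a homogeneous PPP, namely that the probability of finding no atom in a bounded Borel set $A$ is $e^{-\lambda \mathcal{S}(A)}$. Taking $A=C(x,y)$ and substituting the already-derived surface $\mathcal{S}(C(x,y))=\pi\|x-y\|^2(2-\gamma)$, obtained by inclusion-exclusion of the two discs of radius $\|x-y\|$ whose centres lie on each other's circumference, yields directly the claimed expression $e^{-\lambda \pi\|x-y\|^2(2-\gamma)}$.

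The hard part will be the conditioning step rather than any calculation: one must make rigorous the notion of "two fixed atoms" for a process whose support avoids every deterministic point almost surely. This is exactly where Slivnyak's theorem is indispensable, because it lets us decouple the deterministic pair $\{x,y\}$ from a fresh PPP, so that the emptiness of $C(x,y)$ reduces to an ordinary void event. Once that reduction is in place, the void-probability formula and the precomputed area make the remainder immediate.
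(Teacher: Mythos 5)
Your proof is correct and follows essentially the same route as the paper, which justifies the lemma by the geometric characterization that $x \stackrel{\Phi}{\leftrightarrow} y$ holds iff $C(x,y)$ is empty of other atoms, the precomputed area $\pi\|x-y\|^2(2-\gamma)$, and the PPP void probability. Your explicit treatment of the two-point Palm conditioning via Slivnyak is precisely the step the authors acknowledge in their remark after Theorem~\ref{Percentage} that they "make use of the Palm measures of the process, but avoid its notation for ease of presentation," so you have simply made rigorous what the paper leaves implicit.
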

For a stationary point process $\Phi$, define two new point processes $\Phi^{(1)}$ and $\Phi^{(2)}$, that result from the dependent thinning defined above:
\begin{equation*}
\begin{split}
\Phi^{(1)}&:=\{ x\in \Phi \ \& \ x \mbox{ is single} \}, \\
\Phi^{(2)}&:=\{ x\in \Phi \ \& \ x \mbox{ cooperates with another element of } \Phi \}. 
\end{split}
\end{equation*}
Both processes are \textit{stationary}. This is due to the stationarity of $\Phi$ and because, by definition, they depend only on the distance between the elements of $\Phi$. From the previous Lemma, Slivniak's Theorem and Campbell-Little-Mecke formula \cite{BacBlaVol1} we have the following result.

\begin{theo}
\label{Percentage}
Given a PPP $\Phi$, with density $\lambda>0$, for every fixed atom $x\in \mathbb{R}^2$, there exists a constant $\delta>0$, independent of $\lambda$ and $x$, such that 
\begin{equation*}
\begin{split}
\mathbb{P} \left( x \in\Phi^{(2)} \right)=\delta, \ \ \mathbb{P} \left( x\in\Phi^{(1)} \right)=1-\delta.
\end{split}
\end{equation*} 
Specifically, $\delta=\frac{1}{2-\gamma} \approx 0.6215$.
\end{theo}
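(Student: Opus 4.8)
The plan is to interpret the quantity $\mathbb{P}(x \in \Phi^{(2)})$ correctly and then reduce it, via Palm calculus, to a single planar integral that is essentially computed already by Lemma \ref{Lemma1}. Since $x$ is a fixed point of the plane and $\Phi$ is a PPP, almost surely $x \notin \Phi$; hence the statement has to be read under the Palm distribution of $\Phi$ at $x$. By Slivnyak's Theorem, conditioning on $x \in \Phi$ amounts to adding the deterministic atom $x$ to an independent copy of $\Phi$, so $\mathbb{P}(x \in \Phi^{(2)})$ is the probability that $x$, viewed as an atom of $\Phi \cup \{x\}$, is in MNNR with some other atom.

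First I would rewrite the event ``$x$ cooperates'' as a sum. Because each atom has a unique nearest neighbor almost surely, $x$ can be in MNNR with at most one atom of $\Phi \cup \{x\}$; therefore the indicator that $x$ cooperates equals $\sum_{y \in \Phi} \mathbbm{1}(x \stackrel{\Phi \cup \{x\}}{\leftrightarrow} y)$, and taking expectations yields
\begin{equation*}
\delta = \mathbb{E}\Big[\, \sum_{y \in \Phi} \mathbbm{1}\big(x \stackrel{\Phi \cup \{x\}}{\leftrightarrow} y\big)\Big].
\end{equation*}
Next I would apply the Campbell--Little--Mecke formula to turn the sum over $\Phi$ into an integral, using Slivnyak once more to replace the reduced Palm distribution of the PPP by $\Phi$ itself. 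This gives
\begin{equation*}
\delta = \lambda \int_{\mathbb{R}^2} \mathbb{P}\big(x \stackrel{\Phi \cup \{x,y\}}{\leftrightarrow} y\big)\, dy .
\end{equation*}

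The integrand is exactly the probability that $x$ and $y$ are mutual nearest neighbors, i.e.\ that $C(x,y)$ contains no atom of $\Phi$; by Lemma \ref{Lemma1} this equals $e^{-\lambda \pi \|x-y\|^2(2-\gamma)}$. Passing to polar coordinates centered at $x$ with $r = \|x-y\|$ turns the integral into $\lambda \int_0^\infty 2\pi r\, e^{-\lambda \pi (2-\gamma) r^2}\, dr$, and the substitution $u = \pi r^2$ collapses this to $\lambda/\big(\lambda(2-\gamma)\big) = 1/(2-\gamma)$; note that both the density $\lambda$ and the location $x$ cancel, as the statement asserts. Finally, since every atom of $\Phi$ is either single or in a pair, the two events partition $\Phi$, whence $\mathbb{P}(x \in \Phi^{(1)}) = 1 - \delta$, and the numerical value $\delta \approx 0.6215$ follows by substituting $\gamma = \tfrac{2}{3} - \tfrac{\sqrt{3}}{2\pi}$.

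I expect the only genuine subtlety to be the bookkeeping of the two Palm reductions: producing the correct configuration $\Phi \cup \{x,y\}$ inside the void probability, and justifying that the reduced Palm distribution of the PPP is again the PPP (Slivnyak) so that no spurious density factors survive. Once the integrand is identified with the expression of Lemma \ref{Lemma1}, the remaining step is the evaluation of a Gaussian-type radial integral, which is routine and is precisely where the dependence on $\lambda$ disappears.
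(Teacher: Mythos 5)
Your proposal is correct and follows essentially the same route as the paper's proof: the same rewriting of the cooperation event as $\sum_{y\in\Phi}\mathbf{1}_{\{x \stackrel{\Phi}{\leftrightarrow} y\}}$ via almost-sure uniqueness of the nearest neighbor, followed by Campbell--Little--Mecke, Lemma \ref{Lemma1}, and the same radial Gaussian integral yielding $1/(2-\gamma)$. The only difference is that you make the Palm/Slivnyak bookkeeping explicit, which the paper deliberately suppresses (see the remark following the theorem) without substantive change.
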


\begin{proof}
By definition of $\Phi^{(2)}$,
\begin{equation*}
\begin{split}
\mathbb{P}\left( x \in \Phi^{(2)}\right) & = \mathbb{P}\left( x \stackrel{\Phi}{\leftrightarrow} y, \mbox{ for some } y\in \Phi \backslash \{x\} \right) \\
& = \mathbb{E}\left( \textbf{1}_{\left\{ x \stackrel{\Phi}{\leftrightarrow} y, \mbox{ for some } y\in \Phi \backslash \{x\} \right\}} \right) \\
& \stackrel{(a)}{=} \mathbb{E} \left( \sum_{y\in \Phi } \textbf{1}_{ \left\{ x \stackrel{\Phi}{\leftrightarrow} y \right\} } \right),
\end{split}
\end{equation*}
where $(a)$ holds because for PPPs the nearest neighbor is a.s. unique. Using Campbell-Little-Mecke formula \cite{BacBlaVol1} for a PPP, 
\begin{equation*}
\begin{split}
\mathbb{E} \Bigg( \sum_{y\in \Phi } \textbf{1}_{ \left\{ x \stackrel{\Phi}{\leftrightarrow} y \right\} } \Bigg) & = \int_{\mathbb{R}^2}\mathbb{P} \left( x \stackrel{\Phi}{\leftrightarrow} y \right) \lambda dy \\
& \stackrel{(b)}{=} \lambda \int_{\mathbb{R}^2} e^{-\lambda \pi \|x-y \|^2}(2-\gamma) dy \\
& = \frac{1}{2-\gamma},
\end{split}
\end{equation*}
where $(b)$ follows from Lemma \ref{Lemma1}.
\end{proof}

\textbf{Remark:} In Lemma \ref{Lemma1} and Theorem \ref{Percentage} we actually make use of the Palm measures of the process, but avoid its notation for ease of presentation, without substantial difference.   

The constant $\delta$ is crucial within this work. The above Theorem states that, given the position of a BS (in a PPP), its probability of being in a cooperation pair is $\delta\approx 0.6215$, otherwise, its probability of being single is $1-\delta\approx 0.3785$, irrespective of the value of the density $\lambda>0$. Since we are fixing the atom location, this result should be interpreted from a local point of view. Nevertheless, in Section \ref{SecIII} we will prove that, for a given density of the PPP $\lambda>0$, the intensities of $\Phi^{(1)}$ and $\Phi^{(2)}$ are actually $(1-\delta)\lambda$ and $\delta \lambda$, respectively. The former can be interpreted from a global point of view: over any planar area in $\mathbb{R}^2$, in average, $37.85\%$ of atoms are singles and $62.15\%$ belong to a cooperative pair. 

When $\Phi$ is a PPP, it is natural to wonder if $\Phi^{(1)}$ and $\Phi^{(2)}$ are also PPPs. As a matter of fact, they are not (we could have expected this, since they were defined by a strongly dependent thinning). Suppose that $\Phi^{(2)}$ is actually a PPP. As shown in Theorem \ref{Percentage}, for every atom in  $\Phi^{(2)}$, there is a positive probability of this point not being in MNNR with another point of $\Phi^{(2)}$. However, by definition, all the elements of $\Phi^{(2)}$ are in MNNR with another element of $\Phi^{(2)}$, which is a contradiction. We conclude that the process $\Phi^{(2)}$ is not a PPP. For $\Phi^{(1)}$ the argumentation is not as simple. We can show using the Kolmogorov-Smirnov test \cite{Sheskin07} that the number of $\Phi^{(1)}$ atoms within a finite window is not Poisson distributed. Moreover, Monte Carlo simulations estimate that the average proportion of single atoms from $\Phi^{(1)}$ is far from the $37.85\%$.

We can show that the percentages in Theorem \ref{Percentage} are not valid just for PPPs. Take the hexagonal grid model. This is commonly used by industry related research teams to model the BS positions, and then evaluate a system deployment and performance via Monte Carlo methods. The hexagonal grid's centers should represent the BS locations. This is an ideal scenario (the BSs are never that regular). We introduce another point process, based on the hexagonal grid, that actually allows for randomnes of the BS positions. Starting from the grid placement, let the position of each BS be randomly perturbed, independently of the others. For example, consider as BS location the point whose polar coordinates around each hexagon's center follow two uniform random variables (r.v.s), one angular over $[0,2\pi)$ and the radial one over $[0,Q]$ (see Figure \ref{HexGridPerc}). Figure \ref{HexGridPerc} shows how the average percentage of singles and pairs for the hexagonal grid model changes when varying the parameter $Q>0$. Remark that these numbers are very close to the respective average percentages we found when $\Phi$ is a PPP.    

\begin{figure}[htbp] 
\centering
\subfigure[]{\includegraphics[trim = 35mm 70mm 20mm 70mm, clip,width=0.23\textwidth]{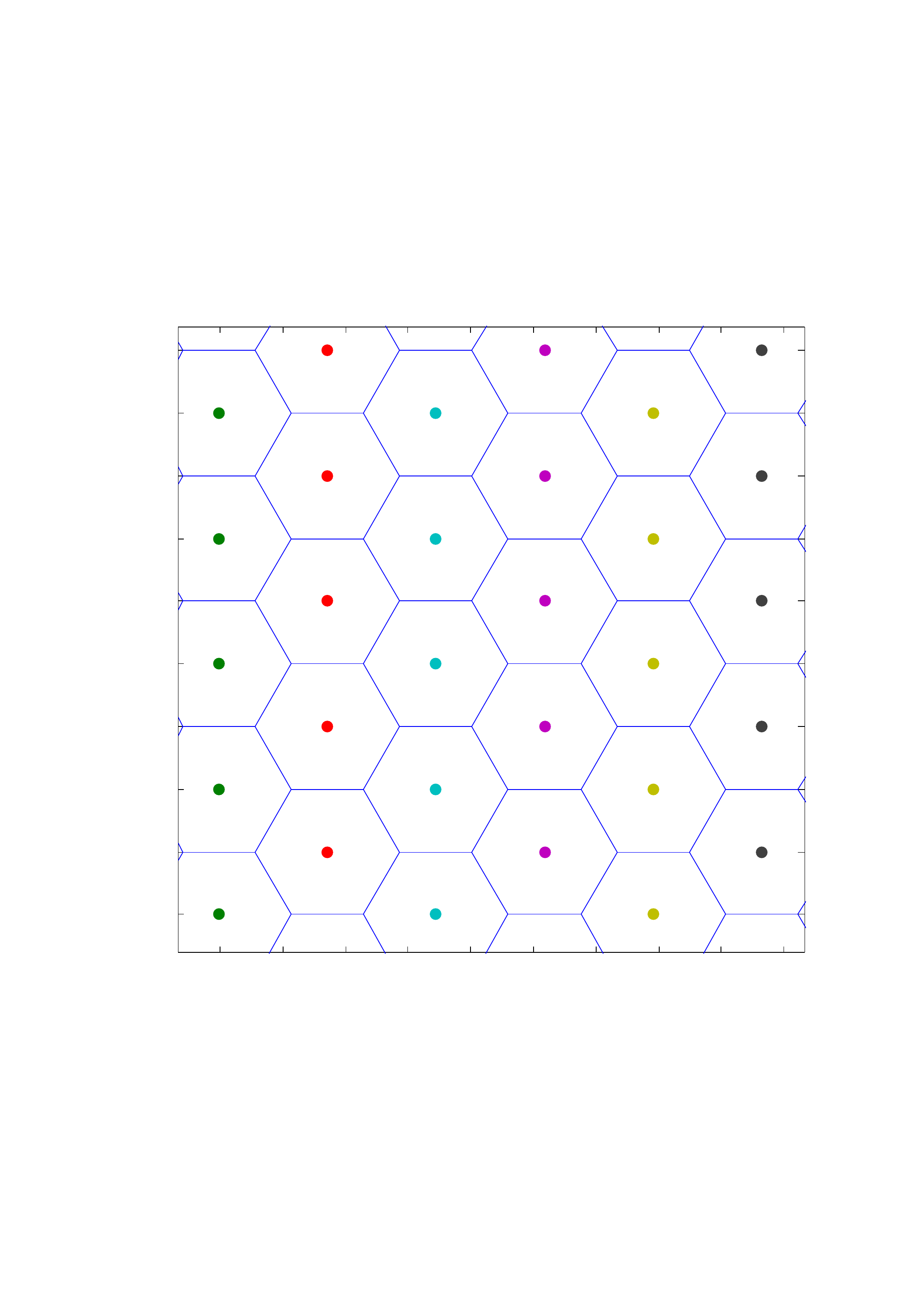}}
\subfigure[]{\includegraphics[trim = 35mm 70mm 20mm 70mm, clip,width=0.23\textwidth]{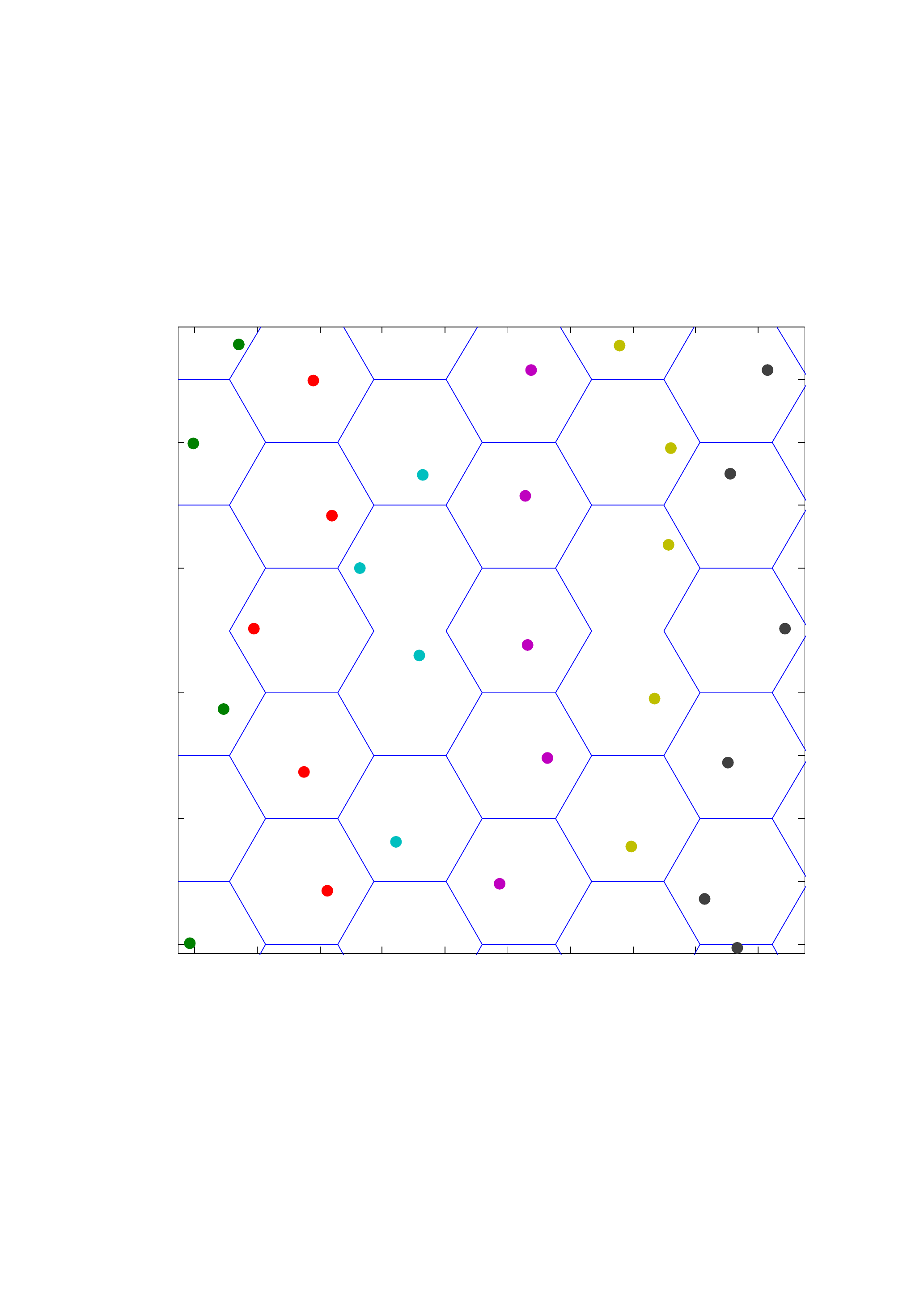}} 
\subfigure[]{\includegraphics[trim = 35mm 100mm 40mm 100mm, clip,width=0.23\textwidth]{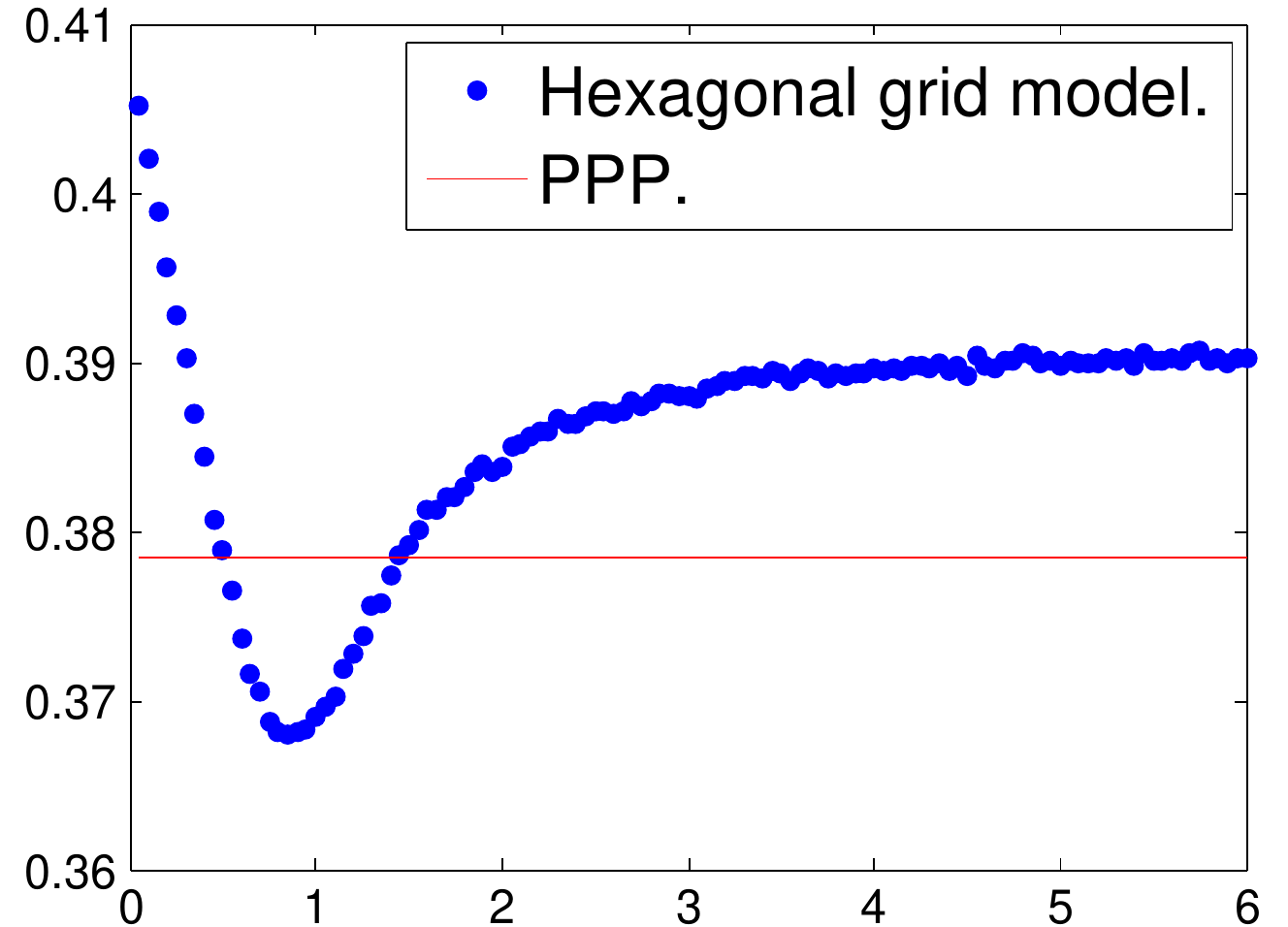}}
\subfigure[]{\includegraphics[trim = 35mm 100mm 40mm 100mm, clip,width=0.23\textwidth]{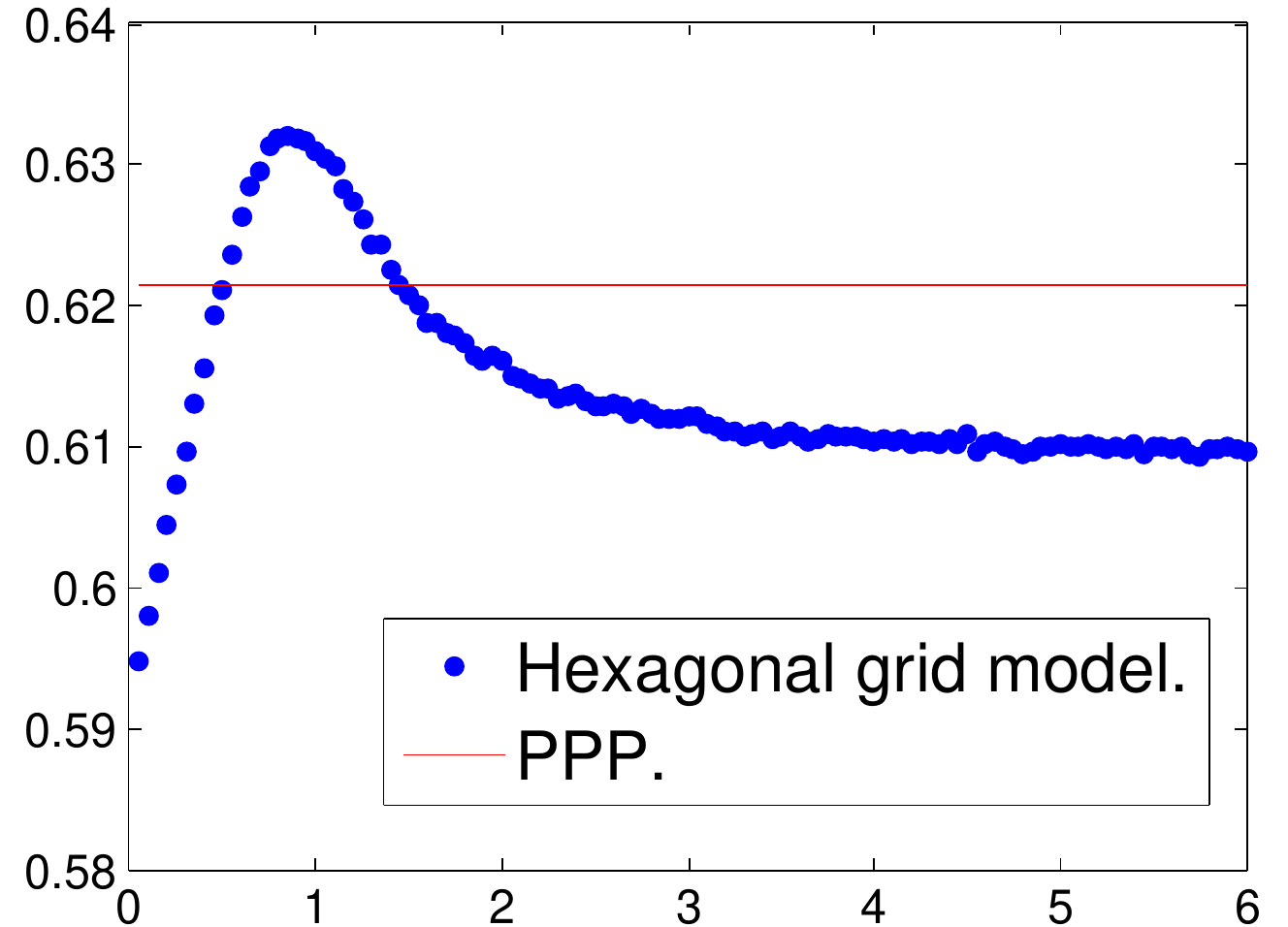}} 
\caption{(a) The hexagonal grid model, without perturbation. (b) The hexagonal grid model, with the centers being perturbed via a random experiment. (c) The average percentage of single atoms for the hexagonal grid model. (d) The average percentage of atoms in cooperative pairs.} 
\label{HexGridPerc}
\end{figure}

\subsection{Palm Probabilities}
\label{SecIIC}

We can interpret the Palm probability of a stationary point process as the conditional probability, given that the process has a point inside an infinitesimal neighborhood around a fixed atom \cite{BacBlaVol1}. Denote by $\mathbb{P}^{0}$, $\mathbb{P}^{(1),0}$, and $\mathbb{P}^{(2),0}$ the Palm probabilities of the stationary point processes $\Phi$, $\Phi^{(1)}$, and $\Phi^{(2)}$, respectively. Let
\begin{equation*}
\begin{split}
A_0 := \{\Phi \in \mathcal{A}_0 \},\ \ \ B_0 := \{\Phi \in \mathcal{B}_0  \},
\end{split}
\end{equation*}
be two events, where 
\begin{equation*}
\begin{split}
\mathcal{A}_0 & := \{\phi \ | \ 0\in \phi \mbox{ and $0$ is single  } \},\\
\mathcal{B}_0 & := \{\phi \ | \ 0\in \phi \mbox{ and $0$ cooperates with another atom of $\phi$ } \}.
\end{split}
\end{equation*}
We have the following result \cite[pp. 35, Ex. 142]{EleQueTh2003}.

\begin{theo}
\label{Palm}
Let $\Phi$ be a stationary point process such that $\mathbb{P}^{0}(A_0)>0$ and $\mathbb{P}^{0}(B_0)>0$. Therefore, for every $C \in \Omega$,
\begin{equation*}
\begin{split}
\mathbb{P}^{(1),0}(C) =\mathbb{P}^{0}(C|A_0), \ \  \mathbb{P}^{(2),0}(C) =\mathbb{P}^{0}(C|B_0).
\end{split}
\end{equation*}
\end{theo}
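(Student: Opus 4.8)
The plan is to derive both identities from the Mecke (local) characterisation of the Palm probability together with the observation that the MNNR thinning is shift-covariant. I work in the framework of \cite{EleQueTh2003}, with a measurable flow $\{\theta_x\}_{x\in\mathbb{R}^2}$ on $(\Omega,\mathcal{F})$ under which $\mathbb{P}$ is invariant and with which $\Phi$ is compatible; $\theta_x$ translates a configuration so that $x$ is sent to the origin. Write $\lambda,\lambda_1,\lambda_2$ for the (positive, finite) intensities of $\Phi,\Phi^{(1)},\Phi^{(2)}$. For any bounded Borel set $B$ with $\mathcal{S}(B)>0$ and any stationary, flow-compatible point process of intensity $\mu$, its Palm probability satisfies the Mecke formula $\mathbb{P}^{0}(C)=\frac{1}{\mu\,\mathcal{S}(B)}\,\mathbb{E}\big[\sum_{x}\textbf{1}_C\circ\theta_x\big]$, the inner sum ranging over the points of the process inside $B$ and the value being independent of $B$. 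This is exactly the Campbell--Mecke formula already used in the proof of Theorem~\ref{Percentage}, now read as a definition of $\mathbb{P}^0$.

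First I would treat $\Phi^{(1)}$. Since $\Phi^{(1)}\cap B=\{x\in\Phi\cap B:\,x\text{ single}\}$, applying the Mecke formula to $\Phi^{(1)}$ gives
\begin{equation*}
\mathbb{P}^{(1),0}(C)=\frac{1}{\lambda_1\,\mathcal{S}(B)}\,\mathbb{E}\Bigg[\sum_{x\in\Phi\cap B}\textbf{1}_{\{x\text{ single}\}}\,\big(\textbf{1}_{C}\circ\theta_x\big)\Bigg].
\end{equation*}
The decisive step is the shift-covariance identity $\textbf{1}_{\{x\text{ single}\}}=\textbf{1}_{A_0}\circ\theta_x$ for $x\in\Phi$: after translating $x$ to the origin one has $0\in\theta_x\Phi$, and the property that $x$ is single becomes the property that $0$ is single, which is precisely the event $A_0$. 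Hence the summand equals $\textbf{1}_{A_0\cap C}\circ\theta_x$, and applying the Mecke formula this time to $\Phi$ yields $\mathbb{E}\big[\sum_{x\in\Phi\cap B}\textbf{1}_{A_0\cap C}\circ\theta_x\big]=\lambda\,\mathcal{S}(B)\,\mathbb{P}^0(A_0\cap C)$. Taking $C=\Omega$ identifies the intensity, $\lambda_1=\lambda\,\mathbb{P}^0(A_0)$, which is positive by hypothesis, so $\mathbb{P}^{(1),0}$ is well defined; substituting back,
\begin{equation*}
\mathbb{P}^{(1),0}(C)=\frac{\lambda\,\mathbb{P}^0(A_0\cap C)}{\lambda_1}=\frac{\mathbb{P}^0(A_0\cap C)}{\mathbb{P}^0(A_0)}=\mathbb{P}^0(C\mid A_0).
\end{equation*}
The computation for $\Phi^{(2)}$ is word-for-word the same, with $A_0$ replaced by $B_0$ and the single predicate replaced by the cooperating predicate, using $\mathbb{P}^0(B_0)>0$.

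The only step requiring genuine care is the shift-covariance identity, and this is where I expect the real content to lie: I must verify that the predicate that $x$ is single (equivalently, that $x$ cooperates) is a measurable functional of the configuration that commutes with translations. This follows from the geometric description given before Lemma~\ref{Lemma1} — $x\stackrel{\phi}{\leftrightarrow}y$ holds iff $C(x,y)$ is empty of atoms, and whether $x$ is single depends only on the Euclidean distances between $x$ and the remaining atoms — all of which are invariant under a common translation of $\phi$ and $x$. Everything else is routine bookkeeping with the Campbell--Mecke formula. As a by-product the argument also records the intensities $\lambda_1=\lambda\,\mathbb{P}^0(A_0)$ and $\lambda_2=\lambda\,\mathbb{P}^0(B_0)$, in agreement with the global proportions announced after Theorem~\ref{Percentage}.
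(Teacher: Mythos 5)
Your proof is correct. The paper does not actually prove Theorem~\ref{Palm}; it cites it as an exercise from \cite{EleQueTh2003}, and your argument --- the Matthes/Campbell--Mecke definition of the Palm probability applied to the thinned process, combined with the shift-covariance identity $\textbf{1}_{\{x\ \mathrm{single}\}}=\textbf{1}_{A_0}\circ\theta_x$, which is exactly where the content lies --- is the standard derivation of that cited result, and as a by-product correctly recovers the intensities $\lambda_1=\lambda\,\mathbb{P}^0(A_0)$ and $\lambda_2=\lambda\,\mathbb{P}^0(B_0)$ consistent with Corollary~\ref{expected}.
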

When $\Phi$ is a PPP, $\mathbb{P}^{0}(A_0)=1-\delta>0$ and $\mathbb{P}^{0}(B_0)=\delta>0$ (Theorem \ref{Percentage}). Then, for every $C\in \mathcal{F}$,
\begin{equation}
\label{PalmPoisson}
\begin{split}
\mathbb{P}^{(1),0}(C) = \frac{\mathbb{P}^{0}(C,A_0)}{1-\delta}, \ \ \mathbb{P}^{(2),0}(C) = \frac{\mathbb{P}^{0}(C,B_0)}{\delta}.
\end{split}
\end{equation}

\subsection{The NN function of $\Phi^{(2)}$}

The \textit{Nearest Neighbor function (NN)}, commonly denoted by $G$, is the cumulative distribution function (CDF) of the distance from a typical atom of the process to its nearest neighboring point \cite{BaddNotes07}. 
Denote by $G^{(2)}(r)$ the NN function of $\Phi^{(2)}$, then, 
\begin{eqnarray}
G^{(2)}(r) = \mathbb{P}^{(2),0}(d(0,\Phi^{(2)}\setminus \{0\})\leq r)\nonumber,
\end{eqnarray}   
for every $r>0$. Applying equation \eqref{PalmPoisson} to the above expression, we have the following result.

\begin{theo}
\label{TheoNN2}
For a PPP $\Phi$, the NN function of $\Phi^{(2)}$ is
\begin{eqnarray}
\label{NNPhi2}
G^{(2)}(r)  = & 1-e^{-\lambda\pi r^2 (2-\gamma)},
\end{eqnarray}   
where $\gamma$ is the same constant as in Lemma \ref{Lemma1}.
\end{theo}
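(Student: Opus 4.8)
The plan is to first isolate the key geometric fact that makes this tractable: for a typical atom of $\Phi^{(2)}$ placed at the origin, its nearest neighbor \emph{within} $\Phi^{(2)}$ is exactly its cooperating partner. Indeed, if the origin belongs to $\Phi^{(2)}$ then, by Definition \ref{Pair}, it is in MNNR with some atom $y\in\Phi$, so $y$ is the nearest neighbor of the origin within the whole process $\Phi$. The partner $y$ necessarily also lies in $\Phi^{(2)}$ (it cooperates with the origin), and since $\Phi^{(2)}\subseteq\Phi$, no atom of $\Phi^{(2)}\setminus\{0\}$ can be strictly closer to the origin than $y$. Invoking the a.s. uniqueness of the nearest neighbor, this gives the identity $d(0,\Phi^{(2)}\setminus\{0\})=\|y\|$, the distance to the cooperating partner.

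With this identity, I would work under the Palm measure $\mathbb{P}^{(2),0}$, where the origin always belongs to $\Phi^{(2)}$ and therefore always cooperates. The event $\{d(0,\Phi^{(2)}\setminus\{0\})\leq r\}$ then coincides with the event that the origin is in MNNR with some atom lying in $B(0,r)$. Passing from the Palm measure of $\Phi^{(2)}$ to the underlying measure via \eqref{PalmPoisson}, one obtains
\[
G^{(2)}(r)=\frac{1}{\delta}\,\mathbb{P}^{0}\!\left(\,0 \stackrel{\Phi}{\leftrightarrow} y \ \text{for some } y\in\Phi\cap B(0,r)\,\right).
\]
Because the MNNR partner is a.s. unique, the indicator of this event equals a sum of indicators over $y\in\Phi$ (restricted to $B(0,r)$), which lets me apply Slivnyak's Theorem and the Campbell--Little--Mecke formula precisely as in the proof of Theorem \ref{Percentage}, reducing the right-hand side to $\delta^{-1}\int_{B(0,r)}\mathbb{P}(0\stackrel{\Phi}{\leftrightarrow} y)\,\lambda\,dy$.

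The remaining computation is then the bounded-domain analogue of the integral already evaluated in Theorem \ref{Percentage}: the same integrand, now integrated over the disc $B(0,r)$ instead of all of $\mathbb{R}^2$. Substituting the closed form of Lemma \ref{Lemma1} and switching to polar coordinates leaves the elementary radial integral $\int_0^r e^{-\lambda\pi\rho^2(2-\gamma)}\rho\,d\rho$, which evaluates to $\delta\,[\,1-e^{-\lambda\pi r^2(2-\gamma)}\,]$; the prefactor $\delta$ cancels and yields \eqref{NNPhi2}. I expect the only genuinely delicate step to be the geometric one, namely verifying simultaneously that the cooperating partner both realizes the nearest-neighbor distance inside $\Phi^{(2)}$ and is itself a point of $\Phi^{(2)}$ — once that is secured, the probabilistic reduction and the integral are routine adaptations of the earlier argument.
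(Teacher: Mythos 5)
Your proposal is correct and follows essentially the same route as the paper's proof: the same key identity $d(0,\Phi^{(2)}\setminus\{0\})=d(0,\Phi\setminus\{0\})=\|y\|$ for the cooperating partner $y$, the same passage from $\mathbb{P}^{(2),0}$ to $\mathbb{P}^{0}$ via \eqref{PalmPoisson}, the same decomposition into a sum of indicators over the a.s.\ disjoint events $\{0\stackrel{\Phi}{\leftrightarrow}y\}$, and the same reduction via Slivnyak and Campbell--Little--Mecke to $\delta^{-1}\lambda\int_{B(0,r)}e^{-\lambda\pi\|y\|^2(2-\gamma)}\,dy$ evaluated in polar coordinates. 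No gaps.
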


\begin{proof}
Remark that under $\mathbb{P}^{0}$
\begin{equation}
\label{disjointU}
A_0 = \bigcup_{y\in \Phi\backslash \{0\}}\{ 0 \stackrel{\Phi}{\leftrightarrow} y\}
\end{equation}
where $A_0$ is the event from Theorem \ref{Palm}, and this union is mutually disjoint. For $r>0$ fixed,
\begin{equation*}
\begin{split}
G^{(2)}(r) & = \mathbb{P}^{(2),0}\left( d(0,\Phi^{(2)}\backslash \{0\})\leq r \right) \\
& \stackrel{(a)}{=} \mathbb{P}^{0}\left( d(0,\Phi^{(2)}\backslash \{0\})\leq r , A_0 \right) \frac{1}{\delta} \\
& \stackrel{(b)}{=} \mathbb{E}^{0}\left( \sum_{y\in \Phi \backslash \{0\}} \textbf{1}_{\{ d(0,\Phi^{(2)}\backslash \{0\})\leq r , 0 \stackrel{\Phi}{\leftrightarrow} y \} }  \right) \frac{1}{\delta}, \\
& \stackrel{(c)}{=} \mathbb{E}\left( \sum_{y\in \Phi } \textbf{1}_{\{ d(0,\Phi^{(2)}\backslash \{0\})\leq r , 0 \stackrel{\Phi}{\leftrightarrow} y \} }  \right) \frac{1}{\delta},
\end{split}
\end{equation*}
where $(a)$ follows from equation \eqref{PalmPoisson}, $(b)$ after equation \eqref{disjointU}, and $(c)$ from Slivkyak-Mecke's Theorem. Observe that, if there is some $y\in \Phi $ being the mutually nearest neighbor of the atom $0$, that is $0 \stackrel{\Phi}{\leftrightarrow} y$, then,  
\begin{equation*}
d(0,\Phi^{(2)}\backslash \{0\}) = d(0,\Phi \backslash \{0\})=\|y\| \ \ a.s.
\end{equation*}
Using this, Campbell-Little-Mecke formula and Lemma \ref{Lemma1}, 
\begin{equation*}
\begin{split}
\mathbb{E}\Bigg( \sum_{y\in \Phi } & \textbf{1}_{\{ d(0,\Phi^{(2)}\backslash \{0\})\leq r , 0 \leftrightarrow y \} }  \Bigg) \frac{1}{\delta} \\
& = \mathbb{E}\Bigg( \sum_{y\in \Phi } \textbf{1}_{\{ \|y\|\leq r , 0 \stackrel{\Phi}{\leftrightarrow} y \} }  \Bigg) \frac{1}{\delta} \\
& = \mathbb{E}\Bigg( \sum_{y\in \Phi } \textbf{1}_{\{ \|y\|\leq r\} }\textbf{1}_{\{ 0 \stackrel{\Phi}{\leftrightarrow} y \} }  \Bigg) \frac{1}{\delta} \\
& = \int_{\mathbb{R}^2}  \mathbb{E}\Big( \textbf{1}_{\{ \|y\|\leq r\} }\textbf{1}_{\{ 0 \stackrel{\Phi}{\leftrightarrow} y \} }  \Big) \lambda dy \frac{1}{\delta} \\
& = \int_{\mathbb{R}^2} \mathbf{1}_{\{ \|y\|\leq r\}}  \mathbb{E}\Big( \textbf{1}_{\{ 0 \stackrel{\Phi}{\leftrightarrow} y \} }  \Big) \lambda dy \frac{1}{\delta} \\
& = \frac{\lambda}{\delta} \int_{\{ \|y\|\leq r\}}  \mathbb{P}\big( 0 \stackrel{\Phi}{\leftrightarrow} y \big)  dy  \\
& = \frac{\lambda}{\delta} \int_{\{ \|y\|\leq r\}}  e^{-\lambda \pi \|y\|^2(2-\gamma)}  dy  \\
& \stackrel{(d)}{=} \frac{\lambda 2\pi}{\delta} \int^r_0  e^{-\lambda \pi s^2(2-\gamma)}s ds   \\
& = 1-e^{-\lambda \pi r^2(2-\gamma)},
\end{split}
\end{equation*}
where $(d)$ follows from the change of variable to polar coordinates.
\end{proof}

The last Theorem simply states that, in the PPP case, the distance between cooperative atoms is Rayleigh distributed, with scale parameter $\alpha:=(2\lambda\pi(2-\gamma))^{-1/2}$. 

\subsection{Size of the Voronoi Cells}

It follows naturally to investigate the size of Voronoi cells associated with single atoms or pairs. A Voronoi cell of an atom $x\in \phi$ is defined to be the geometric locus of all planar points $z\in \mathbb{R}^2$ closer to this atom than to any other atom of $\phi$ \cite{CompGeomBook}. In a wireless network the Voronoi cell is important when answering the question 'which user should be associated with which station?'. 

In a stationary framework, we examine the network performance at the Cartesian origin, the \textit{typical user approach}. Let $\{0\curvearrowright \Phi^{(1)}\}$ (resp. $\{0\curvearrowright \Phi^{(2)}\}$) denote the event that the typical user belongs to the Voronoi cell of some atom of $\Phi^{(1)}$ (resp. $\Phi^{(2)}$). For the PPP case we have the following result.

\begin{prop}
\label{VoronoiPerc}
Suppose that $\Phi$ is a PPP, with density $\lambda>0$. There exists a measurable function $F:[0,\infty)\times [0,\infty) \times [0,2\pi) \times [0,2\pi) \longrightarrow [0,\infty)$, such that 
\begin{equation*}
\begin{split}
\mathbb{P}& (0\curvearrowright \Phi^{(2)}) \\
& = \lambda^2 \int^\infty_0 \int^\infty_0 \int^{2\pi}_0 \int^{2\pi}_0 s r e^{-\lambda F(r,s,\theta,\varphi) -\lambda \pi r^2} d\varphi d\theta  dr ds 
\end{split}
\end{equation*}
\end{prop}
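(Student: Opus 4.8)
The plan is to reduce the event $\{0\curvearrowright \Phi^{(2)}\}$ to a sum over ordered pairs of atoms and then apply the second-order Campbell–Mecke formula for the PPP \cite{BacBlaVol1}. The origin lies in the Voronoi cell of some atom of $\Phi^{(2)}$ exactly when the atom of $\Phi$ nearest to $0$ belongs to a cooperating pair. Since for a PPP the nearest neighbour is a.s. unique, that nearest atom $x$ possesses a unique mutual partner $y$, so
\[ \mathbf{1}_{\{0\curvearrowright \Phi^{(2)}\}} = \sum_{x\in\Phi}\sum_{y\in\Phi\setminus\{x\}} \mathbf{1}_{\{x \text{ is the NN of } 0\}}\,\mathbf{1}_{\{x \stackrel{\Phi}{\leftrightarrow} y\}} \qquad \text{a.s.}, \]
the double sum having exactly one nonzero term on the event. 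Taking expectations and invoking the multivariate Mecke formula (the two-point version of Slivnyak's Theorem) turns this into
\[ \mathbb{P}(0\curvearrowright \Phi^{(2)}) = \lambda^2\int_{\mathbb{R}^2}\int_{\mathbb{R}^2} \mathbb{P}\big( x \text{ is the NN of } 0 \ \&\ x \leftrightarrow y \text{ in } \Phi\cup\{x,y\} \big)\, dx\, dy, \]
where $x,y$ are now deterministic points added to $\Phi$.

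Next I would translate the two geometric conditions into a single void event. Recall from Section \ref{SecIIB} that $x\stackrel{\phi}{\leftrightarrow} y$ holds iff $C(x,y)$ contains no atom of $\phi$, while $x$ is the nearest neighbour of $0$ iff $B(0,\|x\|)$ is empty of atoms and $\|x\|<\|y\|$. Hence the conjunction holds for $\Phi\cup\{x,y\}$ iff $\|x\|<\|y\|$ and $\Phi$ avoids the region $B(0,\|x\|)\cup C(x,y)$; note the added points $x,y$ are not atoms of $\Phi$ and therefore do not affect emptiness. The void probability of the PPP then gives
\[ \mathbb{P}\big(\cdots\big) = \mathbf{1}_{\{\|x\|<\|y\|\}}\, e^{-\lambda\, \mathcal{S}(B(0,\|x\|)\cup C(x,y))}. \]

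Finally, writing $x=(r,\theta)$ and $y=(s,\varphi)$ in polar coordinates produces the Jacobian $r\,s$ together with the measures $dr\,d\theta$ and $ds\,d\varphi$, matching the claimed integrand. Splitting the area by inclusion–exclusion, $\mathcal{S}(B(0,r)\cup C(x,y)) = \pi r^2 + \mathcal{S}(C(x,y)\setminus B(0,r))$, I would set
\[ F(r,s,\theta,\varphi) := \mathcal{S}\big(C(x,y)\setminus B(0,r)\big), \qquad x=(r,\theta),\ y=(s,\varphi), \]
which is nonnegative and measurable, so that substitution reproduces the stated expression. The step needing care is the reconciliation of the ordering indicator $\mathbf{1}_{\{r<s\}}$ with the indicator-free display: when $\|y\|<\|x\|$ the atom $x$ cannot be the nearest to $0$, so those configurations contribute zero probability and the effective domain of integration is $\{r<s\}$; one must verify that $B(0,\|x\|)\cup C(x,y)$ is exactly the forbidden region (no double counting of the three emptiness constraints $B(0,r)$, $B(x,\|x-y\|)$, $B(y,\|x-y\|)$) and that the ordering constraint is correctly absorbed. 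Since the Proposition only asserts the existence and measurability of $F$, no closed form for this overlap area is required here.
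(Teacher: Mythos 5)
Your proof is correct and arrives at the same integrand as the paper, but the probabilistic bookkeeping is organized differently. The paper first conditions on the polar coordinates $(R,\Theta)$ of the atom nearest to the origin, using the known joint density $\lambda r e^{-\lambda\pi r^2}$ of $(R,\Theta)$, and only then sums over candidate partners $y$ to express the conditional probability as an integral of the void probability $e^{-\lambda F}$; you instead write the event as a double sum over ordered pairs of atoms and apply the two-point Mecke formula in one stroke, which makes the $\lambda^2$ prefactor and the two Jacobian factors $r$ and $s$ appear mechanically rather than from the density of the nearest-point distance. Both routes hinge on the same geometric identification: the forbidden region is $B(0,\|x\|)\cup C(x,y)$, and $F$ is the area of $C(x,y)\setminus B(0,\|x\|)$, which is exactly the set $D(x,y)$ of Appendix A. One point where you are in fact more careful than the paper: the constraint $\|x\|<\|y\|$ (equivalently $s>r$) is a genuine part of the integrand --- the probability under the Mecke integral vanishes when $y$ is closer to the origin than $x$ --- yet neither the stated formula nor the paper's own intermediate display (which integrates $s$ from $0$ to $\infty$ and, incidentally, drops a factor $\lambda$ that reappears in the final line) records it. Since $e^{-\lambda F}>0$ everywhere, this indicator cannot be absorbed into a nonnegative finite $F$, so the displayed integral should either carry $\mathbf{1}_{\{s>r\}}$ or restrict the $s$-integration to $(r,\infty)$; your closing remark identifies precisely the right fix, and with that amendment your argument is complete.
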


\begin{proof}
See Appendix A, available in the supplemental material.
\end{proof}
Since
\begin{equation}
\mathbb{P}(0\curvearrowright \Phi^{(1)})=1-\mathbb{P}(0\curvearrowright \Phi^{(2)}),
\end{equation} 
we have also an analytic representation for $\mathbb{P}(0\curvearrowright \Phi^{(1)})$. The function $F(r,s,\theta,\varphi)$ is not explicitly given, being the Euclidean surface of three overlapping discs. This is an example of the complications that arise from the MNNR, due to numerical issues related to integration over multiple overlapping circles. Such complications led us to the approximate model in Section \ref{SecV}.

\begin{NR}
Given a PPP $\Phi$, the average surface proportion of Voronoi cells associated with single atoms, and that associated with pairs of atoms, is independent of the parameter $\lambda$. By Monte Carlo simulations, we find that
\begin{equation*}
\begin{split}
\mathbb{P}(0\curvearrowright \Phi^{(1)}) \approx 0.4602, \ \ \mathbb{P}(0\curvearrowright \Phi^{(2)}) \approx 0.5398.
\end{split}
\end{equation*}
\end{NR} 
Interestingly, although the ratio of single atoms to pairs is $0.3785/0.6215 \approx 0.6090$, the ratio of the associated Voronoi surface is $0.4602/0.5398 \approx 0.8525$, implying that the typical Voronoi cell of a single atom is larger than that of an atom from a pair, as Figure \ref{Groups} shows. The last remark gives a first intuition that there is attraction between the cooperating atoms in pair and repulsion among the singles. 

\subsection{Further Results}

The \textit{empty space function (ES)}, commonly denoted by \textit{F}, is the CDF of the distance from the typical user to the nearest atom of the point process considered \cite{BaddNotes07}. The two functions NN and ES can be combined into a single expression known as the \textit{$J$ function}. The latter is a tool introduced by van Lieshout and Baddeley \cite{BaddNotes07} to measure repulsion and/or attraction between the atoms of a point process. It is defined as
\begin{equation}
J(r)=\frac{1-G(r)}{1-F(r)},
\end{equation} 
for every $r>0$. In the case of the PPP, $G\left(r\right)\equiv F\left(r\right)$ and $J\left(r\right)=1$, as a consequence of the fact that the reduced Campbell measure is identical to the original measure. Hence the $J$ function quantifies the differences of any process with the PPP. When $J(r)>1$, this is an indicator of repulsion between atoms, whereas $J(r)<1$ indicates attraction. 
We use Monte Carlo simulations to plot the $J$ function of both processes (see Figure \ref{JFunction}). From the figures we conclude that $\Phi^{(1)}$ \textit{exhibits repulsion for every $r\geq 0$, and $\Phi^{(2)}$ attraction everywhere}. However, note that the attraction in the case $\Phi^{(2)}$ is due to the way the pairs were formed. If we consider a new process having as elements the middle points between each one of the cooperating pairs, this process exhibits repulsion everywhere.

\begin{figure}[htbp] 
\centering
\subfigure[]{\includegraphics[trim = 45mm 100mm 40mm 100mm, clip,width=0.23\textwidth]{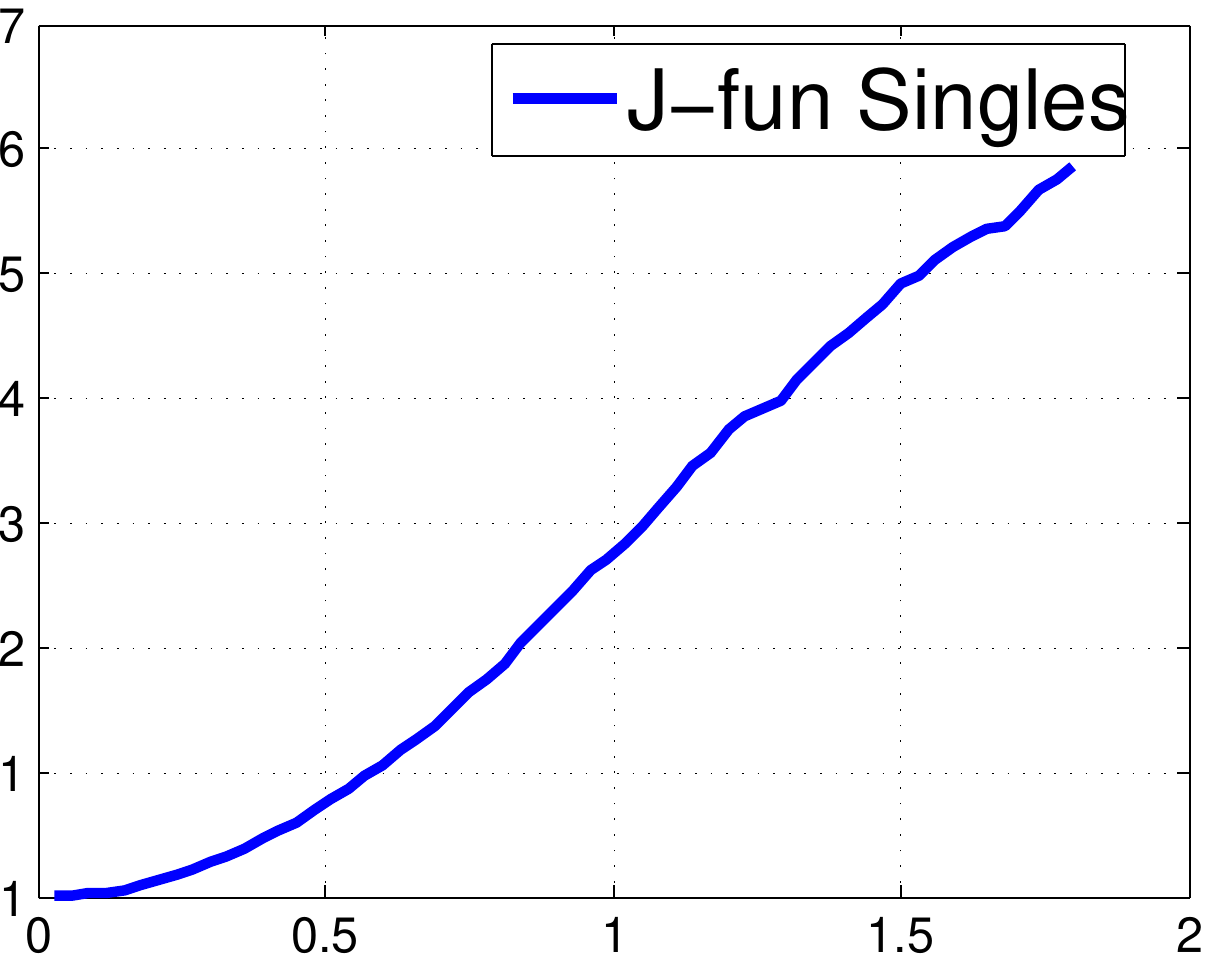}}
\subfigure[]{\includegraphics[trim = 45mm 100mm 40mm 100mm, clip,width=0.23\textwidth]{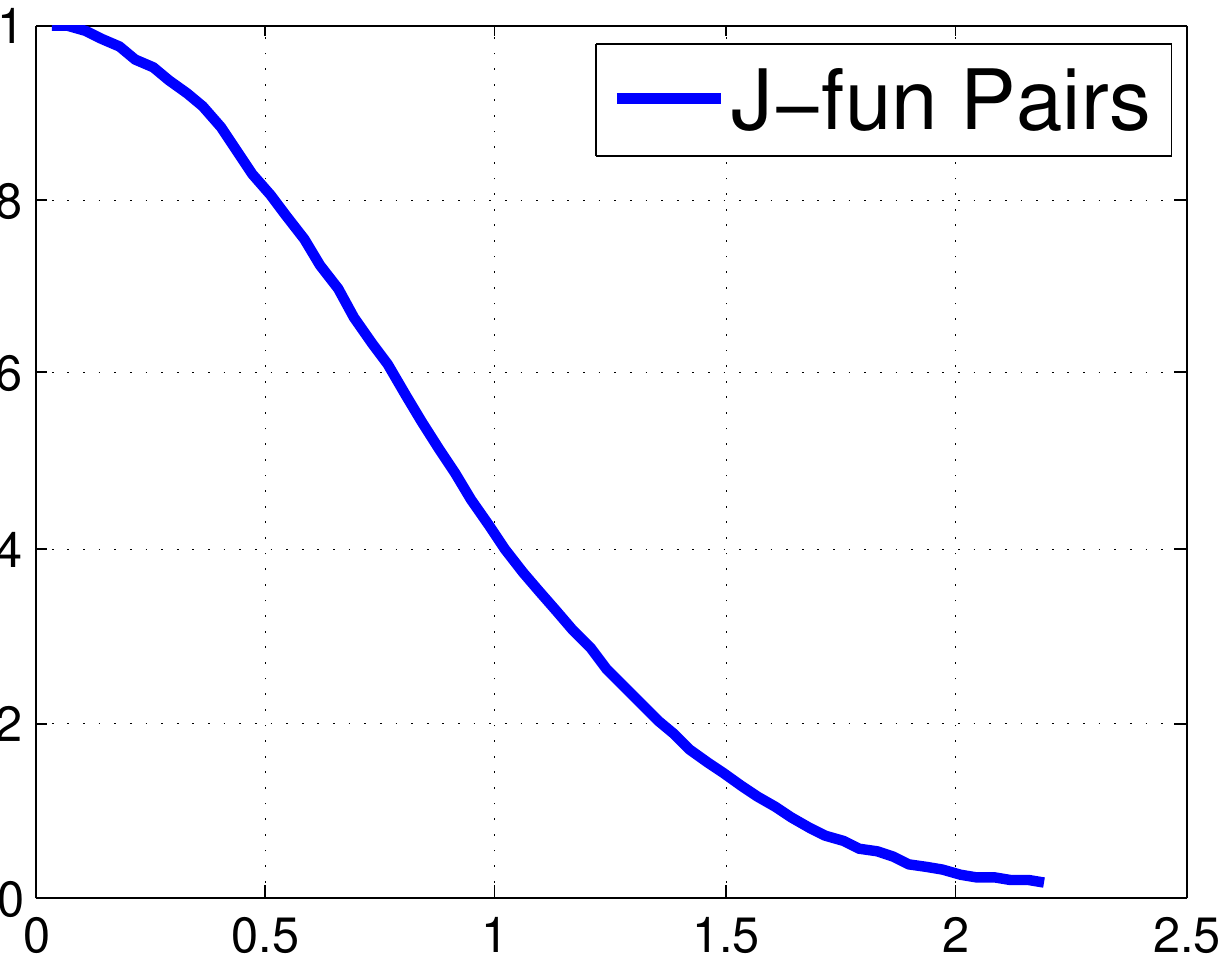}} 
\label{JFunction}
\caption{(a) The J function of the processes $\Phi^{(1)}$. (b) The J function of the processes $\Phi^{(2)}$.} 
\label{JFunction}
\end{figure}

\section{Received Signals}
\label{SecIII}
The analysis on this work can be applied to any type of antennas, including directional ones. For simplicity of presentation we will treat here the omnidirectional case, where the emitted signal depends only on the distance of the BS from the typical user. The case of directional BSs requires extra integration with respect to angles, which unnecessarily complicates the analysis, without substantial difference.

In what follows, we will introduce explicit examples. For these, let us consider an i.i.d. family $(h_r)_{r>0}$ of positive exponential variables, with parameter $1$, also independent of the BS positions. Given $p>0$, the couple $(h_r,p)$ represents the random propagation effects and the power signal emitted to the typical user from a BS whose distance from the origin is $r>0$. Let us also choose the \textit{path-loss} function as $l(r):=\frac{1}{r^\beta}$,  with \textit{path loss exponent} $\beta>2$. 

\subsection{Single atoms}
Consider $f:\mathbb{R}^2 \rightarrow [0,\infty)$ and $\tilde{f}:[0,\infty) \longrightarrow \mathbb{R}^+$ two generic random fields. The quantity $f(x)$ (and $\tilde{f}(r)$) represents the received signal at the typical user, when transmitted by a single BS, whose position is $x$ (or its distance from the origin is $r>0$). For a single BS we could, for example, consider 
\begin{equation}\label{signalsin}
\tilde{f}(r)=p \frac{h_r} {r^\beta},
\end{equation}  
which follows an exponential distribution, with parameter $\frac{r^\beta}{p}$. 

\subsection{Pair cooperation}
Consider $g:\mathbb{R}^2 \times \mathbb{R}^2 \longrightarrow \mathbb{R}^+$ and $\tilde{g}:[0,\infty)\times [0,\infty) \longrightarrow \mathbb{R}^+$ two generic random fields, both independent of the BS positions. The quantity $g(x,y)$ (and $\tilde{g}(r,z)$) represents the received signal at the typical user, when transmitted by a pair of BSs whose positions are $x$ and $y$ (or their distance from the origin are $r>0$ and $z>0$), respectively. The received signal can take the following example expressions, which refer to different types of cooperation or coordination,
\begin{equation}\label{coopFunction}
\tilde{g}(r,z)=
\begin{cases}
p \frac{h_r}{r^\beta} + p \frac{h_z}{z^\beta}  \ , \ \ \ \ \ \ \ \ \ \ \ \ \ \ \ \ \ \ \ \ \mbox{\textbf{[NSC]}} \\
\mathbf{1}_{on_r}p \frac{h_r}{r^\beta} + (1-\mathbf{1}_{on_r})p \frac{h_z}{z^\beta} \ , \ \ \mbox{\textbf{[OFF]}} \\
\max \left\{p \frac{h_r}{r^\beta},p \frac{h_z}{z^\beta}\right\} \ , \ \ \ \ \ \ \ \ \ \ \ \ \ \mbox{\textbf{[MAX]}} \\
\left| \sqrt{p \frac{h_r}{r^\beta}}e^{i\theta_r}+\sqrt{p \frac{h_z}{z^\beta}}e^{i\theta_z}\right|^2 \ \ \ \ \mbox{\textbf{[PH]}}
\end{cases}.
\end{equation}
In the above, $(\mathbf{1}_{on_r})_{r>0}$ and $(\theta_r)_{r>0}$ are two different families of indexed identically distributed r.v.s, independent of the other random objects. They follow a Bernoulli distribution, with parameter $q\in (0,1)$ ($\overline{q}:=1-q$), and a general distribution over $[0,2\pi)$, respectively. \textbf{[NSC]} refers to non-coherent joint transmission, as in  \cite{NigCoordMul2014,BlasStuSINTFact2015,TanTracMod2014,GuoSPGP2014}, where each of the two BSs transmits an orthogonal signal, and the two are added at the receiver side. \textbf{[OFF]} refers to the case where one of the two BSs is active and the other inactive, according to an independent Bernoulli experiment, independent of the BS positions. \textbf{[MAX]} refers to the case where the BS with the strongest signal is actively serving a user, while the other is off. The \textbf{[OFF]} and \textbf{[MAX]} cases are relevant to energy saving operation. In the \textbf{[PH]} case, two complex signals are combined in phase (see \cite{BacAStoGeo2015,NigCoordMul2014}), in particular, when $cos(\theta_r-\theta_z)=1$, the two signals are in the same direction, and they add up coherently at the receiver (user side), giving the maximum cooperating signal. 

The above expressions in \eqref{coopFunction} are merely examples of the cooperation signals. A more general family can be proposed with specific properties. Consider $c_i:[0,\infty)\times[0,\infty) \longrightarrow \mathbb{R}$, and $d_i:[0,\infty)\times[0,\infty) \longrightarrow \mathbb{R}^+$, for $1\leq i \leq n$, some deterministic and measurable functions, and suppose that  
\begin{equation}\label{TAIL}
\mathbb{P}(\tilde{g}(r,z)>T)=\sum^n_{i=1}c_i(r,z)e^{-d_i(r,z)T}.
\end{equation}
When analysing performance related to coverage probability, the tail probability distribution functions (CCDF) for the signals that can be written as \eqref{TAIL} lead easier to numerically tractable formulas. However, the function defined in \eqref{TAIL} is not necessarily a CCDF. For this to hold, some extra conditions must be imposed to the functions 
$c_i(r,z)$ and $d_i(r,z)$.
Interestingly, the CCDF of $\tilde{g}(r,z)$ in the \textbf{[NSC]}, \textbf{[OFF]}, and \textbf{[MAX]} cases fulfils equation \eqref{TAIL} (see Table \ref{table}). Furthermore, there exist important families of r.v.s whose CCDF actually has the form described in equation \eqref{TAIL}: the \textit{hypo-exponential distribution}, the \textit{hyper-exponential distribution}, the maximum over a finite number of exponential r.v.s, among others. 
\begin{table} 
	\centering
	\caption{Expressions for the CCDF and the LT}\label{table}
    \begin{tabular}{ | l | p{4cm} | p{2.5cm} | } 
    \hline
     \ \ & \centering $\mathbb{P}(g(r,z)>T)$  & $\mathbb{E}[e^{-s g(r,z)}]$  \\ \hline
    \textbf{[NSC]} & $\frac{z^\beta}{p(r^\beta-z^\beta)} \Big(e^{-\frac{r^\beta}{p}T}-e^{-\frac{z^\beta}{p}T} \Big)$ & $\frac{r^\beta}{sp+r^\beta}\frac{z^\beta}{sp+z^\beta}$
    \\ \hline
    \textbf{[OFF]} & $qe^{-\frac{r^\beta}{p}T}+\overline{q}e^{-\frac{z^\beta}{p}T}$ & $q\frac{r^\beta}{sp+r^\beta}+\overline{q}\frac{z^\beta}{sp+z^\beta}$ 
    \\ \hline
    \textbf{[MAX]} & $e^{-\frac{r^\beta}{p}T}+e^{-\frac{z^\beta}{p}T}-e^{- \left( \frac{ r^\beta}{p} + \frac{z^\beta}{p}\right)T}$ & $\frac{r^\beta}{sp+r^\beta}+\frac{z^\beta}{sp+z^\beta}-\frac{r^\beta+z^\beta}{sp+r^\beta+z^\beta}$
    \\    \hline
    \end{tabular}
\end{table}

\section{Interference Analysis in the MNNR model}
\label{SecIV}

The purpose of the analysis up to this point was to develop the basic tools, within a communication context, that will allow us to derive results related to cooperation. As shown in the previous Section, the cooperating BS pairs will have a different influence on the interference seen by a user in the network, than those operating individually. The current Section will focus on the \textit{interference field} generated by $\Phi^{(1)}$ and $\Phi^{(2)}$. As shown in Section \ref{SecII}, even when $\Phi$ is a PPP, the two processes behave differently than a PPP. We thus have to resort to direct techniques from the theory of Stochastic Geometry and point processes. 

If we denote by $\mathcal{I}^{(1)}$ and $\mathcal{I}^{(2)}$, the interference field generated by the elements of $\Phi^{(1)}$ and $\Phi^{(2)}$, then, 
\begin{eqnarray}
\label{I1}
\mathcal{I}^{(1)} & = & \sum_{x\in \Phi^{(1)}} f(x),\\
\label{I2}
\mathcal{I}^{(2)} & = & \frac{1}{2}\sum_{x\in \Phi^{(2)}} \sum_{y\in \Phi^{(2)} \backslash \{x\}}g(x,y) \mathbf{1}_{\left\{x \overset{\Phi}{\leftrightarrow} y\right\}}
\end{eqnarray} 
The $1/2$ in front of the summation in (\ref{I2}) prevents us from considering a pair twice. Remark that we can consider here any type of signal (directional or not). 
 
\subsection{Expected value of $\mathcal{I}^{(1)}$ and $\mathcal{I}^{(2)}$.}

The next Theorem gives an exact integral expression to the expected value of the interference field generated by the singles and the pairs. The proof uses the Campbell-Little-Mecke formula, Lemma \ref{Lemma1}, and Theorem \ref{Percentage}.
\begin{theo}\label{Expected} For a PPP $\Phi$, the expected value of the \textit{interference field} generated by $\Phi^{(1)}$ and $\Phi^{(2)}$ is given by 
\begin{align}
\mathbb{E} \left[\mathcal{I}^{(1)}\right] &= (1-\delta) \int_{\mathbb{R}^2}\mathbb{E}\left[f(x)\right] \lambda dx, \label{expectedSingles}\\
\mathbb{E} \left[\mathcal{I}^{(2)}\right] &= \frac{1}{2} \int_{\mathbb{R}^2} \int_{\mathbb{R}^2}\mathbb{E}\left[g(x,y)\right]e^{-\lambda\pi |x-y|^2(2-\gamma)}\lambda dy \lambda dx. \label{expectedDoubles}
\end{align}
\end{theo}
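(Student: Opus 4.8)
The plan is to handle the two statements in parallel, reducing each expectation of a sum over a thinned process to an integral over $\mathbb{R}^2$ by means of the Campbell--Little--Mecke (CLM) formula together with Slivnyak's theorem. For the singles, I would first rewrite $\mathcal{I}^{(1)}$ as a sum over the full process, $\mathcal{I}^{(1)} = \sum_{x \in \Phi} f(x)\, \mathbf{1}_{\{x \text{ single in } \Phi\}}$, so that the summand is a function of the point $x$ \emph{and} of the whole configuration $\Phi$ through the singleness event. Because this event is configuration-dependent, the naive first-moment formula does not apply directly; instead I would invoke the refined Campbell--Mecke identity, which passes to the reduced Palm distribution, and then use Slivnyak's theorem to replace that distribution (for a PPP) by the law of $\Phi \cup \{x\}$. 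This yields $\mathbb{E}[\mathcal{I}^{(1)}] = \int_{\mathbb{R}^2} \mathbb{E}\big[f(x)\, \mathbf{1}_{\{x \text{ single in } \Phi \cup \{x\}\}}\big]\,\lambda\,dx$.

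The next step is to exploit the independence of the fading family $(h_r)_{r>0}$ (and hence of $f$) from the point positions: the signal $f(x)$ and the purely geometric event $\{x \text{ single}\}$ decouple, so the integrand factors as $\mathbb{E}[f(x)]\,\mathbb{P}(x \in \Phi^{(1)})$. Theorem \ref{Percentage} identifies $\mathbb{P}(x \in \Phi^{(1)}) = 1-\delta$, uniformly in $x$ and $\lambda$, and pulling this constant outside the integral gives \eqref{expectedSingles} at once.

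For the pairs I would proceed analogously, but at the level of ordered pairs of points. The key observation is that the MNNR indicator already forces both of its arguments into $\Phi^{(2)}$: whenever $x \overset{\Phi}{\leftrightarrow} y$ holds, both $x$ and $y$ cooperate and hence belong to $\Phi^{(2)}$. Consequently the double sum over $\Phi^{(2)}$ in \eqref{I2} coincides with the double sum over $\Phi$, i.e. $\mathcal{I}^{(2)} = \tfrac{1}{2} \sum_{x \in \Phi}\sum_{y \in \Phi \setminus \{x\}} g(x,y)\, \mathbf{1}_{\{x \overset{\Phi}{\leftrightarrow} y\}}$. I would then apply the second-order (multivariate) Slivnyak--Mecke formula, which turns the expectation of this double sum into $\tfrac{1}{2} \int_{\mathbb{R}^2}\int_{\mathbb{R}^2} \mathbb{E}\big[g(x,y)\,\mathbf{1}_{\{x \overset{\Phi}{\leftrightarrow} y \text{ in } \Phi \cup \{x,y\}\}}\big]\,\lambda\,dy\,\lambda\,dx$, the factor $\tfrac{1}{2}$ ensuring each unordered pair is counted once. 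As before, independence of $g$ from the geometry factors the integrand into $\mathbb{E}[g(x,y)]\,\mathbb{P}(x \overset{\Phi}{\leftrightarrow} y)$, and Lemma \ref{Lemma1} supplies $\mathbb{P}(x \overset{\Phi}{\leftrightarrow} y) = e^{-\lambda \pi \|x-y\|^2(2-\gamma)}$, giving \eqref{expectedDoubles}.

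The main obstacle is the configuration dependence of the indicators $\mathbf{1}_{\{x \text{ single}\}}$ and $\mathbf{1}_{\{x \overset{\Phi}{\leftrightarrow} y\}}$: after passing to the Palm distribution, one must evaluate the geometric events on the \emph{augmented} configurations $\Phi \cup \{x\}$ and $\Phi \cup \{x,y\}$ rather than on $\Phi$ alone, since the added atoms themselves participate in the nearest-neighbor relations. Slivnyak's theorem is precisely what legitimizes this substitution for the PPP, and the almost sure uniqueness of the nearest neighbor (assumed throughout for $\Phi$) is what makes these events well defined and matches the probabilities already computed in Lemma \ref{Lemma1} and Theorem \ref{Percentage}. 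Once these points are secured, the remaining manipulations — the factorization by independence and the appeal to the two stated constants — are routine.
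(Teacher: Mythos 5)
Your proposal is correct and follows essentially the same route as the paper's proof: rewrite each sum over the thinned process as a sum over $\Phi$ weighted by the geometric indicator, apply the (reduced) Campbell--Little--Mecke formula with Slivnyak--Mecke, factor out the signal expectation by independence, and invoke Theorem \ref{Percentage} and Lemma \ref{Lemma1} respectively. Your explicit attention to evaluating the indicators on the augmented configurations $\Phi\cup\{x\}$ and $\Phi\cup\{x,y\}$ is a point the paper passes over silently, but it does not change the argument.
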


\begin{proof}
Let us start with $\mathcal{I}^{(1)}$. We observe that, because the nearest neighbor always exists and is unique,
\begin{equation*}
\begin{split}
\mathcal{I}^{(1)} & = \sum_{x\in \Phi^{(1)}}f(x) \\
& = \sum_{x\in \Phi}f(x)\mathbf{1}_{\{x\in \Phi^{(1)}\}} \\
& = \sum_{x\in \Phi}f(x)\left( 1 - \mathbf{1}_{\{x\in \Phi^{(2)}\}} \right)
\end{split}
\end{equation*}
Thus, after applying the reduced Campbell-Little-Mecke formula and Slivnyak-Mecke's Theorem
\begin{equation*}
\begin{split}
\mathbb{E}\left[ \mathcal{I}^{(1)} \right] & = \mathbb{E}\left[ \sum_{x\in \Phi}f(x)\left( 1 - \mathbf{1}_{\{x\in \Phi^{(2)}\}} \right) \right] \\
& \stackrel{(a)}{=} \int_{\mathbb{R}^2} \mathbb{E} \left[ f(x) \right] \left( 1 - \mathbb{P}( x\in \Phi^{(2)}) \right) \lambda dx \\
& \stackrel{(b)}{=} (1-\delta) \int_{\mathbb{R}^2} \mathbb{E} \left[ f(x) \right]  \lambda dx,
\end{split}
\end{equation*}
where $(a)$ follows because $f(x)$ is independent of $\Phi$ and $(b)$ after Theorem \ref{Percentage}. Then we have the desired result for $\mathcal{I}^{(1)}$. 

For $\mathcal{I}^{(2)}$, we make the observation that 
\begin{equation*}
\sum_{x\in \Phi^{(2)}}\sum_{y\in \Phi^{(2)}\backslash \{x\}} g(x,y) \mathbf{1}_{\{x \stackrel{\Phi}{\leftrightarrow} y\}} = \sum_{x\in \Phi}\sum_{y\in \Phi\backslash \{x\}} g(x,y) \mathbf{1}_{\{x \stackrel{\Phi}{\leftrightarrow} y\}},
\end{equation*}
and iterating the reduced Campbell-Little-Mecke formula and Slivnyak-Mecke's Theorem, 
\begin{equation*}
\begin{split}
\mathbb{E}\left[ \mathcal{I}^{(2)}\right] & = \mathbb{E} \left[ \sum_{x\in \Phi}\sum_{y\in \Phi\backslash \{x\}} g(x,y) \mathbf{1}_{\{x \stackrel{\Phi}{\leftrightarrow} y\}} \right] \\
& = \int_{\mathbb{R}^2} \int_{\mathbb{R}^2} \mathbb{E} \left[ g(x,y) \mathbf{1}_{\{x \stackrel{\Phi}{\leftrightarrow} y\}} \right] \lambda dy \lambda dx  \\
& \stackrel{(c)}{=} \int_{\mathbb{R}^2} \int_{\mathbb{R}^2} \mathbb{E} \left[ g(x,y) \right] \mathbb{P} \left( x \stackrel{\Phi}{\leftrightarrow} y\right) \lambda dy \lambda dx \\
& \stackrel{(d)}{=} \int_{\mathbb{R}^2} \int_{\mathbb{R}^2} \mathbb{E} \left[ g(x,y) \right] e^{-\lambda \pi \|x-y\|^2(2-\gamma)} \lambda dy \lambda dx,  
\end{split}
\end{equation*}
where $(c)$ follows because $g(x,y)$ is independent of $\Phi$ and $(d)$ after Lemma \ref{Lemma1}.
\end{proof}
The expected value can be finite or infinite, depending on the choice of $f(x)$ and $g(x,y)$. Observe that for [NSC] and [PH] the expected interference has the same value.

\begin{cor}
\label{expected}
For a PPP $\Phi$, let $M^{(1)}$ and $M^{(2)}$ be the intensity measures of $\Phi^{(1)}$ and $\Phi^{(2)}$, respectively. Then, 
\begin{equation*}
\begin{split}
M^{(1)}(dx) & = (1-\delta)\lambda dx ,\\
M^{(2)}(dx) & = \ \delta\lambda dx. 
\end{split}
\end{equation*}
\end{cor}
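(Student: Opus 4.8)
The plan is to recognise this corollary as an immediate consequence of the preceding results, since the intensity measure is by definition the first moment measure $M^{(i)}(A)=\mathbb{E}[\Phi^{(i)}(A)]$ for a Borel set $A$. First I would rewrite the count of points of each thinned process inside $A$ as a sum over the underlying PPP weighted by the thinning indicator,
\begin{equation*}
\Phi^{(i)}(A)=\sum_{x\in\Phi}\mathbf{1}_A(x)\,\mathbf{1}_{\{x\in\Phi^{(i)}\}},\quad i=1,2.
\end{equation*}
Taking expectations and invoking the reduced Campbell--Little--Mecke formula together with Slivnyak--Mecke's theorem, exactly as in the proof of Theorem \ref{Percentage}, converts each expected sum into
\begin{equation*}
M^{(i)}(A)=\int_A \mathbb{P}\!\left(x\in\Phi^{(i)}\right)\lambda\,dx,
\end{equation*}
where the probability is understood under the Palm measure at $x$ (equivalently, conditioned on $x\in\Phi$).

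Next I would simply insert the values computed in Theorem \ref{Percentage}, namely $\mathbb{P}(x\in\Phi^{(2)})=\delta$ and $\mathbb{P}(x\in\Phi^{(1)})=1-\delta$, which hold uniformly in $x$ by stationarity and are independent of $\lambda$. This yields $M^{(2)}(A)=\delta\lambda\,\mathcal{S}(A)$ and $M^{(1)}(A)=(1-\delta)\lambda\,\mathcal{S}(A)$ for every $A$, whence the densities $M^{(2)}(dx)=\delta\lambda\,dx$ and $M^{(1)}(dx)=(1-\delta)\lambda\,dx$. Alternatively, one reaches the same conclusion by specialising Theorem \ref{Expected}: choosing the deterministic single-atom field $f(x)=\mathbf{1}_A(x)$ makes $\mathcal{I}^{(1)}=\Phi^{(1)}(A)$, so that equation \eqref{expectedSingles} returns exactly $(1-\delta)\lambda\,\mathcal{S}(A)$, and an analogous specialisation handles the pairs.

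The only subtle point---rather than a genuine obstacle---is the correct handling of the dependent thinning under the Campbell--Mecke machinery: the event $\{x\in\Phi^{(i)}\}$ depends on the entire configuration, so one must use Slivnyak--Mecke to add the atom $x$ to an independent PPP and then recognise that the resulting Palm probability of cooperation is precisely the constant $\delta$ of Theorem \ref{Percentage}. Once this identification is made, the computation is purely mechanical, and the claim that both intensity measures are multiples of Lebesgue measure---hence that the thinned processes are stationary with constant intensities $(1-\delta)\lambda$ and $\delta\lambda$---follows at once.
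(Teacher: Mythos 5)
Your proposal is correct and matches the paper's argument: the paper proves this corollary precisely by specialising Theorem \ref{Expected} to the indicator field $f(x)=\mathbf{1}_A(x)$ (and the analogous choice for pairs) and noting that $\int_{\mathbb{R}^2}e^{-\lambda\pi\|x-y\|^2(2-\gamma)}\lambda\,dy=\delta$, which is exactly the alternative you describe, while your primary route via Campbell--Little--Mecke and Theorem \ref{Percentage} is just the same computation unwound one level. No gaps.
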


\begin{proof}
Let $A$ be a regular subset of $\mathbb{R}^2$. For the choice $f(x)=\mathbf{1}^x_A$ (and $g(x,y)= \mathbf{1}^x_A 2$), the random variable $\mathcal{I}^{(1)}$ (and $\mathcal{I}^{(2)}$) counts the number of singles (pairs) within $A$. Applying directly the preceeding Theorem, and remarking that $\int_{\mathbb{R}^2} e^{-\lambda \pi \|x-y\|^2(2-\gamma)}\lambda dy = \delta$, for every $x\in \mathbb{R}^2$, we have the desired result. 
\end{proof}
The previous Corollary states that the intensities of $\Phi^{(1)}$ and $\Phi^{(2)}$ are $(1-\delta)\lambda$ and $\delta \lambda$, as stated in Section \ref{SecII}.

\subsection{Laplace functional of $\Phi^{(1)}$ and $\Phi^{(2)}$}

As a final discussion in this Section, we present our findings related to the LT of the interference from $\Phi^{(1)}$ and $\Phi^{(2)}$, when $\Phi$ is a PPP. Fix a measurable set $A\subset\mathbb{R}^2$ (window). Recall that $\Phi(A)$ denotes all the atoms of $\Phi$ inside $A$. We define the point processes  
\begin{equation}
\label{finiteWindow}
\begin{split}
& \Phi^{(1)}_A  =\left\{
\begin{tabular}{ l }
single atoms of $\Phi(A)$
\end{tabular}
\right\}\\
& \Phi^{(2)}_A =\left\{
\begin{tabular}{ l }
atoms of $\Phi(A)$ in \textit{MNNR} 
\end{tabular} 
\right\}\\
\end{split}
\end{equation}
where the MNNR have been considered only among the $\Phi$ atoms inside $A$. Consider a sequence of finite windows $(A_n)^\infty_{n=1}$ increasing to $\mathbb{R}^2$ in an appropriate sense (for example, $A_n=B(n,0)$). We have the following result.
\begin{theo}
\label{convergence}
Given a PPP $\Phi$, then, for $i=1,2$,
\begin{equation*}
\lim_{n\rightarrow \infty}\Phi^{(i)}_{A_n} \stackrel{(d)}{=} \Phi^{(i)},
\end{equation*}
where $\stackrel{(d)}{=}$ means equality in distribution.
\end{theo}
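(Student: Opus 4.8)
The plan is to establish the weak convergence through Laplace functionals, exploiting the fact that the \textit{MNNR classification of an atom is a finite-range, local property}: truncating $\Phi$ to a window $A_n$ can only alter the status of atoms lying near the boundary $\partial A_n$. Since convergence in distribution of point processes is equivalent to convergence of the Laplace functionals evaluated at every nonnegative continuous test function with compact support \cite{KalRanMea}, it suffices to fix such a $u$, set $K:=\mathrm{supp}(u)$, and prove, for $i=1,2$,
\begin{equation*}
\mathbb{E}\Big[e^{-\sum_{x\in\Phi^{(i)}_{A_n}}u(x)}\Big]\xrightarrow[n\to\infty]{}\mathbb{E}\Big[e^{-\sum_{x\in\Phi^{(i)}}u(x)}\Big].
\end{equation*}

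First I would isolate the localisation. For an atom $x\in\Phi$ with nearest-neighbour distance $D_x:=d(x,\Phi\setminus\{x\})$ and nearest neighbour $y$, recall from Section \ref{SecIIB} that $x\overset{\Phi}{\leftrightarrow}y$ holds iff $C(x,y)=B(x,D_x)\cup B(y,D_x)$ contains no other atom, and that the only possible MNNR partner of $x$ is $y$; hence the label of $x$ (single or paired) is a measurable function of $\Phi\cap B(x,2D_x)$, since $C(x,y)\subseteq B(x,2D_x)$. I would then record the geometric consequence: whenever a set $A$ satisfies $B(x,2D_x)\subseteq A$, the configurations $\Phi$ and $\Phi\cap A$ coincide on $B(x,2D_x)$, so $x$ keeps $y$ as its (unique) nearest neighbour and the emptiness of $C(x,y)$ is unaffected; the classification of $x$ computed within $A$ therefore equals the one computed within $\mathbb{R}^2$, i.e.\ $x\in\Phi^{(i)}_{A}\iff x\in\Phi^{(i)}$.

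Next I would control the boundary. Define the bad event $E_n:=\{\exists\,x\in\Phi\cap K:\ B(x,2D_x)\not\subseteq A_n\}$. On $E_n^c$ every atom of $\Phi\cap K$ receives the same label inside $A_n$ and inside $\mathbb{R}^2$, so the restricted configurations $\Phi^{(i)}_{A_n}\cap K$ and $\Phi^{(i)}\cap K$ coincide and the two exponents above are identical. Because $\Phi\cap K$ is a.s.\ finite and $D_x<\infty$ a.s.\ for each of its finitely many atoms, the radius $\rho:=\max_{x\in\Phi\cap K}2D_x$ is a.s.\ finite, with $E_n$ vacuous when $\Phi\cap K=\emptyset$. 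For $A_n=B(0,n)$ one has $d(K,A_n^c)\to\infty$, so $B(x,2D_x)\subseteq A_n$ simultaneously for all $x\in\Phi\cap K$ once $n$ is large enough; thus $\mathbf{1}_{E_n}\to 0$ a.s.\ and, by bounded convergence, $\mathbb{P}(E_n)\to 0$. Since the two bounded integrands agree off $E_n$, this yields
\begin{equation*}
\Big|\mathbb{E}\big[e^{-\sum_{x\in\Phi^{(i)}_{A_n}}u(x)}\big]-\mathbb{E}\big[e^{-\sum_{x\in\Phi^{(i)}}u(x)}\big]\Big|\le\mathbb{P}(E_n)\xrightarrow[n\to\infty]{}0,
\end{equation*}
establishing the Laplace-functional convergence, and hence the theorem, for both $i=1$ and $i=2$.

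The hard part will be the localisation/boundary bookkeeping rather than any probabilistic estimate: one must argue cleanly that restricting $\Phi$ to $A_n$ can neither produce a spurious closer neighbour nor break a genuine mutual relation for any interior atom, i.e.\ that $B(x,2D_x)\subseteq A_n$ is exactly the right sufficient condition for agreement. Everything else is standard — the a.s.\ uniqueness of nearest neighbours for the PPP (already assumed), the a.s.\ finiteness of $\Phi\cap K$, and the reduction of weak convergence to Laplace functionals — and, crucially, the argument is insensitive to which class is retained, so it applies verbatim to the singles $\Phi^{(1)}$ and to the pairs $\Phi^{(2)}$.
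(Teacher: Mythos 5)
Your proposal is correct, but it reaches Theorem \ref{convergence} by a genuinely different route than the paper's Appendix B. The paper verifies Kallenberg's counting criterion for weak convergence of point processes — convergence of the void probabilities $\mathbb{P}(\Phi^{(i)}_{A_n}(E)=0)$, of the one-atom probabilities, and a tightness condition via Markov's inequality — and controls the boundary effect \emph{quantitatively}: an atom of a compact $E$ that is classified differently in $\Phi^{(i)}_{A_n}$ and in $\Phi^{(i)}$ must have a nearest neighbour at distance of order $n$, an event whose probability the authors bound by $e^{-\lambda\pi C n^2}$ using the Poisson void probability. You instead work with Laplace functionals and make the boundary control \emph{almost sure}: the classification of $x$ is a stopping-set functional of $\Phi\cap B(x,2D_x)$ (your localisation lemma is airtight — $y\in B(x,2D_x)$ forces the nearest neighbour of $x$ in $\Phi\cap A_n$ to remain $y$, and $B(y,D_x)\subseteq B(x,2D_x)\subseteq A_n$ makes reciprocation insensitive to the truncation), and since $\Phi\cap K$ is a.s.\ finite with a.s.\ finite nearest-neighbour distances, the bad event $E_n$ is eventually empty realization by realization, giving $\mathbb{P}(E_n)\to 0$ and hence the Laplace-functional bound. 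The trade-off is clear: the paper's computation yields an explicit (Gaussian-tailed) rate, which the authors flag as relevant to their ongoing work on convergence rates, whereas your soft argument gives no rate directly; in exchange, your proof treats $i=1$ and $i=2$ verbatim in one stroke, and it uses the Poisson hypothesis only through local finiteness and a.s.\ uniqueness of nearest neighbours, so it extends essentially unchanged to any stationary simple point process satisfying those properties.
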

\begin{proof}
See Appendix B, available in the supplemental material.
\end{proof}
As convergence in distribution is equivalent to convergence of the LT \cite{KalRanMea}, the previous Theorem states that, for $A$ large enough, the LT of $\Phi^{(i)}_A$ approximates that one of $\Phi^{(i)}$. The benefit of this approach is that, for every finite window $A$, we can actually obtain an analytic representation for the LT of $\Phi^{(i)}_A$. As a sketch of the proof, fix a finite subset $A$. Conditioned on the number of atoms, these are i.i.d. uniformly distributed within $A$. Then, using the law of total probability, we can express the LT as an infinite sum of terms. The probability of a PPP having a fixed number of atoms within $A$ is known. Thus, we only have left to find expressions for the LT conditioned on the number of points inside $A$. For a finite number of different planar points $x_1,\ldots,x_n\in A$, define the function $H^{i,n}(x_1,\ldots,x_n)$ as the indicator function of the atom $x_i$ being in pair with another atom of the finite configuration $\{x_1,\ldots,x_n\}$ (Definition \ref{Single}). In the same fashion, define the function $I^{i,n}(x_1,\ldots,x_n)$ as the indicator function of the atom $x_i$ being single with respect to the finite configuration $\{x_1,\ldots,x_n\}$ (Definition \ref{Pair}). Let 
\begin{equation*}
\begin{split}
H^{(n)}&(x_1,\ldots,x_n) \\
& :=(H^{(1,n)}(x_1,\ldots,x_n),\ldots,H^{(n,n)}(x_1,\ldots,x_n)),\\
I^{(n)}&(x_1,\ldots,x_n) \\
& :=(I^{(1,n)}(x_1,\ldots,x_n),\ldots,I^{(n,n)}(x_1,\ldots,x_n)).
\end{split}
\end{equation*}
 We have the following result.

\begin{theo}[Laplace transform]
\label{LaplaceTransform}
Consider a PPP $\Phi$, with intensity $\lambda$, a regular subset $A\subset{\mathbb{R}^2}$, and a measurable function $f:\mathbb{R}^2\longrightarrow \mathbb{R}^+$. Let $F^{(n)}(x_1,\ldots,x_n):=(f(x_1),\ldots,f(x_n))$.

The LT of $\Phi^{(1)}_A$ is equal to 
\begin{equation}
\label{LPS}
\begin{split}
\mathbb{E} & \left( e^{-\sum_{x\in \Phi^{(1)}_A}f(x)} \right) \\
= & e^{-\lambda\mathcal{S}(A)} \Bigg(1 + \lambda\int_A e^{-f(x)} dx + \frac{\lambda^2}{2} \\
+ &\sum^ \infty_{n=3} \frac{\lambda^n}{n!}\int_A \ldots \int_A e^{-F^{(n)}(x_1,\ldots,x_n)\cdot H^{(n)}(x_1,\ldots,x_n)} dx_1 \ldots dx_n \Bigg)
\end{split}
\end{equation}
The LT of $\Phi^{(2)}_A$ is equal to
\begin{equation}
\label{LTD}
\begin{split}
 \mathbb{E} & \left(e^{-\sum_{x\in \Phi^{(2)}_A}f(x)}\right) \\
= & e^{-\lambda \mathcal{S}(A)} \Bigg(1 + \lambda \mathcal{S}(A) +\frac{\lambda^2}{2} \int_A \int_A e^{-(f(x)+f(y))}dy dx \\
+ &\sum^ \infty_{n=3} \frac{\lambda^n}{n!}\int_A\ldots \int_A e^{-F^{(n)}(x_1,\ldots,x_n)\cdot I^{(n)}(x_1,\ldots,x_n)}dx_1 \ldots dx_n \Bigg)
\end{split}
\end{equation}
\end{theo}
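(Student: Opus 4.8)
The plan is to reduce the statement to a conditioning-on-the-count computation, as sketched above, and to exploit the elementary description of a PPP restricted to a bounded window. Write $N:=\Phi(A)$ for the number of atoms of $\Phi$ inside $A$. Since $\Phi$ is a PPP of intensity $\lambda$, $N$ is Poisson distributed with mean $\lambda\mathcal{S}(A)$, so $\mathbb{P}(N=n)=e^{-\lambda\mathcal{S}(A)}(\lambda\mathcal{S}(A))^{n}/n!$, and by the law of total probability
\begin{equation*}
\mathbb{E}\left(e^{-\sum_{x\in\Phi^{(i)}_A}f(x)}\right)=\sum_{n=0}^{\infty}\mathbb{P}(N=n)\,\mathbb{E}\left(e^{-\sum_{x\in\Phi^{(i)}_A}f(x)}\,\Big|\,N=n\right),\quad i=1,2.
\end{equation*}
Everything then comes down to evaluating each conditional Laplace transform.

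First I would invoke the standard fact that, conditioned on $\{N=n\}$, the atoms $X_1,\dots,X_n$ are independent and uniformly distributed on $A$. The crucial observation is that whether $X_i$ is single or lies in a pair depends only on the mutual distances of $X_1,\dots,X_n$ inside $A$, so it is governed by the deterministic combinatorial indicators introduced before the statement: $X_i$ is single iff $I^{(i,n)}(X_1,\dots,X_n)=1$ and lies in a pair iff $H^{(i,n)}(X_1,\dots,X_n)=1$. Hence the retained signal sum equals $\sum_{i=1}^{n}f(X_i)\,I^{(i,n)}=F^{(n)}\cdot I^{(n)}$ for the singles (and $F^{(n)}\cdot H^{(n)}$ for the pairs). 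Integrating against the i.i.d. uniform law,
\begin{equation*}
\mathbb{E}\left(e^{-\sum_{x\in\Phi^{(1)}_A}f(x)}\,\Big|\,N=n\right)=\frac{1}{\mathcal{S}(A)^{n}}\int_A\!\cdots\!\int_A e^{-F^{(n)}\cdot I^{(n)}}\,dx_1\cdots dx_n,
\end{equation*}
and likewise for $\Phi^{(2)}_A$ with $H^{(n)}$ in place of $I^{(n)}$. Multiplying by $\mathbb{P}(N=n)$ cancels the factor $\mathcal{S}(A)^{n}$ and leaves the general term $e^{-\lambda\mathcal{S}(A)}\frac{\lambda^{n}}{n!}\int_A\cdots\int_A e^{-F^{(n)}\cdot I^{(n)}}dx_1\cdots dx_n$; summing over $n\ge0$ yields \eqref{LPS} (and, mutatis mutandis, \eqref{LTD}).

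To obtain the explicit low-order terms I would evaluate $n=0,1,2$ directly, using two geometric facts: a lone atom ($n=1$) is trivially single, and two atoms ($n=2$) are always mutually nearest neighbours and therefore form a pair. Thus $I^{(1,1)}\equiv1$, $H^{(1,1)}\equiv0$ give the $\lambda\int_A e^{-f(x)}dx$ term for the singles and the $\lambda\mathcal{S}(A)$ term for the pairs, while $H^{(i,2)}\equiv1$, $I^{(i,2)}\equiv0$ give $\frac{\lambda^2}{2}\int_A\int_A e^{-(f(x)+f(y))}dy\,dx$ for the pairs and $\frac{\lambda^2}{2}\int_A\int_A dy\,dx=\frac{\lambda^2}{2}\mathcal{S}(A)^2$ for the singles. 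The contributions $n\ge3$ admit no such collapse and remain as multiple integrals.

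The only step demanding genuine care is the interchange of the expectation with the infinite sum and the convergence of the resulting series, but this turns out to be harmless: since $f\ge0$ every integrand satisfies $0\le e^{-F^{(n)}\cdot I^{(n)}}\le1$, so the $n$-th summand is dominated by $e^{-\lambda\mathcal{S}(A)}(\lambda\mathcal{S}(A))^{n}/n!$, a summable sequence whose total is $1$. Dominated convergence applied to the partial sums then legitimizes every exchange and guarantees absolute convergence, completing both identities.
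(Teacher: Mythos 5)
Your proposal is correct and follows exactly the route the paper itself sketches (the paper gives only a proof sketch for this theorem): condition on $N=\Phi(A)$, use that the atoms are then i.i.d.\ uniform on $A$, note that single/pair status is the deterministic function of the configuration encoded by $I^{(n)}$ and $H^{(n)}$, and resum against the Poisson weights, with the domination by $e^{-\lambda\mathcal{S}(A)}(\lambda\mathcal{S}(A))^n/n!$ justifying convergence. Note only that your assignment ($I^{(n)}$ for the singles transform, $H^{(n)}$ for the pairs) is the mathematically correct one and is consistent with the explicit $n=1,2$ terms, whereas the printed statement transposes $H^{(n)}$ and $I^{(n)}$ in the general terms (and omits a factor $\mathcal{S}(A)^2$ in the $n=2$ term of \eqref{LPS}); these are typos in the theorem, not gaps in your argument.
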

\textit{Implementation:} The MNNR was defined in a general way (see Section \ref{SecII}). Then, for every natural number $n$, it is easy to write a program/algorithm with input $(x_1,\ldots,x_n)$ and output $H^{(n)}(x_1,\ldots,x_n)$ (or $I^{(n)}(x_1,\ldots,x_n)$):
\begin{enumerate}
\item Define a $n\times n$ matrix $D=(d_{i,j})$, such that 
\begin{equation*}
d_{i,j}=\|x_i-x_j\|.
\end{equation*}
\item Choose a $n\times 1$ vector $v$, such that, for each $i=1,\ldots,n$,  
      \begin{equation*}
      v(i)=\argmin_{j\in \{1,\ldots,n\}\backslash \{i\}} d_{i,j}.
      \end{equation*}
\item Define another $n\times 1$ vector $u$, such that, for each $i=1,\ldots,n$
      \begin{itemize}
      \item \textbf{If} $i=v(v(i))$ (that is, $x_i$ and $x_{v(i)}$ are in MNNR), then $u(i)=1$.
      \item \textbf{Else} $u(i)=0$.
      \end{itemize}       
\item \textbf{Return} $u$.
\end{enumerate}
Fixed $f(x)$, $F^{(n)}(x_1,\ldots,x_n)$ is also known. Thus, for every natural number $n$, it is easy to set up a program that numerically approaches  
\begin{equation*}
\int_A \ldots \int_A e^{-F^{(n)}(x_1,\ldots,x_n)\cdot H^{(n)}(x_1,\ldots,x_n)} dx_1 \ldots dx_n 
\end{equation*}
and
\begin{equation*}
\int_A \ldots \int_A e^{-F^{(n)}(x_1,\ldots,x_n)\cdot I^{(n)}(x_1,\ldots,x_n)} dx_1 \ldots dx_n.
\end{equation*} 
However, it is clear that, as $n$ grows, computational time for $H^{(n)}(x_1,\ldots,x_n)$ and $I^{(n)}(x_1,\ldots,x_n)$ naturally increases. Since $n$-nested integral involving these functions needs to be computed, complicating the problem even more. As part of the current research, the authors work on the complexity reduction of the MNNR algorightm, and investigate the rate of convergence for the expressions in equations \eqref{LPS} and \eqref{LTD} (to avoid calculating an infinite number of terms).   

\section{The Superposition Model - Coverage Analysis}
\label{SecV}
As a consequence of the non-Poissonian behaviour of $\Phi^{(1)}$ and $\Phi^{(2)}$, a complete performance analysis of SINR related metrics is analytically challenging. This is due to the fact that the expressions presented for the LT are not numerically tractable. Thus, one cannot derive simple, analytic expressions for the coverage probability by LT methods, as shown in \cite{AndATract2011}. Instead, we use in this work the following model to approximate these metrics.
  
\subsection{Poisson Superposition Model}
To imitate the process of singles, we consider a PPP $\hat{\Phi}^{(1)}$, with parameter $(1-\delta)\lambda$. In this way, the new process of singles and $\Phi^{(1)}$ share the first moment (Corollary \ref{expected}).

To imitate the process of pairs, we also consider a PPP $\hat{\Phi}^{(2)}$, independent of $\hat{\Phi}^{(1)}$, with intensity $\frac{\delta}{2}\lambda$. We call the atoms of this process \textit{the parents}. We considere the process $\hat{\Phi}^{(2)}$ as independently marked. Each mark of a parent represents its pairing BS, \textit{the daughter}. The idea is that each couple $(\textit{parent},\textit{daughter})$ imitates a cooperating pair in MNNR. Let us consider $(Z_r)_{r>0}$ a family of independent, real r.v.s, independent also of $\hat{\Phi}^{(1)}$ and of $\hat{\Phi}^{(2)}$, where each $Z_r$ follows a Rice distribution, with parameter $(r,\alpha)$. If $Y$ is a random vector representing the Cartesian coordiantes of a parent, we define its mark by $Z_{\|Y\|}$.

To understand the choice for the marks, suppose that a BS is placed at the polar coordinates $(r,\theta)$, with $r>0$ and $\theta\in [0,2\pi)$ fixed (see Figure \ref{Zr}). Assume also that this BS belongs to a cooperating pair from the Nearest Neighbor model, and let us denote by $W$ the distance between the stations in pair. According to Theorem \ref{TheoNN2}, $W$ is Rayleigh distributed, with scale parameter $\alpha$. If $Z$ denotes the distance from the typical user to the second BS, the isotropy of the PPP implies that the distribution of $Z$ is independent of $\theta$. Moreover, we have the following result.

\begin{prop}\label{riceProp}
The r.v. $Z$ is Rice distributed, with parameters $(r,\alpha)$. The probability density function (PDF) of $Z$ is given by 
\begin{equation}\label{riceDen}
f(z|r)=\frac{z}{\alpha^2}e^{-\frac{z^2+r^2}{2\alpha^2}} I_0\left( \frac{zr}{\alpha^2} \right),
\end{equation}
where $I_0(x)$ is the modified Bessel function, of the first kind, with order zero. 
\end{prop}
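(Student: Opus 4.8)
The plan is to reduce the statement to the classical fact that the Euclidean norm of a planar Gaussian vector with a non-zero mean follows a Rice law. I would place the typical user at the origin and, invoking the isotropy of $\Phi$ (which, by Slivnyak's theorem, persists for the Palm version obtained by conditioning a point of the pair to sit at the fixed location $(r,\theta)$), argue that the bearing from the first BS towards its mutual nearest neighbor is uniformly distributed on $[0,2\pi)$ and \emph{independent} of the inter-BS distance $W$. Indeed, rotating the remaining configuration about the fixed point $(r,\theta)$ leaves its law invariant while rotating the partner's location by the same angle; since $W$ is preserved under such rotations, the direction must be uniform and independent of $W$. By Theorem \ref{TheoNN2}, $W$ is itself Rayleigh distributed with scale parameter $\alpha$.

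The key observation in the next step is that a Rayleigh radius $W$ (scale $\alpha$) together with an independent uniform angle $\psi$ describes exactly a centered isotropic planar Gaussian: writing $U=W\cos\psi$ and $V=W\sin\psi$, the pair $(U,V)$ consists of two independent $\mathcal{N}(0,\alpha^2)$ variables. Placing the first BS at $(r,0)$ — legitimate, since the law of $Z$ does not depend on $\theta$ — the second BS has Cartesian coordinates $(r+U,\,V)$, so that $r+U\sim\mathcal{N}(r,\alpha^2)$ and $V\sim\mathcal{N}(0,\alpha^2)$ are independent. The distance from the origin to the second BS is therefore
\begin{equation*}
Z=\sqrt{(r+U)^2+V^2},
\end{equation*}
that is, the norm of a planar Gaussian vector whose mean has magnitude $r$ and whose covariance is $\alpha^2 I_2$. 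By definition, such a norm is Rice distributed with parameters $(r,\alpha)$, with the density displayed in \eqref{riceDen}.

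To keep the final identification self-contained I would carry out the standard one-line computation: passing the joint density of $(r+U,V)$ to polar coordinates $(z,\varphi)$ around the origin, the numerator of the exponent simplifies to $z^2-2rz\cos\varphi+r^2$, so the angular integral produces the factor $\int_0^{2\pi} \exp\!\big((rz\cos\varphi)/\alpha^2\big)\,d\varphi = 2\pi I_0(rz/\alpha^2)$, while the remaining terms combine into $\frac{z}{\alpha^2}e^{-(z^2+r^2)/(2\alpha^2)}$, which is exactly \eqref{riceDen}. The only genuinely delicate point is the first step — the rigorous justification, via the isotropy of the Palm distribution, that the daughter's bearing relative to the parent is uniform and independent of the Rayleigh distance $W$; once this independence is in hand, everything else is the textbook derivation of the Rice law from a shifted two-dimensional Gaussian.
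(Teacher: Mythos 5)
Your proposal is correct and follows essentially the same route as the paper: the paper likewise argues that a Rayleigh radial distance $W$ (scale $\alpha$) together with an independent uniform bearing around the fixed BS at distance $r$ yields a centered isotropic Gaussian displacement, so that $Z$ is the norm of a planar Gaussian with mean of magnitude $r$ and covariance $\alpha^2 I_2$, hence Rice with parameters $(r,\alpha)$. The only cosmetic difference is that the paper delegates the final identification to a cited lemma (and carries out the analogous polar-coordinate/Bessel computation only for the related joint density in Appendix C), whereas you perform that one-line computation explicitly.
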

The angular coordinate of a PPP atom is uniformly distributed in $[0,2\pi)$. Moreover, the Cartesian coordinates of a point around a center, with Rayleigh radial distance and uniform angle, are distributed as an independent Gaussian vector. Given this, Proposition \ref{riceProp} follows from \cite[Lem. 1]{ModPerfAfsh2016}.   
\begin{figure}
\centering
\includegraphics[width=0.2\textwidth]{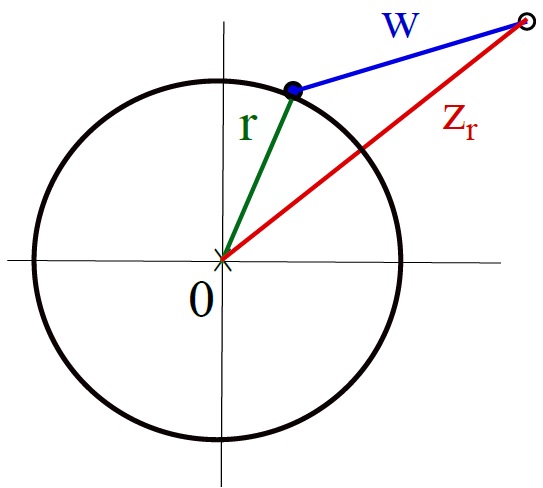}
\caption{Two cooperating BSs, where $r$ and $Z_r$ are their distances from the origin, and $W$ is the distance between them. \label{Zr}}
\end{figure}

\subsection{The Distribution of The Closest Distances} \label{nearestN}

Let $R_1$ and $R_ 2$ denote the r.v.s of the distances from the closest element of $\hat{\Phi}^{(1)}$ and $\hat{\Phi}^{(2)}$ to the origin, respectively. Denote also by $Z_2$ the mark of the parent at $R_2$. It is known that the r.v.s $R_1$ and $R_2$ are Rayleigh distributed \cite{AndATract2011}, with scale parameters $\xi$ and $\zeta$, where $\xi:=((1-\delta)2\lambda \pi)^{-1/2}$ and $\zeta:=(\delta \lambda \pi)^{-1/2}$. By definition, $R_2$ and $Z_2$ are not mutually independent, but we can derive their joint PDF.

\begin{lem}\label{denR2ZR2}
The joint PDF of the r.v. $(R_2,Z_2)$ is given by 
\begin{equation} \label{jointDensityFunction}
\begin{split}
f(r,z) & = \frac{rz}{(\alpha \zeta)^2} e^{-\frac{r^2}{2}\left(\frac{1}{\alpha^2}+\frac{1}{\zeta^2} \right)-\frac{-z^2}{2\zeta^2}}I_0\left( \frac{rz}{\zeta^2} \right).
\end{split} 
\end{equation}
Furthermore, the r.v. $Z_2$ is Rayleigh distributed, with scale parameter $( \alpha^2+\zeta^2)^{-1/2}$.
\end{lem}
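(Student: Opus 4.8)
The plan is to build the joint density as a marginal times a conditional, and then integrate out one coordinate to read off the law of $Z_2$.

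First I would argue that the density factorises as $f(r,z)=f_{R_2}(r)\,f_{Z_2\mid R_2}(z\mid r)$. The marginal $f_{R_2}$ is the Rayleigh density with scale $\zeta$, namely $\frac{r}{\zeta^2}e^{-r^2/(2\zeta^2)}$, since $R_2$ is the nearest-neighbour distance of the parent PPP $\hat\Phi^{(2)}$ of intensity $\tfrac{\delta}{2}\lambda$. The conditional law of the mark is the crux: because $\hat\Phi^{(2)}$ is \emph{independently} marked and the rule selecting the closest parent depends only on the point positions and not on the marks, the mark attached to the nearest parent—conditioned on that parent lying at distance $r$—is distributed exactly as $Z_r$, i.e.\ $\mathrm{Rice}(r,\alpha)$ by construction. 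Hence $f_{Z_2\mid R_2}(z\mid r)$ is the density \eqref{riceDen} of Proposition \ref{riceProp}, and multiplying the two factors gives the stated joint density.

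Second, to obtain the law of $Z_2$ I would integrate the joint density over $r\in(0,\infty)$. After pulling the $z$-dependent factors outside, the surviving integral has the form $\int_0^\infty r\,e^{-a r^2}I_0(b r)\,dr$ with $a=\tfrac12\big(\tfrac1{\alpha^2}+\tfrac1{\zeta^2}\big)$ and $b$ proportional to $z$. This is a standard Bessel--Gaussian integral equal to $\tfrac{1}{2a}e^{b^2/(4a)}$. Substituting it and combining the two exponentials collapses the exponent to $-z^2/\big(2(\alpha^2+\zeta^2)\big)$ and the prefactor to $z/(\alpha^2+\zeta^2)$; the result is manifestly a Rayleigh density, with the scale read off from the coefficient $\alpha^2+\zeta^2$.

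A cleaner route, which I would prefer because it avoids the special function, is a Gaussian-vector argument for the marginal. By isotropy the closest parent has position $Y$ with $\mathrm{Rayleigh}(\zeta)$ modulus and independent uniform angle, so $Y$ is a centred planar Gaussian with i.i.d.\ $N(0,\zeta^2)$ coordinates. The $\mathrm{Rice}(r,\alpha)$ law is precisely the distribution of $|v+G|$ when $|v|=r$ and $G\sim N(0,\alpha^2 I_2)$ is independent; hence $Z_2\stackrel{d}{=}|Y+G|$ with $G\sim N(0,\alpha^2 I_2)$ independent of $Y$. Then $Y+G$ is a centred planar Gaussian with i.i.d.\ $N(0,\alpha^2+\zeta^2)$ coordinates, so its modulus $Z_2$ is Rayleigh with scale governed by $\alpha^2+\zeta^2$, matching the marginal obtained above. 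The main obstacle is the conditional-independence step: one must be sure there is no size-biasing of the mark when conditioning on ``the closest'' parent. This is exactly where the independent-marking assumption is essential—the minimisation defining $R_2$ is a function of the positions alone, so the mark of the selected point keeps its unconditional kernel $Z_r$—after which the integration (or the Gaussian shortcut) is routine.
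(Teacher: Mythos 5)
Your proof is correct and reaches the same joint density as the paper's Appendix C, but by a genuinely shorter route. The paper works entirely in Cartesian coordinates: it represents the nearest parent and the parent--daughter displacement as two independent centred planar Gaussian vectors (variance $\zeta^2$, resp.\ $\alpha^2$, per coordinate), performs a four-dimensional change of variables to $(R_2,\Theta,Z_2,\Psi)$ with Jacobian $R_2Z_2$, integrates out the two angles via the integral representation of $I_0$, and then obtains the marginal of $Z_2$ by integrating out $r$ term by term with the series representation of $I_0$. You instead write $f(r,z)=f_{R_2}(r)\,f_{Z_2\mid R_2}(z\mid r)$ and invoke Proposition~\ref{riceProp} for the conditional kernel; this is legitimate precisely because the marking is independent and the selection of the nearest parent depends on the positions alone, which you correctly identify as the crux (no size-biasing of the mark), and it avoids re-deriving the Rice density inside the proof. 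Your closed-form Bessel--Gaussian integral $\int_0^\infty r e^{-ar^2}I_0(br)\,dr=\tfrac{1}{2a}e^{b^2/(4a)}$ is exactly the computation the paper carries out via the series expansion, and your Gaussian-convolution argument for the marginal of $Z_2$ (modulus of the sum of two independent centred planar Gaussians) is a cleaner alternative that dispenses with $I_0$ altogether. One remark: your joint density carries $\alpha^2$ in the $z^2$-exponent and in the argument of $I_0$, which agrees with the derivation in Appendix C; the displayed equation \eqref{jointDensityFunction} has $\zeta^2$ there (and a spurious double minus sign), which appears to be a typographical slip in the statement rather than an error on your side, and likewise the Rayleigh scale of $Z_2$ should read $(\alpha^2+\zeta^2)^{1/2}$, as your computation shows.
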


\begin{proof}
See Appendix C, available in the supplemental material.
\end{proof}

To make a complete analysis of the coverage probability, we make use of the distribution of the random vector $(R_1,R_2,Z_2)$. Because $R_1$ is independent of $(R_2,Z_2)$, the joint PDF is the product of the PDF of $R_1$ with the joint PDF of $(R_2,Z_2)$. 

\subsection{Interference Field}

It is clear from the definition of the marks of $\Phi^{(2)}$ that, for the superposition model, we deal only with onmidirectional BSs. For $r>0$, denote by
\begin{subequations}\label{LPfunctions}
\begin{align}
& \mathcal{L}_{\tilde{f}}(s;r) := \mathbb{E}\left[ e^{-s \tilde{f}(r)}\right], \label{LPfunctionsa} \\
& \mathcal{L}_{\tilde{g}}(s;r,\rho) := \mathbb{E}\left[ e^{-s  \tilde{g}(r,Z_r)} \mathbf{1}_{\{Z_r>\rho\}}\right], \label{LPfunctionsb}
\end{align}
\end{subequations}
the LT of the signal generated by a single BS, and the LT of the signal generated by a cooperation pair, given that the radius of the daughter is larger than $\rho\geq 0$. When $\rho=0$, $\mathcal{L}_g(s;r,0)$ will be denoted just by $\mathcal{L}_g(s;r)$. For example, if we take $f(r)$ as in equation \eqref{signalsin}, we get  
\begin{equation}\label{LPsingles}
\mathcal{L}_f(s;r) = \frac{r^\beta}{sp+r^\beta }. 
\end{equation}
Recall that $p \frac{h_r} {r^\beta}$ and $p \frac{h_z} {z^\beta}$ are independent, exponential r.v.s, with parameter $\frac{r^\beta}{p}$ and $\frac{z^\beta}{p}$. In Table \ref{table} we find expressions for $\mathbb{E}[e^{-s\tilde{g}(r,z)}]$ in the \textbf{[NSC]}, \textbf{[OFF]}, and \textbf{[MAX]} cases. By remarking that 
\begin{equation*}
\mathcal{L}_{\tilde{g}}(s;r,\rho) = \mathbb{E}\left[\mathbb{E}\left[ e^{-s\tilde{g}(r,Z_r)} \Big|Z_r\right] \mathbf{1}_{\{Z_r>\rho\}}  \right],
\end{equation*}
we get analytical expressions for $\mathcal{L}_{\tilde{g}}(s;r)$ in the \textbf{[NSC]}, \textbf{[OFF]}, and \textbf{[MAX]}. For example, in the \textbf{[NSC]} we have that
\begin{equation*}
\begin{split}
\mathcal{L}_{\tilde{g}}(s;r,\rho) 
& = \frac{r^\beta}{sp+r^\beta} \int^\infty_\rho \frac{z^\beta}{sp+z^\beta} f(z|r)dr,
\end{split}
\end{equation*}
where $f(z|r)$ is the density function of the Rice r.v. $Z_r$ (see equation \eqref{riceDen}). For the more general distribution described by equation \eqref{TAIL}, it is also possible to give analytical formulas similar to $\mathcal{L}_{\tilde{g}}(s;r,\rho)$. The \textbf{[PH]} case is more complicated (see \cite[Lem. 3]{BacAStoGeo2015} for $cos(\theta_r-\theta_z)=1$). \\

We consider the interference fields generated by all the elements of $\hat{\Phi}^{(1)}$ and $\hat{\Phi}^{(2)}$ outside the radius $\rho\geq 0$
\begin{subequations} \label{interference}
\begin{align}
\hat{\mathcal{I}}^{(1)}(\rho) & = \sum_{x \in \hat{\Phi}^{(1)}, \|x\|>\rho} \tilde{f}(\|x\|), \label{interference2a} \\
\hat{\mathcal{I}}^{(2)}(\rho) & = \sum_{\stackrel{y \in \hat{\Phi}^{(2)}}{ \|y\|>\rho,Z_{\|y\|}>\rho}} \tilde{g}(\|y\|,Z_{\|y\|}). \label{interference2b}
\end{align}
\end{subequations}
When $\rho=0$, they are just denoted by $\hat{\mathcal{I}}^{(1)}$ and $\hat{\mathcal{I}}^{(2)}$. The total interference generated outside possibly different radii for the two processes, i.e. $\rho_1\geq 0$ and $\rho_2 \geq  0$ is
\begin{equation}\label{totalInt}
\hat{\mathcal{I}}(\rho_1,\rho_2):=\hat{\mathcal{I}}^{(1)}(\rho_1)+\hat{\mathcal{I}}^{(2)}(\rho_2).
\end{equation}
When $\rho_1=\rho_1=0$, we write only $\hat{\mathcal{I}}$.

The next Lemma is a well known result giving analytical representations to the LT of the PPP Interference fields \cite{BacBlaVol1}.

\begin{lem}\label{LTexpr}
The LTs of $\hat{\mathcal{I}}^{(1)}(\rho)$ and $\hat{\mathcal{I}}^{(2)}(\rho)$, denoted by $\mathcal{L}_{\hat{\mathcal{I}}^{(1)}}(s;\rho)$ and $\mathcal{L}_{\hat{\mathcal{I}}^{(2)}}(s;\rho)$, are given by   
\begin{subequations}\label{LPempty}
\begin{align}
& \mathcal{L}_{\hat{\mathcal{I}}^{(1)}} (s;\rho) = e^{-\lambda 2 \pi (1-\delta)\int^\infty_\rho \left(1-\mathcal{L}_f(s;r) \right)r dr}, \label{LPemptya}\\
& \mathcal{L}_{\hat{\mathcal{I}}^{(2)}} (s;\rho) = e^{- \pi \lambda \delta\int^\infty_{\rho}\left(1-\mathcal{L}_g(s;r,\rho)\right)r dr} \label{LPemptyb}.
\end{align}
\end{subequations}
\end{lem}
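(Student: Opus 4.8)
The plan is to read both identities off the Laplace functional of a Poisson point process, namely the master formula $\mathbb{E}\big[e^{-\sum_{x\in\Phi}v(x)}\big]=\exp\big(-\int_{\mathbb{R}^2}(1-e^{-v(x)})\,\Lambda(dx)\big)$, valid for a PPP $\Phi$ with intensity measure $\Lambda$ and any nonnegative measurable $v$ \cite{BacBlaVol1}, combined with the marking theorem to absorb the random propagation effects (and, for the pairs, the random daughter distance) into the integrand. Since $\hat{\Phi}^{(1)}$ and $\hat{\Phi}^{(2)}$ are by construction independent PPPs with intensities $(1-\delta)\lambda$ and $\tfrac{\delta}{2}\lambda$, the two transforms are computed separately, and the factorisation of the LT of the total field would follow from their independence.

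For \eqref{LPemptya} I would regard $\hat{\Phi}^{(1)}$ as a marked PPP whose marks are the i.i.d.\ fading variables carried by $\tilde{f}$, so that the summand $\tilde{f}(\|x\|)\mathbf{1}_{\{\|x\|>\rho\}}$ is a measurable function of location and mark. The marking theorem then turns the per-point factor into $1-\mathbb{E}\big[e^{-s\tilde{f}(\|x\|)}\big]=1-\mathcal{L}_f(s;\|x\|)$ for $\|x\|>\rho$ and $0$ otherwise, the mark expectation being exactly $\mathcal{L}_f(s;r)$ by definition \eqref{LPfunctionsa}. Integrating against $(1-\delta)\lambda\,dx$ over $\{\|x\|>\rho\}$ and passing to polar coordinates (the integrand is radially symmetric) produces the exponent $-\lambda 2\pi(1-\delta)\int_\rho^\infty(1-\mathcal{L}_f(s;r))r\,dr$, which is \eqref{LPemptya}.

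For \eqref{LPemptyb} I would view $\hat{\Phi}^{(2)}$ as an independently marked PPP of intensity $\tfrac{\delta}{2}\lambda$, where the mark of a parent $y$ consists of its Rice-distributed daughter distance $Z_{\|y\|}$ (Proposition \ref{riceProp}, Theorem \ref{TheoNN2}) together with the fading inside $\tilde{g}$. The summand is $\tilde{g}(\|y\|,Z_{\|y\|})\mathbf{1}_{\{\|y\|>\rho\}}\mathbf{1}_{\{Z_{\|y\|}>\rho\}}$, so by the marking theorem the per-parent factor for $\|y\|>\rho$ is $1-\mathbb{E}_{\mathrm{mark}}\big[e^{-s\tilde{g}(\|y\|,Z_{\|y\|})\mathbf{1}_{\{Z_{\|y\|}>\rho\}}}\big]$. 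The key step is to evaluate this mark expectation by splitting on $\{Z_{\|y\|}>\rho\}$ and its complement, which isolates the surviving contribution and identifies it with $\mathcal{L}_g(s;\|y\|,\rho)=\mathbb{E}\big[e^{-s\tilde{g}(\|y\|,Z_{\|y\|})}\mathbf{1}_{\{Z_{\|y\|}>\rho\}}\big]$ via definition \eqref{LPfunctionsb}. Integrating over $\{\|y\|>\rho\}$ against $\tfrac{\delta}{2}\lambda\,dy$ and converting to polar coordinates gives the prefactor $\pi\lambda\delta$ and the integrand $1-\mathcal{L}_g(s;r,\rho)$, i.e.\ \eqref{LPemptyb}.

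I expect the only delicate point to be the bookkeeping in the pairs case, where the summand vanishes on two distinct events, $\{\|y\|\le\rho\}$ and $\{Z_{\|y\|}\le\rho\}$. Splitting the mark expectation on whether the daughter lies outside or inside $\rho$ gives a per-parent factor of the form $1-\mathcal{L}_g(s;r,\rho)-\mathbb{P}(Z_r\le\rho)$, so the stated integrand $1-\mathcal{L}_g(s;r,\rho)$ is recovered precisely when the daughter almost surely lies outside $\rho$, in particular for $\rho=0$ (the regime feeding the coverage analysis), since $\mathbb{P}(Z_r\le 0)=0$ by continuity of the Rice law. For $\rho>0$ the extra boundary term must be carried through the computation, and verifying its treatment is the step requiring the most care; no property of the processes beyond their Poisson character, the independence of the marks, and the definitions \eqref{LPfunctionsa}--\eqref{LPfunctionsb} is otherwise needed.
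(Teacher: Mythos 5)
Your argument is the same one the paper relies on: Lemma \ref{LTexpr} is stated essentially without proof, as an immediate consequence of the Poisson and independent-marking properties of $\hat{\Phi}^{(1)}$ and $\hat{\Phi}^{(2)}$, and your application of the Laplace functional formula plus the marking theorem, with the intensities $(1-\delta)\lambda$ and $\tfrac{\delta}{2}\lambda$ producing the prefactors $2\pi\lambda(1-\delta)$ and $\pi\lambda\delta$ after passing to polar coordinates, is exactly that computation. Your closing remark on the pairs case is moreover a genuine and correct catch rather than a mere point of care: since the summand in \eqref{interference2b} also vanishes on $\{Z_{\|y\|}\le\rho\}$, the per-parent factor is $\mathbb{P}(Z_r>\rho)-\mathcal{L}_g(s;r,\rho)$ rather than $1-\mathcal{L}_g(s;r,\rho)$, so \eqref{LPemptyb} as displayed is exact only at $\rho=0$ (indeed at $s=0$ the displayed formula gives $\mathcal{L}_{\hat{\mathcal{I}}^{(2)}}(0;\rho)=e^{-\pi\lambda\delta\int_\rho^\infty\mathbb{P}(Z_r\le\rho)\,r\,dr}<1$ for $\rho>0$, which a Laplace transform cannot satisfy). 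No idea is missing from your proposal; carrying the extra term $\mathbb{P}(Z_r\le\rho)$ through the exponent is all that is needed to make the $\rho>0$ statement exact.
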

The Lemma uses the Poisson properties of $\hat{\Phi}^{(1)}$ and $\hat{\Phi}^{(2)}$. The expressions given in equations \eqref{LPempty} are the tools which allow us to make an entire analysis of the coverage probability.

As an example, if we replace equation \eqref{LPsingles} in equation \eqref{LPemptya}, for $\rho=0$ we get the analytical representation \cite{AndATract2011}
\begin{equation}\label{LTSinglesEx1}
\begin{split}
\mathcal{L}_{\hat{\mathcal{I}}^{(1)}}(s) &  = e^{-\frac{\lambda (1-\delta) 2\pi^2 (s p)^{2/\beta}}{\beta} csc\left(\frac{2 \pi}{\beta}\right)}, \\
\end{split}
\end{equation} 
where $csc(z)$ is the cosecant function. In the same fashion, it is possible to obtain expressions for $\mathcal{L}_{\hat{\mathcal{I}}^{(1)}} (s;\rho)$ and $\mathcal{L}_{\hat{\mathcal{I}}^{(2)}} (s;\rho)$. 

\subsection{Coverage Probability}

We can now make use of the PPP superposition model to evaluate the performance of the different cooperation (or coordination) types proposed above. The beneficial signal, received at the typical user from a single BS or a pair, will be denoted by $\tilde{f}(r)$ and $\tilde{g}(r,z)$, respectively. These may not be the same functions modeling the interference the typical user receives from other BSs. This is explained by the fact that the interference is the sum of the signals other BSs generate for their own serving users who are not located at the Cartesian origin. 

We consider two scenarios of user-to-BS association: 
\subsubsection{Fixed Single Transmitter}

Let us suppose that there is one BS serving the typical user, whose distance to the origin is fixed and known $r_0>0$. Moreover, it serves the typical user independently of the atoms from $\hat{\Phi}^{(1)}$ and $\hat{\Phi}^{(2)}$. Then the signal emitted to the typical user is $\tilde{f}(r_0)$, and the Signal-to-Interference-plus-Noise-Ratio (SINR) at the typical user is defined by 
\begin{equation}
\mathrm{SINR}:=\frac{\tilde{f}(r_0)}{\sigma^2+\hat{\mathcal{I}}}, 
\end{equation} 
where $\sigma^2$ is the additive Gaussian noise power at the receiver and $\hat{\mathcal{I}}$ is the total interference power (see equation \eqref{totalInt}).  

\begin{prop}
Suppose $\tilde{f}(r_0)$ as in \eqref{signalsin}. Then, the success probability is given by the expression 
\begin{equation}\label{SINRProbEF}
\mathbb{P}\left( \mathrm{SINR} >T \right) = e^{-\frac{T \sigma^2 r^\beta_0}{p} } \mathcal{L}_{\hat{\mathcal{I}}^{(1)}}\left(\frac{T r^\beta_0 }{p}\right)\mathcal{L}_{\hat{\mathcal{I}}^{(2)}}\left(\frac{T r^\beta_0}{p}\right).
\end{equation}
\end{prop}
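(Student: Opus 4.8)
The plan is to exploit the exponential law of the desired-signal fading $h_{r_0}$, which converts the coverage event into a Laplace-transform evaluation; this is the standard device for Rayleigh-fading cellular models, as in \cite{AndATract2011}. First I would substitute the definition $\tilde{f}(r_0)=p\,h_{r_0}/r_0^\beta$ from \eqref{signalsin} into the SINR and rearrange the coverage event so that the serving-link fading is isolated:
\begin{equation*}
\mathbb{P}(\mathrm{SINR}>T)=\mathbb{P}\!\left(h_{r_0}>\frac{T r_0^\beta}{p}\left(\sigma^2+\hat{\mathcal{I}}\right)\right).
\end{equation*}

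Next, because the serving BS is assumed to serve the typical user \emph{independently} of $\hat{\Phi}^{(1)}$ and $\hat{\Phi}^{(2)}$, and the family $(h_r)_{r>0}$ is i.i.d., the variable $h_{r_0}$ is independent of the total interference $\hat{\mathcal{I}}=\hat{\mathcal{I}}^{(1)}+\hat{\mathcal{I}}^{(2)}$. Conditioning on $\hat{\mathcal{I}}$ and using the complementary CDF $\mathbb{P}(h_{r_0}>t)=e^{-t}$ of the unit-mean exponential, the right-hand side becomes
\begin{equation*}
\mathbb{E}\!\left[e^{-\frac{T r_0^\beta}{p}\left(\sigma^2+\hat{\mathcal{I}}\right)}\right]=e^{-\frac{T\sigma^2 r_0^\beta}{p}}\,\mathbb{E}\!\left[e^{-\frac{T r_0^\beta}{p}\hat{\mathcal{I}}}\right],
\end{equation*}
where the deterministic noise term factors out of the expectation.

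Finally, I would invoke the independence of the two superposed PPPs $\hat{\Phi}^{(1)}$ and $\hat{\Phi}^{(2)}$ — hence of the interference fields $\hat{\mathcal{I}}^{(1)}$ and $\hat{\mathcal{I}}^{(2)}$ — to split the remaining expectation into a product of Laplace transforms,
\begin{equation*}
\mathbb{E}\!\left[e^{-s\hat{\mathcal{I}}}\right]=\mathbb{E}\!\left[e^{-s\hat{\mathcal{I}}^{(1)}}\right]\mathbb{E}\!\left[e^{-s\hat{\mathcal{I}}^{(2)}}\right]=\mathcal{L}_{\hat{\mathcal{I}}^{(1)}}(s)\,\mathcal{L}_{\hat{\mathcal{I}}^{(2)}}(s),
\end{equation*}
evaluated at $s=T r_0^\beta/p$, which combined with the previous display yields \eqref{SINRProbEF}. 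The computation is essentially routine; the only points deserving care are the two independence claims — that the serving fading $h_{r_0}$ is independent of the interference (guaranteed by the fixed-transmitter assumption together with the i.i.d. family $(h_r)_{r>0}$), and that $\hat{\mathcal{I}}^{(1)}$ and $\hat{\mathcal{I}}^{(2)}$ are independent (built into the superposition model by construction). No further integration is needed, since the closed forms of the two Laplace transforms are already supplied by Lemma \ref{LTexpr}.
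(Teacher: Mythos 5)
Your proof is correct and follows exactly the argument the paper relies on (the paper omits a proof for this proposition, but the identical computation appears in Appendix D for the function $\hat{G}(r)$): use the exponential law of $h_{r_0}$ to turn the coverage event into the Laplace transform of $\sigma^2+\hat{\mathcal{I}}$, then factor using the independence of $\hat{\mathcal{I}}^{(1)}$ and $\hat{\mathcal{I}}^{(2)}$. Both independence claims you flag are indeed the only substantive points, and both hold by construction of the superposition model.
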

The last proposition allows us to evaluate the SINR directly with the help of equations $\eqref{LPemptya}$ and $\eqref{LPemptyb}$ for $\rho=0$.

\subsection{Closest Transmitter from $\hat{\Phi}^{(1)}$ or $\hat{\Phi}^{(2)}$ (and his daughter)}

We consider that the typical user is connected to the BS at $R_1$ (see subsection \ref{nearestN}), or to the cooperating cluster (parent,daughter) at $(R_2,Z_2)$. The previous association depends on which one of them is closer to the typical user. If $R_1<\min\{R_2,Z_2\}$, the single BS at $R_1$ serves the typical user, and it emits the signal $\tilde{f}(R_1)$. In the opposite case, if $R_2\leq R_1$ or if $Z_2 \leq R_1$, the cooperating pair at $(R_2,Z_2)$ serves the user, and it emits the signal $\tilde{g}(R_2,Z_2)$. All the BSs not serving the typical user generate interference. Thus, 
\begin{equation}\label{SINR}
\mathrm{SINR} := 
\begin{cases}
\frac{\tilde{f}\left( R_1 \right)}{\sigma^2+\hat{\mathcal{I}}(R_1,R_1)} \ \ ; \ \ R_1<\min\{R_2,Z_2\}, \\
\frac{\tilde{g}\left(R_2,Z_2 \right)}{\sigma^2+\hat{\mathcal{I}}(R_2,R_2)} \ \ ; \ \ R_2<\min\{R_1,Z_2\},\\ 
\frac{\tilde{g}\left(R_2,Z_2 \right)}{\sigma^2+\hat{\mathcal{I}}(Z_2,R_2)} \ \ ; \ \ Z_2<\min\{R_1,R_2\}.\\ 
\end{cases}
\end{equation}
From equation \eqref{interference2b}, recall that once a \textit{parent} generates interference, its respective \textit{daughter} does it along with it. For the first term of the preceding equation, $\hat{\mathcal{I}}(R_1,R_1)$ considers that all the singles and \textit{parents} lying outside $R_1$ generate interference. For the the second term we use a similar argument. For the third one, the argument is a little bit more delicate. The r.v. $\hat{\mathcal{I}}(Z_2,R_2)$ considers that all the singles lie outside the radius $Z_2$, and all of them generate interference. Nevertheless, only the \textit{parents} outside $R_2$ generate interference (the parent associated to $R_2$ lies outside $Z_2$). Note that, the way this user-to-BS-association is defined, for the three cases, this is the only way to assure that all the BSs not serving the typical user generate interference.   
\begin{prop}
\label{CovProbNN}
Suppose $f(r)$ and $g(r,z)$ follow equations \eqref{signalsin} and \eqref{TAIL}. Then, there exist explicit functions $G:[0,\infty) \rightarrow \mathbb{R}^+$ and $H,K:[0,\infty)\times [0,\infty) \rightarrow \mathbb{R}^+$ such that 

\begin{equation*}
\label{CV2}
\begin{split}
\mathbb{P} & \left( \mathrm{SINR} >T \right) = \mathbb{E}[G(R_1)]+\mathbb{E}[H(R_2,Z_2)]+\mathbb{E}[K(R_2,Z_2)]. 
\end{split}
\end{equation*}
\end{prop}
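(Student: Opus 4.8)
The plan is to split the event $\{\mathrm{SINR} > T\}$ according to the three mutually exclusive serving configurations appearing in \eqref{SINR}. Since $R_1$, $R_2$ and $Z_2$ are continuous random variables, ties occur with probability zero, so the events $\{R_1 < \min\{R_2,Z_2\}\}$, $\{R_2 < \min\{R_1,Z_2\}\}$ and $\{Z_2 < \min\{R_1,R_2\}\}$ form an almost sure partition of $\Omega$. Writing $\mathbb{P}(\mathrm{SINR}>T)$ as the sum of the three corresponding joint probabilities, I would identify the first summand with $\mathbb{E}[G(R_1)]$, the second with $\mathbb{E}[H(R_2,Z_2)]$ and the third with $\mathbb{E}[K(R_2,Z_2)]$.

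For the single-served case I would condition on $R_1=r$. Because $\tilde{f}(r)$ as in \eqref{signalsin} is exponential with parameter $r^\beta/p$ and independent of the interference, the conditional success probability equals $e^{-T\sigma^2 r^\beta/p}$ times the Laplace transform of $\hat{\mathcal{I}}(r,r)$ evaluated at $Tr^\beta/p$. As $\hat{\Phi}^{(1)}$ and $\hat{\Phi}^{(2)}$ are independent, this transform factorises into $\mathcal{L}_{\hat{\mathcal{I}}^{(1)}}(\,\cdot\,;r)\,\mathcal{L}_{\hat{\mathcal{I}}^{(2)}}(\,\cdot\,;r)$, supplied by Lemma \ref{LTexpr}. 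Given $R_1=r$, the constraint $R_1<\min\{R_2,Z_2\}$ reduces to $\min\{R_2,Z_2\}>r$; since $R_1$ is independent of $(R_2,Z_2)$ this contributes the factor $\mathbb{P}(\min\{R_2,Z_2\}>r)$, obtained from the joint density \eqref{jointDensityFunction}. Collecting these factors defines $G(r)$, and the first summand equals $\mathbb{E}[G(R_1)]$ under the Rayleigh law of $R_1$.

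For the two pair-served cases I would instead condition on $(R_2,Z_2)=(r,z)$ and exploit the tail form \eqref{TAIL}: conditionally, $\mathbb{P}(\tilde{g}(r,z)>T(\sigma^2+\hat{\mathcal{I}}))=\sum_i c_i(r,z)e^{-d_i(r,z)T(\sigma^2+\hat{\mathcal{I}})}$, so taking the expectation over the interference turns each summand into $c_i(r,z)e^{-d_i(r,z)T\sigma^2}$ times a product of interference Laplace transforms from \eqref{LPempty}. The two cases differ only through the pair of guard radii: when the parent is closest one meets $\hat{\mathcal{I}}(R_2,R_2)$, giving $\mathcal{L}_{\hat{\mathcal{I}}^{(1)}}(\,\cdot\,;r)\,\mathcal{L}_{\hat{\mathcal{I}}^{(2)}}(\,\cdot\,;r)$ together with the indicator $\mathbf{1}_{\{r<z\}}\mathbb{P}(R_1>r)$; when the daughter is closest one meets $\hat{\mathcal{I}}(Z_2,R_2)$, giving $\mathcal{L}_{\hat{\mathcal{I}}^{(1)}}(\,\cdot\,;z)\,\mathcal{L}_{\hat{\mathcal{I}}^{(2)}}(\,\cdot\,;r)$ together with $\mathbf{1}_{\{z<r\}}\mathbb{P}(R_1>z)$. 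These expressions define $H$ and $K$, and integrating against \eqref{jointDensityFunction} produces $\mathbb{E}[H(R_2,Z_2)]$ and $\mathbb{E}[K(R_2,Z_2)]$.

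The step I expect to demand the most care is justifying that, after conditioning on a serving distance, the interference Laplace transforms are exactly the unconditioned ones of Lemma \ref{LTexpr}. This rests on two facts: $\hat{\Phi}^{(1)}$ and $\hat{\Phi}^{(2)}$ are independent, and the restriction of a PPP to the complement of a disc is independent of its restriction to the disc. Hence $\hat{\mathcal{I}}^{(1)}(z)$ depends only on singles outside $B(0,z)$ and is therefore independent of $\{R_1>z\}$, while conditioning on the nearest parent lying at $R_2=r$ with mark $Z_2=z$ leaves the parents outside $B(0,r)$ distributed as the original PPP by Slivnyak's theorem, the serving pair being excluded by the strict guard radius. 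Keeping the two guard radii distinct in the daughter-closest case, and verifying that every non-serving atom is counted exactly once, is the principal bookkeeping hurdle.
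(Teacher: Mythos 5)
Your proposal is correct and follows essentially the same route as the paper's proof in Appendix D: the same almost-sure partition into the three ordering events, the same conditioning on $R_1$ (resp. $(R_2,Z_2)$), the same use of the exponential law of $\tilde{f}$ and the tail form \eqref{TAIL} to produce products of interference Laplace transforms, and the same factors $\mathbb{P}(\min\{R_2,Z_2\}>r)$, $(1-F_{R_1}(r))\mathbf{1}_{\{z>r\}}$, $(1-F_{R_1}(z))\mathbf{1}_{\{r>z\}}$ assembling $G$, $H$, $K$. (Incidentally, you correctly call the three events mutually exclusive, where the paper's text misstates them as ``mutually independent''.)
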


\begin{proof}
See Appendix D, available in the supplemental material.
\end{proof}

The expressions for $G(r)$, $H(r,z)$, and $K(r,z)$ are given by  
\begin{equation*}
\begin{split}
& G(r) \ \ \ =  \tilde{G}(r) \hat{G}(r),  \\
& H(r,z)  =  \tilde{H}(r,z) \hat{H}(r,z),  \\ 
& K(r,z)  =  \tilde{K}(r,z) \hat{K}(r,z),
\end{split}
\end{equation*} 
where 
\begin{equation*}
\begin{split}
& \tilde{G}(r) \ \ \  := 1-F_{R_2}(r)-F_{Z_2}(r)+F_{R_2,Z_2}(r,r), \\
& \tilde{H}(r,z) := (1-F_{R_1}(r))\mathbf{1}_{\{z>r\}}, \\
&  \tilde{K}(r,z) := (1-F_{R_1}(z))\mathbf{1}_{\{r>z\}},
\end{split}
\end{equation*}
and
\begin{equation*}
\begin{split}
\hat{G} & (r) := e^{\frac{- T r^\beta}{p} \sigma^2 } \mathcal{L}_{\hat{\mathcal{I}}^{(1)}}\Bigg(\frac{Tr^\beta}{p};r \Bigg) \mathcal{L}_{\hat{\mathcal{I}}^{(2)}}\Bigg(\frac{Tr^\beta}{p};r \Bigg), \\
\hat{H} & (r,z) := \\
& \sum^n_{i=1}c_i \big(r,z\big) e^{-T d_i(r,z)\sigma^2} \mathcal{L}_{\hat{\mathcal{I}}^{(1)}}\big(Td_i(r,z);r\big)\mathcal{L}_{\hat{\mathcal{I}}^{(2)}}\big(Td_i(r,z);r\big) \\
\hat{K} & (r,z) :=  \\ 
& \sum^n_{i=1}c_i \big(r,z\big) e^{-T d_i(r,z)\sigma^2} \mathcal{L}_{\hat{\mathcal{I}}^{(1)}}\big(Td_i(r,z);z\big)\mathcal{L}_{\hat{\mathcal{I}}^{(2)}}\big(Td_i(r,z);r\big). 
\end{split}
\end{equation*}
We can find expressions for the deterministic functions $\mathcal{L}_{\hat{\mathcal{I}}^{(1)}}(s;\rho)$ and $\mathcal{L}_{\hat{\mathcal{I}}^{(2)}}(s;\rho)$ in equation \eqref{LPempty}, and the functions $c_i(r,z)$ and $d_i(r,z)$ are those from equation \eqref{TAIL}. Finally, the functions $F_{R_1}(r)$, $F_{R_2}(r)$, and $F_{Z_2}(r)$ are the CDF of the random variables $R_1$, $R_2$, $Z_2$, which are Rayleigh distributed (see Section \ref{nearestN}), and $F_{R_2,Z_2}(r,z)$ is the CDF of the random vector $(R_2,Z_2)$, which can be obtained with equation \eqref{jointDensityFunction}. 

\textit{Remark:} we can either calculate the expression of the coverage probability from equation \eqref{CV2} via Monte Carlo simulations (because we know the distribution of $R_1$, $R_2$, and $(R_2,Z_2)$), or via numerical integration, using the formulas  
\begin{equation*}
\begin{split}
& \mathbb{E}[G(R_1)] \ \ \ \ = \int^\infty_0 G(r)f_{R_1}(r) dr \\
& \mathbb{E}[H(R_2,Z_2)] = \int^\infty_0 \int^\infty_0 H(r,z)f_{R_2,Z_2}(r,z)dzdr \\
& \mathbb{E}[K(R_2,Z_2)] = \int^\infty_0 \int^\infty_0 K(r,z)f_{R_2,Z_2}(r,z)dzdr,
\end{split}
\end{equation*}
where $f_{R_1}(r)$ and $f_{R_2,Z_2}(r,z)$ are the density functions of $R_1$ and $(R_2,Z_2)$ (again, see equation \eqref{jointDensityFunction}). 

\section{Numerical Evaluation} 
\label{SecVI}
We consider a density for the BSs $\lambda=0.25$ [$km^2$], which corresponds to an average closest distance of $(2\sqrt{\lambda})^{-1}=1$ [km] between stations. We also consider that the power is $p=1$ [$Watt$].

We first illustrate the validity of the expressions in Theorem \ref{Expected}. Specifically, we compare the expressions in equations \eqref{expectedSingles} and \eqref{expectedDoubles} with simulations.  

\subsection{Expected value of the Interference field}
Given a fixed $\beta>2$, define the random field $f(x)=\frac{h_{\|x\|}}{\|x\|^\beta}\mathbf{1}_{\{\|x\|>R\}}$, where $R$ is a positive number and the family $(h_r)_{r>0}$ is defined in Section \ref{SecIII}. The indicator function serves to calculate the interference generated by the singles, outside a ball centred at $0$ and radius $R$. With the aid of $f(x)$, define $\mathcal{I}^{(1)}$ as in equation \eqref{I1}. Using Theorem \ref{Expected}, the numerical evaluation of the expected value of $\mathcal{I}^{(1)}$ is given in figure \ref{ExpectedPlots}. The expression in \eqref{expectedSingles} gives almost identical results with the simulations. 

Similarly, with the aid of the random field $g(x,y)\textbf{1}_{\{\|x\|,\|y\|>R\}}$, define $\mathcal{I}^{(2)}$ as in equation \eqref{I2}. For the numerical evaluation, we consider the two cases $\textbf{[NSC]}$ and $\textbf{[MAX]}$. The interference from $\textbf{[MAX]}$ is always smaller than that one from $\textbf{[NSC]}$, since it is received only from one of the two BSs of each pair, while the other is silent. Figure \ref{ExpectedPlots} shows that the numerical evaluation of the expression in \eqref{expectedDoubles} gives almost identical results with the simulations. Remark also that, for $\beta=4$, the two scenarios do not numerically defer much.

For the coverage probability analysis, we evaluate only the noiseless scenario $\mathbb{P}(\mathrm{SIR}>T)$ (with $\sigma^2=0$). We compare the $\mathrm{SIR}$ coverage performance from the Nearest Neighbor and the superposition models against the model without cooperation \cite{AndATract2011}.  We consider both cases $(a)$ with fixed transmitter, and $(b)$ where the association is done with the (almost) closest cluster, as in (\ref{SINR}). In this second case, for the Nearest Neighbor model, the user-cluster association is done differently than in the superposition model, as follows. The typical user is served by the closest BS of the original point process $\Phi$, and by its mutually nearest neighbor, if one exists. The cooperative signals are those proposed in \eqref{coopFunction}.

\begin{figure}[htbp] 
\centering
\subfigure[]{\includegraphics[trim = 30mm 95mm 35mm 95mm, clip,width=0.33\textwidth]{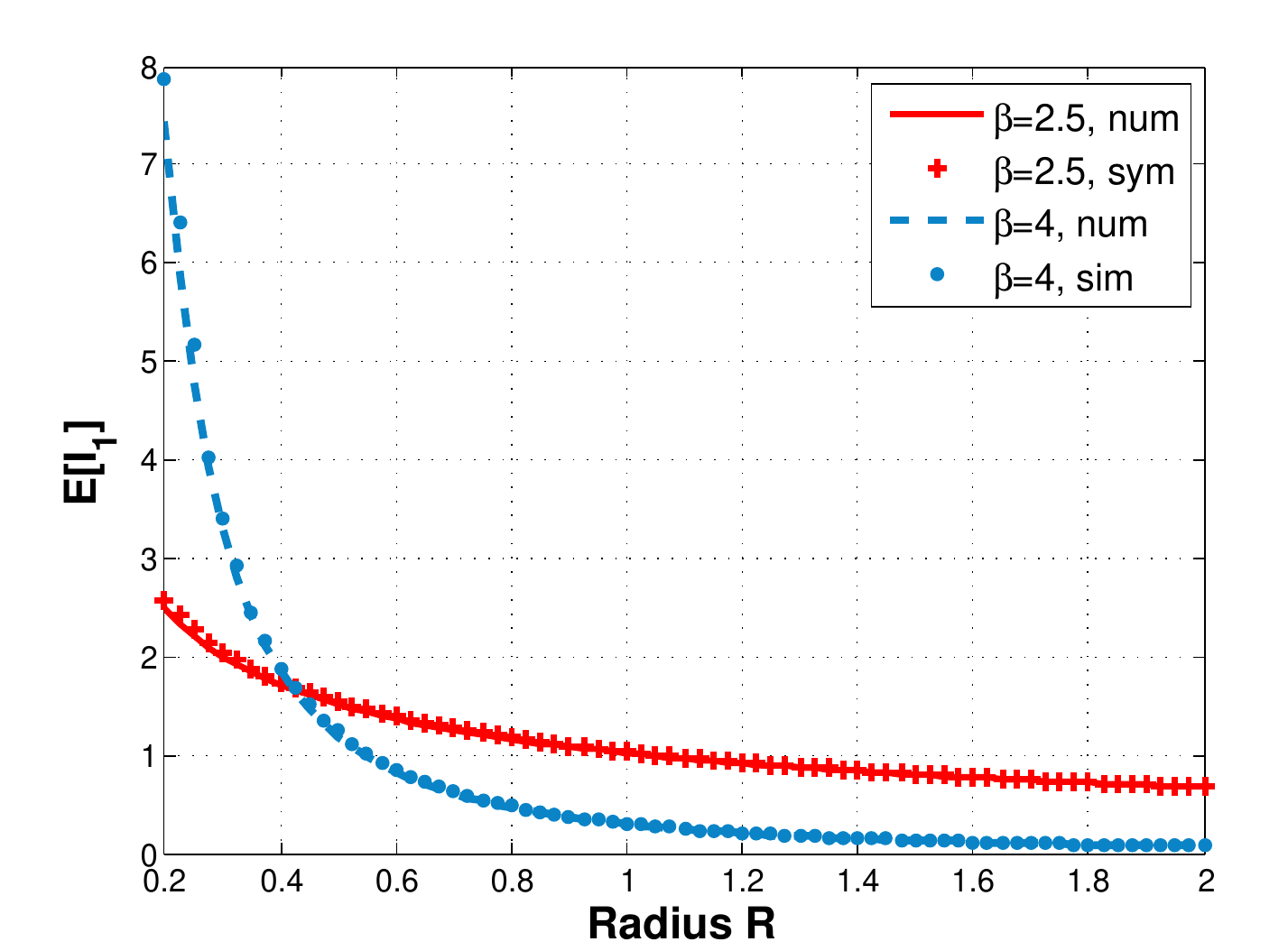}}
\subfigure[]{\includegraphics[trim = 0mm 0mm 0mm 0mm, clip,width=0.35\textwidth]{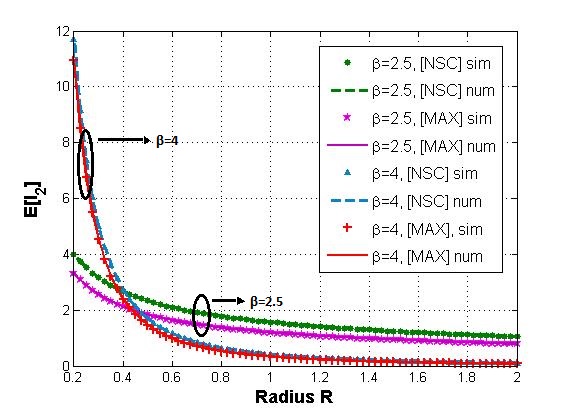}} 
\caption{(a) Expected value of the interference generated by the single atoms, outside a radius $R$, (b) and by the cooperative pairs, outside a radius $R$. }   \label{ExpectedPlots}
\end{figure}

\subsection{Closeness of the approximation by the PPP superposition}

We compare in Fig. \ref{NNVsSPCom} the coverage probabilities, over the threshold $T$, for the Nearest Neighbor model and the superposition model, in both association cases. As we can see, the curves are very close in both cases. For the \textit{"closest" transmission cluster} the difference is more evident, because on the one hand the superposition model does not take into account the repulsion between clusters (singles or pairs), and on the other hand the association of a cluster to the user as done in \eqref{SINR} for the superposition model, sometimes misses the actual closest daughter to the origin (which is not necessarily the one at $Z_2$). This never happens the way we choose the closest cluster in the Nearest Neighbor model. Hence, the approximative model underestimates the coverage benefits in the closest cluster association. 

\begin{figure}[htbp] 
\centering
\subfigure[]{\includegraphics[trim = 30mm 95mm 35mm 95mm, width=0.34\textwidth]{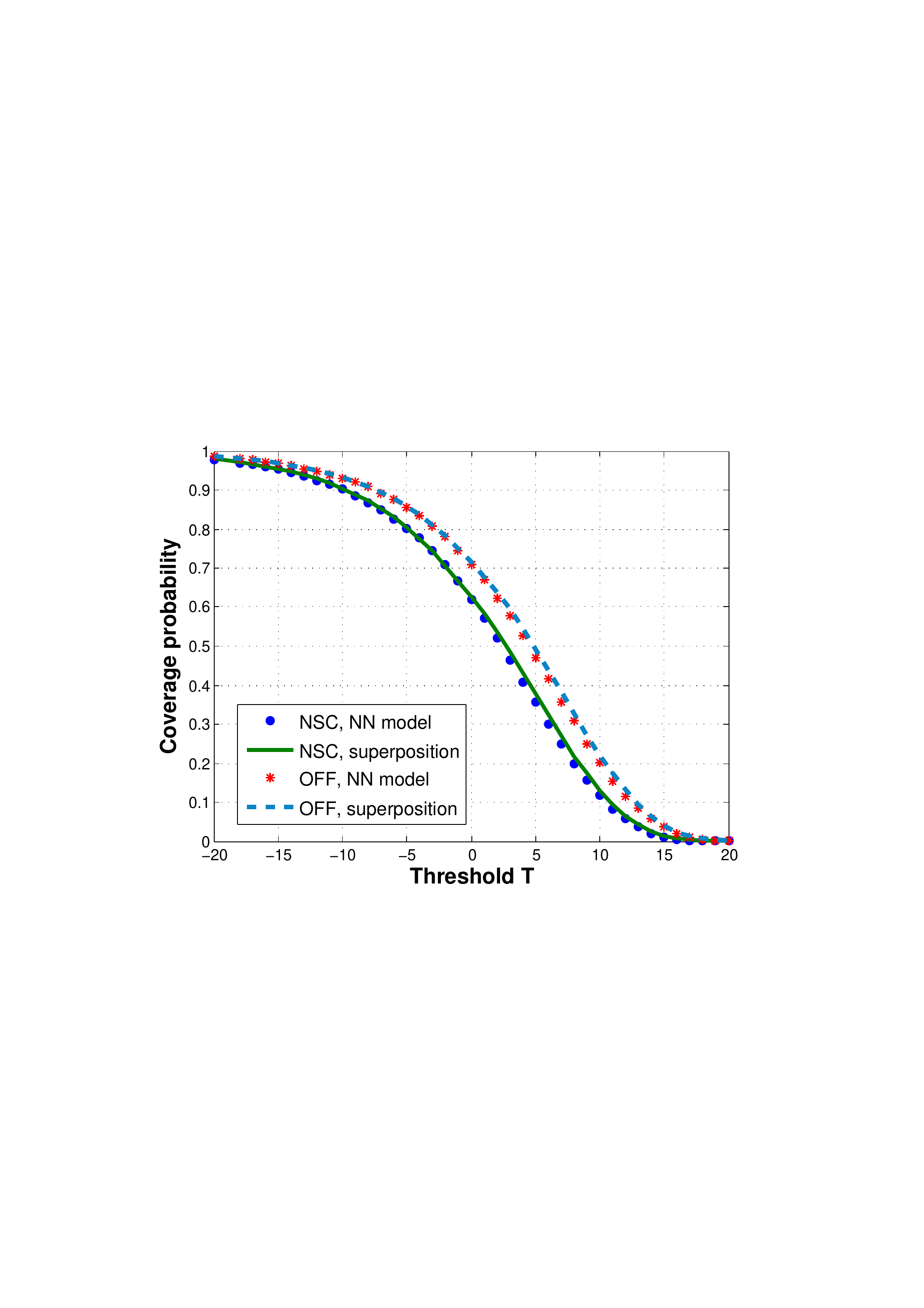}}
\subfigure[]{\includegraphics[trim = 30mm 95mm 35mm 95mm,width=0.34\textwidth]{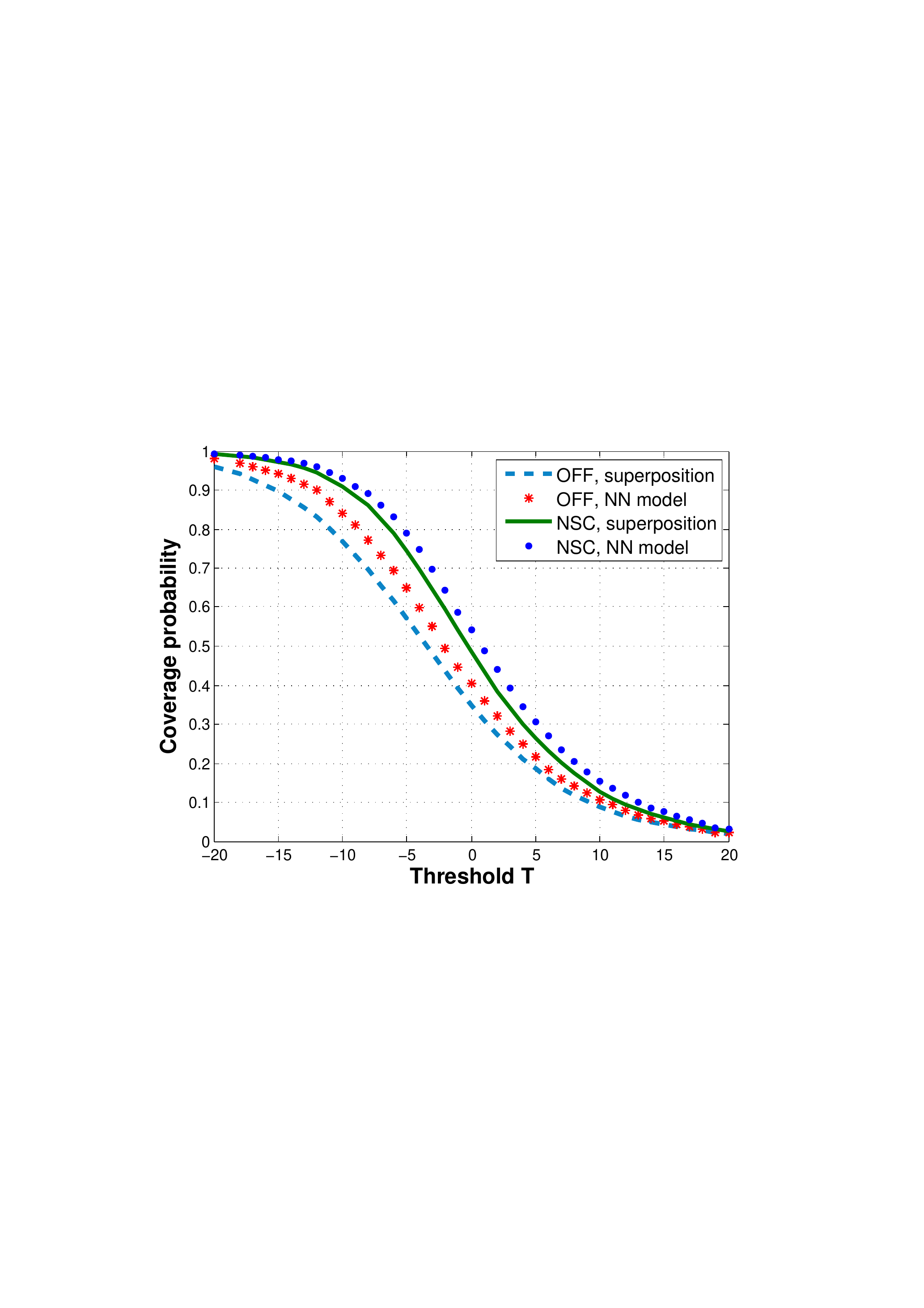}} 
\caption{Closeness of the approximation between the superposition and the Nearest Neighbor model, $\beta=3$. (a) Fixed transmitter and (b) closest transmitter.} \label{NNVsSPCom} 
\end{figure}

\subsection{Validity of the numerical analysis}
In Figure \ref{AnalVsSim}, we compare the plots of the coverage probability from the numerical integration, against simulations, of the analytic formula presented in Proposition \ref{CovProbNN}. 
As we can see, they 
fit perfectly, both for larger values of $\beta$, like $\beta=4$, and for critical ones, like $\beta=2.5$.

\begin{figure}[htbp] 
\centering
\subfigure[]{\includegraphics[trim = 30mm 95mm 35mm 95mm,width=0.34\textwidth]{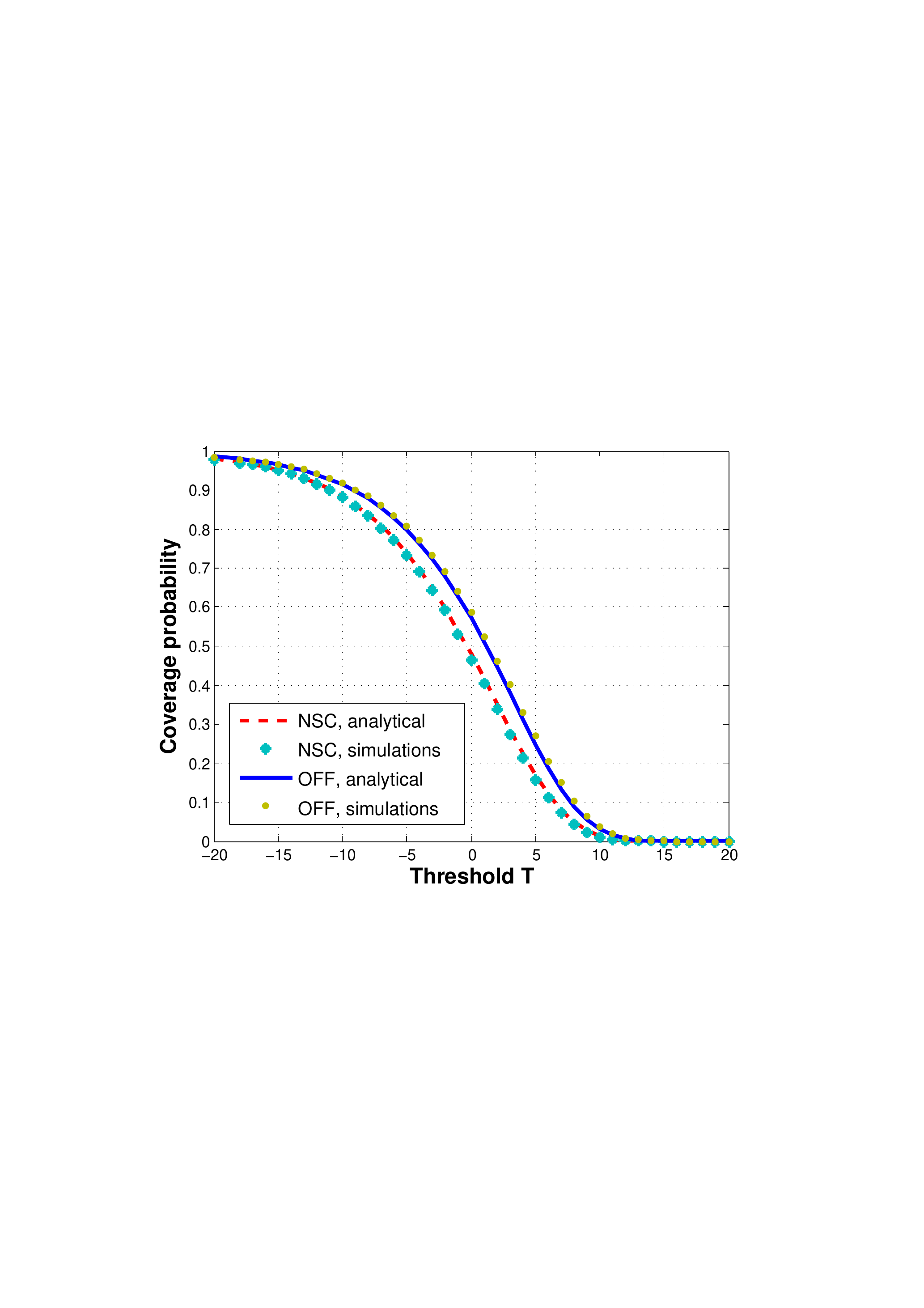}}
\subfigure[]{\includegraphics[trim = 30mm 95mm 35mm 95mm,width=0.34\textwidth]{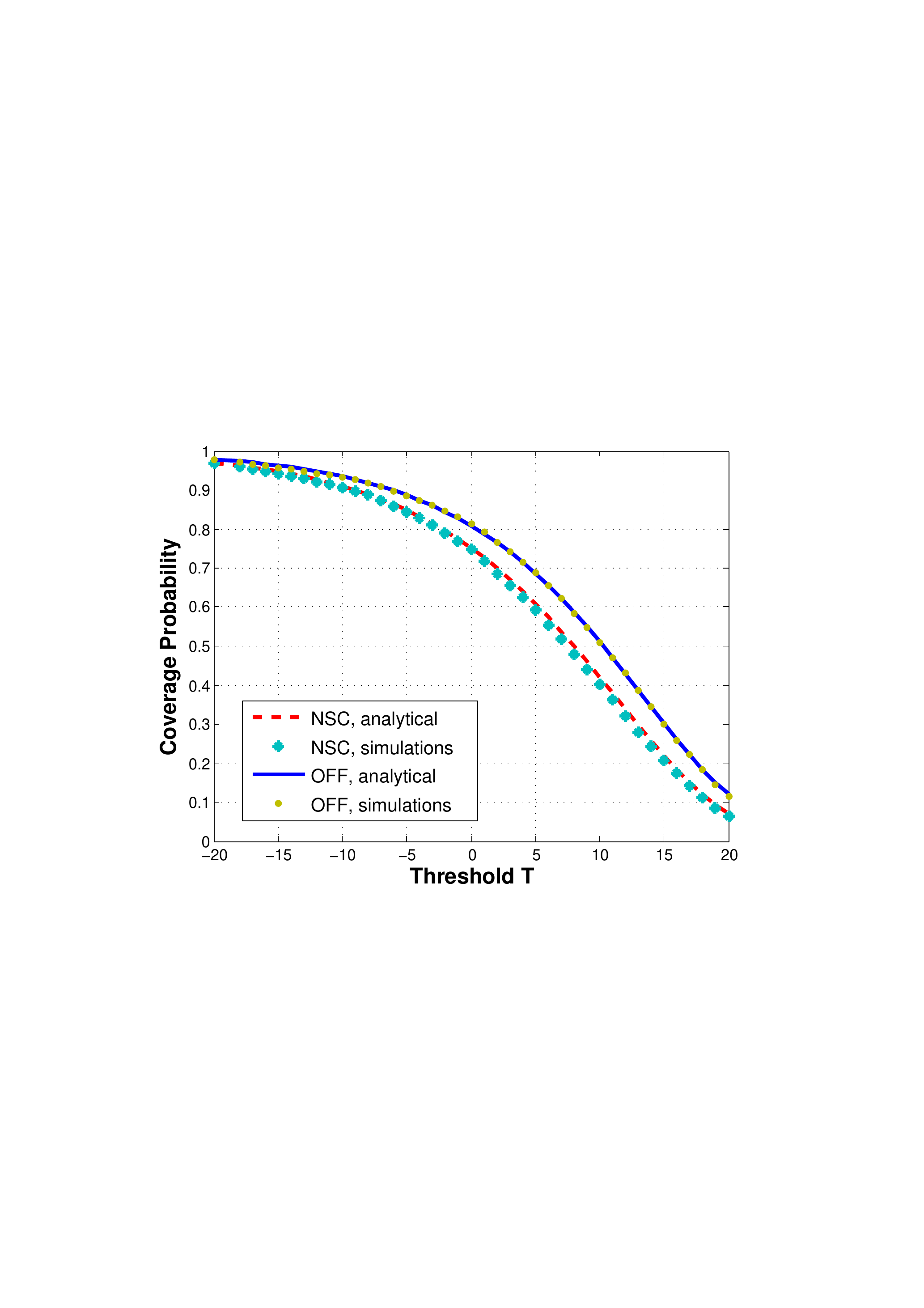}} 
\caption{Validity of the analysis for the superposition model for the fixed single transmitter. (a) $\beta=2.5$ (b) $\beta=4$.} \label{AnalVsSim}  
\end{figure}

\subsection{Cooperation gains}

\begin{figure}[htbp] 
\centering
\subfigure[]{\includegraphics[trim = 30mm 95mm 35mm 95mm,width=0.34\textwidth]{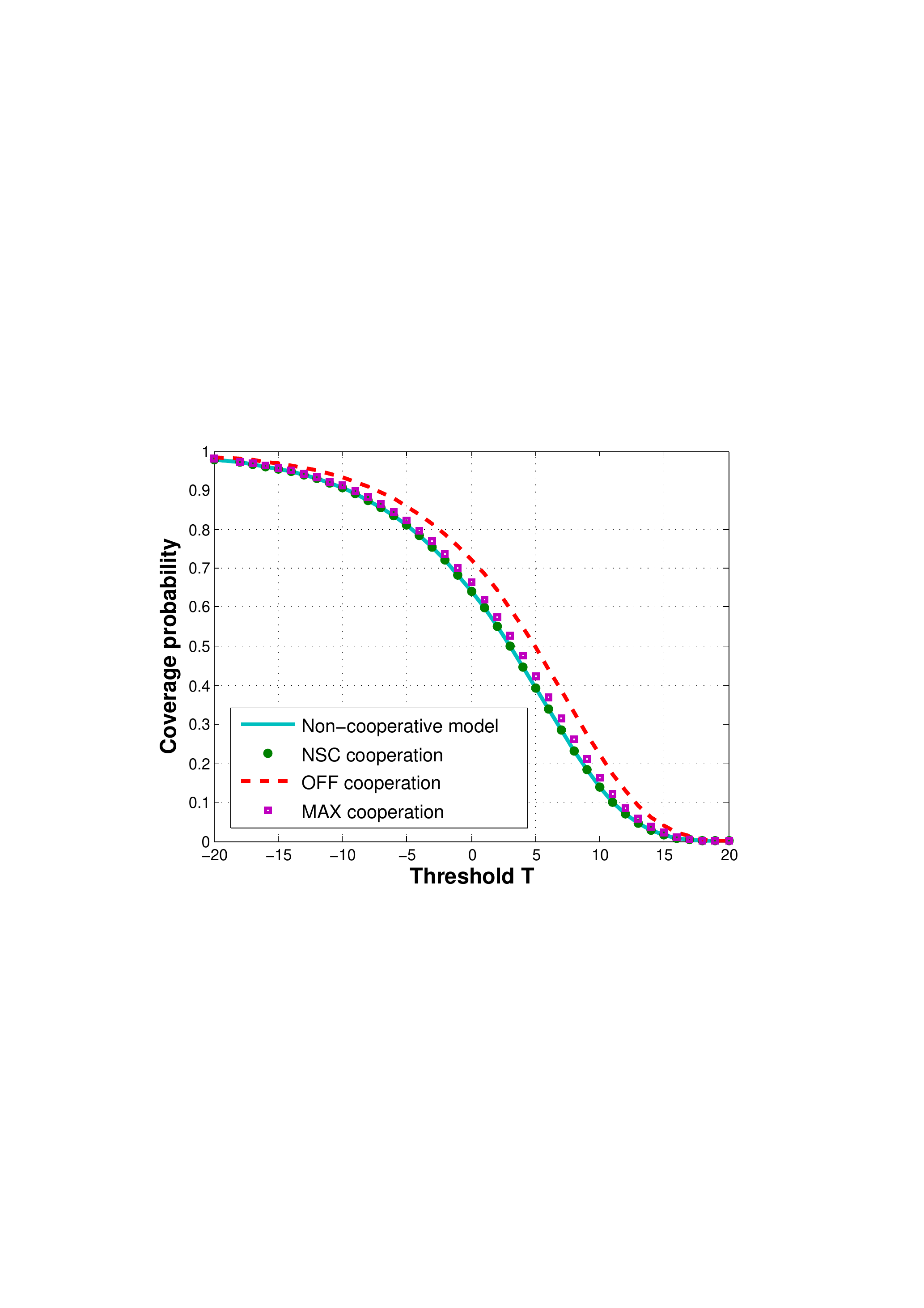}}
\subfigure[]{\includegraphics[trim = 30mm 95mm 35mm 95mm,width=0.34\textwidth]{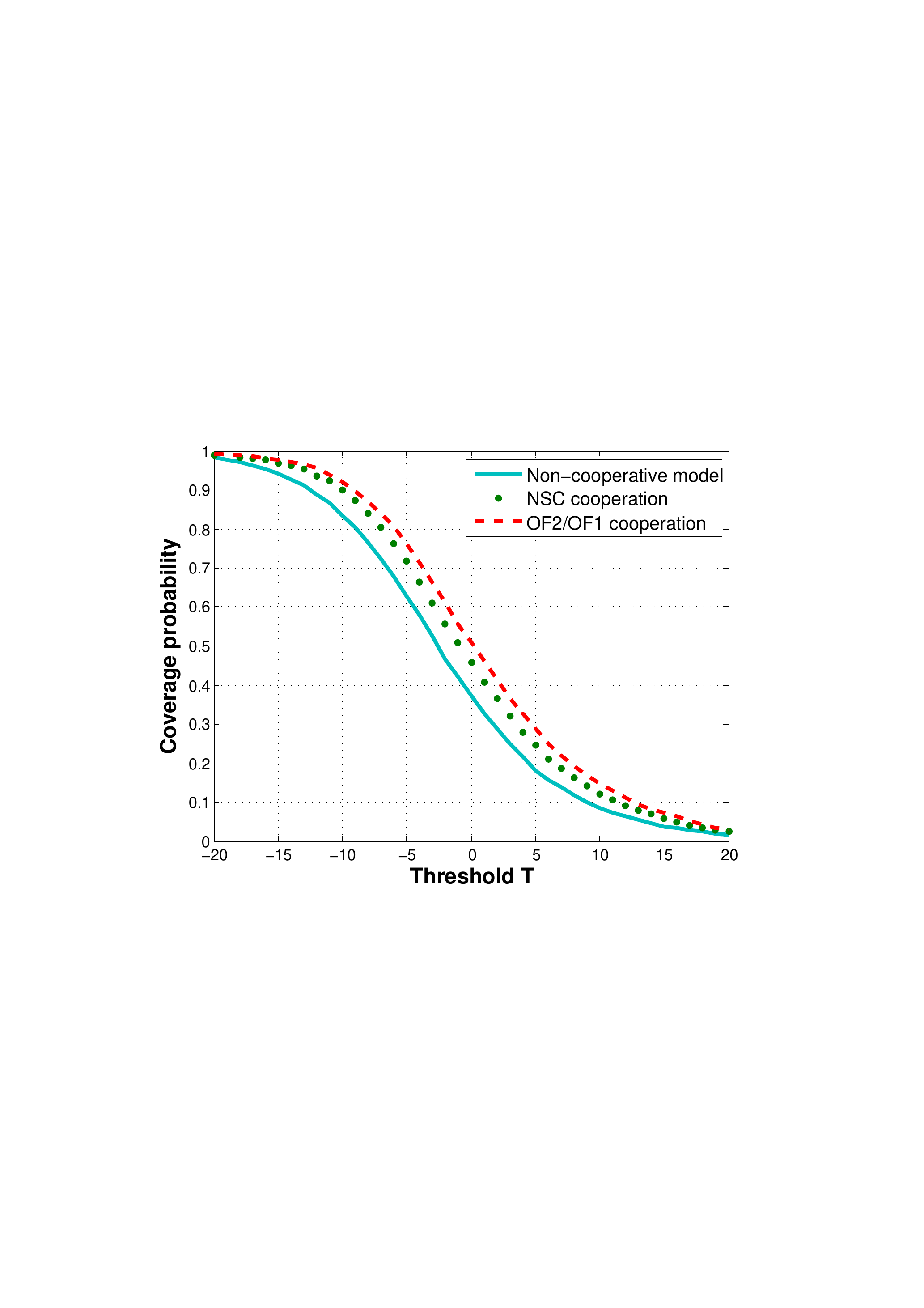}} 
\caption{Coverage gains from Nearest Neighbor cooperation compared to no cooperation, $\beta=3$. (a) Fixed transmitter and (b) closest transmitter.} \label{CoopVsNoCoppNN}
\end{figure}

The possible coverage gains, compared to the non-cooperative network, in the case of an association with a fixed transmitter, are shown in Fig. \ref{CoopVsNoCoppNN}(a). As a first remark, for the fixed association, the \textbf{[NSC]} case for the Nearest Neighbor model and the non-cooperative model are practically the same. The coverage probability in the \textbf{[MAX]} case is close to the coverage probability in the \textbf{[NSC]} case. This suggests that the strongest signal in each cooperating pair influences interference the most. For the \textbf{[OFF]} case there is a $10\%$ benefit compared to the non-cooperative case, in the largest part of the domain in $T$.

The gains are also evaluated in the case of association with the closest cluster. For the SINR, let us call \textbf{[MAX/OFF]} the case where the closest cluster emits a signal to the typical user according to \textbf{[MAX]}, i.e. only the max signal is sent, while the pairs generate interference, according to \textbf{[OFF]}. The idea is that when all network pairs choose \textbf{[MAX]} cooperation for their own users, this choice of one-station-out-of-two is random for the typical user point of view. This \textbf{[MAX/OFF]} case shows a $15\%$ absolute gain from the non cooperative case, which is around $9\%$ for the \textbf{[NSC]} (see Fig. \ref{CoopVsNoCoppNN}(b)). This gain is almost equal with the dynamic clustering in \cite{BacAStoGeo2015}.

\section{Model Benefits and Future Research}
\label{SecVII}

The static grouping model presented in this paper has the following network benefits. 

\begin{itemize}
\item By definition, the MNNR reduces the generated interference.
\item The percentage of stations that are in cooperative pairs and the percentage of those left single (both for the PPP and the hexagonal grid model) show that the MNNR is a reasonable grouping strategy.  
\item Our approach can be applied to many cooperation variations, ranging from simple coordination of the BSs in group, to fully cooperative transmission using knowledge of the channel states. 
\end{itemize}
The mathematical innovations are the following.

\begin{itemize}
\item The derivation of structural properties for $\Phi^{(1)}$ and $\Phi^{(2)}$ by classical Stochastic Geometry tools comes naturally.
\item Two repulsive point processes can be constructed in a natural way. 
\item Based on simple geometrical concepts, the MNNR can be easily implemented in any programming language. This simplifies the numerical evaluation of the system.
\item The superposition approach makes possible a complete analysis of the coverage probability, which can be extended to other performance metric, such as the throughput.        
\item The superposition is equal in distribution to a Gauss-Poisson process. However, the mark related to each pair is chosen in an original way that allows the derivation of simple formulas, that cannot be obtained by directly using a Gauss-Poisson.
\end{itemize} 

As stated in Theorem \ref{convergence}, the resulting edge effects of the MNNR are negligible for a PPP. This makes possible a finite window analysis for the interference generated by the singles and pairs. Part of the current research of the authors is the convergence rate for this result, as well as bounds and the respective convergence rate for Theorem \ref{LaplaceTransform}.

The static cluster methodology proposed is based on the Euclidean distance between BSs, and fixes groups of singles and pairs over time. This approach doest not allow for flexibility in the way the groups are created. We could imagine introducing different methods, for the formation of the desirable clusters, that take into account the availability of each BS as well as the Euclidean disatnce among them. Analysis and applications of this type of cooperation can be considered as future extensions of the MNNR. 
 
\section{Conclusions}
\label{SecVIII} 

The MNNR is a reasonable methodology to define single BSs and cooperative pairs. In spite of the analytical difficulties (due to overlapping discs with different radii), it is possible to use an approximate model and provide a complete analysis of SINR-related metrics. Moreover, it could be possible to generalise the idea and include clusters of size greater than two. The analysis should remain the same and similar results could be derived via Monte Carlo simulations.

The coverage benefits of the MNNR, with respect to the non-cooperative case, can reach a $15\%$ of absolute gain, although around $32\%$ of stations is single and do not cooperate. Similar gains can be achieved by some dynamic clustering methodologies. For the MNNR, this is impressive, considering that only $62\%$ of the BSs cooperate. Different kinds of cooperative signals reported different coverage benefits. Thus, cooperation benefits fundamentally depend on the choice of the grouping method, the allowed maximum cluster size, as well as the appropriate cooperation signals. This works provides an important step towards resolving this complex problem.

%
%
%
%


%

\ifCLASSOPTIONcompsoc
 \section*{Acknowledgment}
The authors would like to thank Aur\'elien Vasseur for the help on the proof of Theorem \ref{convergence}, and for the suggested related literature.

\ifCLASSOPTIONcaptionsoff
  \newpage
\fi



%
\bibliographystyle{unsrt}
\footnotesize
\bibliography{FixClust}

\begin{thebibliography}{10}

\bibitem{CoorMulConIrmer}
R.~Irmer, H.~Droste, P.~Marsch, S.~Brueck, H.~P. Mayer, L.~Thiele, and
  V.~Jungnickel.
\newblock Coordinated {M}ultipoint: {C}oncepts, {P}erformance, and {F}ield
  {T}rials {R}esults.
\newblock {\em IEEE Communications Magazine}, 49:102--111, 2011.

\bibitem{TheRolSmallCellsJungV}
V.~Jungnickel, K.~Manolakis, W.~Zirwas, B.~Panzner, V.~Braun, M.~Lossow,
  M.~Sternad, Apelfr\"{o}jd R., and T.~Svensson.
\newblock The {R}ole of {S}mall {C}ells, {C}oordinated {M}ultipoint, and
  {M}assive {MIMO} in 5g.
\newblock {\em IEEE Communications Magazine}, 52(44-51), 2014.

\bibitem{CapMeasuJungnick2008}
V.~Jungnickel, S.~Jaeckel, L.~Thiele, L.~Jiang, U.~Kr\"{u}ger, A.~Brylka, and
  C.~Helmont.
\newblock Capacity {M}easurements in a {C}ooperative {MIMO} {N}etwork.
\newblock {\em IEEE Transactions on Vehicular Communications}, 58:2392 -- 2405,
  Jun 2008.

\bibitem{GesMultMIMO2010}
D.~Gesbert, S.~Hanly, H.~Huang, S.~Shamai-Shitz, O.~Simeone, and W.~Yu.
\newblock Multi-{C}ell {MIMO} {C}ooperative {N}etworks: {A} {N}ew {L}ook at
  {I}nterference.
\newblock {\em IEEE JSAC}, 28(9), 2010.

\bibitem{GioA012012}
A~Giovanidis, J.~Krolikowski, and S.~Brueck.
\newblock A 0-1 program to form minimum cost clusters in the downlink of
  cooperating base stations.
\newblock In {\em WCNC}, pages 940--945, 2012.

\bibitem{CloudRANChecko2015}
A.~Checko, H.~Christiansen, Y.~Yan, L.~Scolari, G.~Kardaras, M.~S. Berger, and
  L.~Dittmann.
\newblock Cloud {RAN} for {M}obile {N}etworks--{A} {T}echnology {O}verview.
\newblock {\em IEEE Communications Surveys and Tutorials}, 17:405--426, 2015.

\bibitem{DynResAllocLyaz2016}
M.~Lyazidi, N.~Aitsaadi, and R.~Langar.
\newblock Dynamic resource allocation for {C}loud-{RAN} in {LTE} with real-time
  {BBU}/{RRH} assignment.
\newblock {\em IEEE International Conference on Communications (ICC)}, 2016.

\bibitem{DhillonBest12}
H.S. Dhillon, R.K. Ganti, F.~Baccelli, and J.G. Andrews.
\newblock Modeling and analysis of {K}-tier downlink heterogeneous cellular
  networks.
\newblock {\em IEEE JSAC, vol. 30, no. 3, pp. 550-560, Apr.}, 2012.

\bibitem{OptSmallCellsChecko}
A.~Checko, H.~Holm, and H.~Christiansen.
\newblock Optimizing small cells deployment by the use of {C}-{RAN}s.
\newblock {\em European Wireless 2014}, 2014.

\bibitem{BacBlaVol1}
F.~Baccelli and B.~B{\l}aszczyszyn.
\newblock {\em Stochastic Geometry and Wireless Networks, Volume I --- Theory},
  volume 3, No 3--4 of {\em Foundations and Trends in Networking}.
\newblock NoW Publishers, 2009.

\bibitem{BacAStoGeo2015}
F.~Baccelli and A.~Giovanidis.
\newblock A {S}tochastic {G}eometry {F}ramework for {A}nalyzing
  {P}airwise-{C}ooperative {C}ellular {N}etworks.
\newblock {\em IEEE Trans. on Wireless Communications}, 14(2):794--808, 2015.

\bibitem{NigCoordMul2014}
G.~Nigam, P.~Minero, and M.~Haenggi.
\newblock Coordinated {M}ultipoint {J}oint {T}ransmission in {H}eterogeneous
  {N}etworks.
\newblock {\em IEEE Trans. on Communications}, 62(11):4134--4146, 2014.

\bibitem{BlasStuSINTFact2015}
B.~Blaszczyszyn and H.~P. Keeler.
\newblock Studying the {SINR} {P}rocess of the {T}ypical {U}ser in {P}oisson
  {N}etworks {U}sing its {F}actorial {M}oment {M}easures.
\newblock {\em IEEE Trans. on Information Theory}, 61(12):6774--6794, 2015.

\bibitem{TanTracMod2014}
R.~Tanbourgi, S.~Singh, J.~G. Andrews, and F.~K. Jondral.
\newblock A {T}ractable {M}odel for {N}oncoherent {J}oint-{T}ransmission {B}ase
  {S}tation {C}ooperation.
\newblock {\em IEEE Trans. on Wireless Communications}, 13(9):4959--4973, 2014.

\bibitem{LocAwaSakrHoss}
A.~H. Sakr and E.~Hossain.
\newblock Location-aware cross-tier coordinated multipoint transmission in
  two-tier cellular networks.
\newblock {\em IEEE Trans. on Wireless Communications}, 2014.

\bibitem{PapaADynmClust2008}
A.~Papadogiannis, D.~Gesbert, and E.~Hardouin.
\newblock A {D}ynamic {C}lustering {A}pproach in {W}ireless {N}etworks with
  {M}ulti-{C}ell {C}ooperative {P}roocessing.
\newblock {\em ICC}, 2008.

\bibitem{AkouIntCoor2013}
S.~Akoum and R.~W. Heath.
\newblock Interference {C}oordination: {R}andom {C}lustering and {A}daptive
  {L}imited {F}eedback.
\newblock {\em IEEE Trans. on Signal Processing}, 61(7):1822--1834, 2013.

\bibitem{ParkColBSs}
J.~Park, N.~Lee, and R.~W. Heath.
\newblock Cooperative {B}ase {S}tation {C}oloring for {P}air-wise
  {M}ulti-{C}ell {C}oordination.
\newblock {\em IEEE Trans. on Wireless Communications}, 2016.

\bibitem{AStocGemMultCellHuang2011}
H.~Huang and J.~Andrews.
\newblock A {S}tochastic {G}eometry {A}proach to {C}overage in {C}ellular
  {N}etworks {W}ith {M}ulti-{C}ell {C}ooperation.
\newblock {\em Global Telecommunications Conference}, 2011.

\bibitem{GuoSPGP2014}
Anjin Guo, Yi~Zhong, M.~Haenggi, and Wenyi Zhang.
\newblock The gauss-poisson process for wireless networks and the benefits of
  cooperation.
\newblock In {\em IEEE Trans. on Wireless Communications}, 2016.

\bibitem{StocGeoWirHaen2012}
M.~Haenggi.
\newblock {\em Stochastic {G}eometry for {W}ireless {N}etworks}.
\newblock Cambridge University Press New York, NY, USA, 1st edition, 2012.

\bibitem{AfshFundClustCent}
M.~Afshang, H.S. Dhillon, and P.H.J. Chong.
\newblock Fundamentals of {C}luster-{C}entric {C}ontent {P}lacement in
  {C}ache-{E}nabled {D}evice-to-{D}evice {N}etworks.
\newblock {\em IEEE Trans. on Communications}, 2016.

\bibitem{HaggNNHS1996}
O.~H\"{a}ggstr\"{o}m and R.~Meester.
\newblock Nearest neighbour and hard sphere models in continuum percolation.
\newblock {\em Wiley Random Struct. Alg.}, 9(3):295--315, 1996.

\bibitem{DescChainLilyModDalLast2005}
D.~J. Daley and G.~Last.
\newblock Descending chains, the lilypond model, and mutual-nearest-neighbour
  matching.
\newblock {\em Advances in Applied Probability}, 37(3):604--628, 2005.

\bibitem{Kozakova06}
I.~Kozakova, R.~Meester, and S.~Nanda.
\newblock The size of components in continuum nearest-neighbor graphs.
\newblock {\em The Annals of Probability}, 34, no.2:528--538, 2006.

\bibitem{DalStoSto99}
D.J. Daley, H.~Stoyan, and D.~Stoyan.
\newblock The volume fraction of a {P}oisson germ model with maximally
  non-overlapping spherical grains.
\newblock {\em Adv. Appl. Prob. (SGSA)}, 31:610--624, 1999.

\bibitem{GioAnalInt2015}
A.~Giovanidis, L.~David \'{A}lvarez Corrales, and L.~Decreusefond.
\newblock Analyzing interference from static cellular cooperation using the
  {N}earest {N}eighbour {M}odel.
\newblock In {\em WiOpt}, pages 576--583, 2015.

\bibitem{AlvCovGains2016}
L.~David \'{A}lvarez Corrales, A.~Giovanidis, and P.~Martins.
\newblock Coverage {G}ains from the {S}tatic {C}ooperation of {M}utually
  {N}earest {N}eighbours.
\newblock In {\em Globecom}, 2016.

\bibitem{SharStratKerr2013}
P.~Kerret and D.~Gesbert.
\newblock C{SI} {S}haring {S}trategies for {T}ransmitter {C}ooperation in
  {W}ireless {N}etworks.
\newblock {\em IEEE Wireless Communications}, 20:43--49, 2013.

\bibitem{TransCoopKerret2013}
P.~Kerret.
\newblock {\em Transmission coop\'{e}rative dans les r\'{e}seaux sans-fil avec
  feedback distribu\'{e}}.
\newblock R\'{e}seaux et t\'{e}l\'{e}communications, T\'{e}l\'{e}com ParisTech,
  2013.

\bibitem{KalRanMea}
Olav Kallemberg.
\newblock {\em Random Measures}.
\newblock Academic Pr, 1984.

\bibitem{Sheskin07}
D.J. Sheskin.
\newblock {\em Handbook of parametric and nonparametric statistical
  procedures}.
\newblock Chapman \& Hall/CRC, 4th Edition, 2007.

\bibitem{EleQueTh2003}
F.~Baccelli and P.~Br\'{e}maud.
\newblock {\em Elements of {Q}ueueing {T}heory}.
\newblock Springer, second edition edition, 2003.

\bibitem{BaddNotes07}
A.J. Baddeley.
\newblock {\em Spatial Point Processes and their Applications}.
\newblock Lecture Notes in Mathematics: Stochastic Geometry, Springer Verlag
  ,Berlin Heidelberg, 2007.

\bibitem{CompGeomBook}
M.~de~Berg, O.~Cheong, M.~van Kreveld, and M.~Overmars.
\newblock {\em Computational Geometry: Algorithms and Applications}.
\newblock Springer-Verlag, 3rd rev. ed. 2008.

\bibitem{AndATract2011}
J.~G. Andrews, F.~Baccelli, and R.~K. Ganti.
\newblock A {T}ractable {A}pproach to {C}overage and {R}ate in {C}ellular
  {N}etworks.
\newblock {\em IEEE Trans. on Communications}, 59(11):3122--3134, 2011.

\bibitem{ModPerfAfsh2016}
M.~Afshang, H.S. Dhillon, and P.H.J. Chong.
\newblock Modeling and {P}erformance {A}nalysis of {C}lustered
  {D}evice-to-{D}evice {N}etworks.
\newblock {\em IEEE Transactions on Wireless Communications}, 15:4957 -- 4972,
  April 2016.

\bibitem{AfirstCourseRoss}
Sheldon~M. Ross.
\newblock {\em A first course in probability}.
\newblock Pearson Pretince Hall, 8th edition, 2009.

\bibitem{ModBessEric}
Eric~W. Weisstein.
\newblock Modified {B}essel {F}unction of the {F}irst {K}ind.
\newblock {\em MathWorld--A Wolfram Web Resource.
  http://mathworld.wolfram.com/ModifiedBesselFunctionoftheFirstKind.html}.

\end{thebibliography}

%

\begin{IEEEbiography}[{\includegraphics[width=1in,height=1.25in,clip]{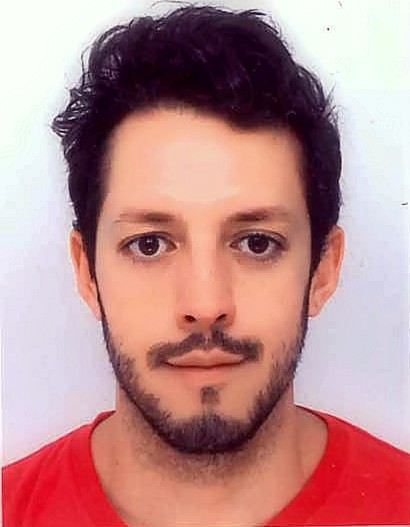}}]{Luis David \'Alvarez Corrales}
is a Ph.D. student at the Network and Computer Science Department, Telecom ParisTech. He received the Master's degree in Probability and Stochastic Models 
from the University Pierre et Marie Curie (UPMC), laboratory LPMA. He has further completed the Master’s degree in Mathematical Finance and the Bachelor's degree in Mathematics at the National Autonomous University of Mexico (UNAM). In Mexico, he has worked in CRM analysis, energy regulation, and insurance regulation (Solvency II). His research lies in the area of applied probability, stochastic modeling, and stochastic geometry applications for the performance evaluation of telecommunication networks. 

\end{IEEEbiography}

\begin{IEEEbiography}[{\includegraphics[width=1in,height=1.25in,clip]{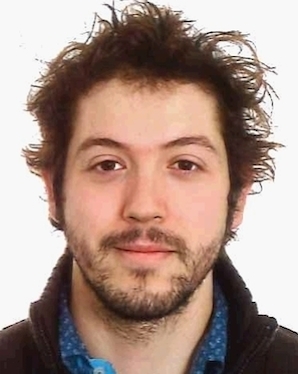}}]{Anastasios Giovanidis}
is a permanent researcher of the French National Center for Scientific Research (CNRS). After spending three years with Telecom ParisTech, laboratory CNRS-LTCI, he is now affiliated with the University Pierre et Marie Curie (UPMC), laboratory CNRS-LIP6. He has been a postdoctoral fellow, first with the Zuse Institute Berlin, Germany and later with INRIA, Paris, France. He received the Dr.-Ing. degree in Mobile Communications from the Technical University of Berlin, Germany and the Diploma in Electrical and Computer Engineering from the National Technical University of Athens, Greece. His research lies in the area of performance evaluation and optimization of telecommunication networks, with emphasis in queuing and stochastic geometry applications.
\end{IEEEbiography}

\begin{IEEEbiography}[{\includegraphics[width=1in,height=1.25in,clip]{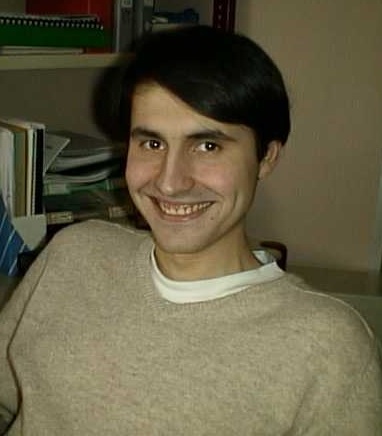}}]{Philippe Martins}
is Professor in the networking and computer science department, at Télécom ParisTech (Paris, France). He is an IEEE senior member. His main research interests lie in protocol design and performance 
evaluation of mobile networks. He worked on stochastic geometry models for dimensioning and planning. He is also investigating on coverage modelling and energy harvesting using models based on simplicial homology.\end{IEEEbiography}

\begin{IEEEbiography}[{\includegraphics[width=1in,height=1.25in,clip]{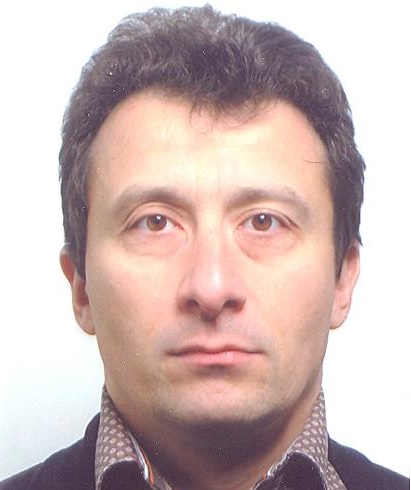}}]{Laurent Decreusefond}
is a former student of Ecole Normale Supérieure de Cachan. He obtained his Ph.D. degree in Mathematics in 1994 from Telecom ParisTech and his Habilitation in 2001. He is currently a Professor in the Network and Computer Science Department, at Telecom ParisTech. His main fields of interest are the Malliavin calculus, the stochastic analysis of long range dependent processes, random geometry and topology and their applications. With P. Moyal, he co-authored a book about the stochastic modelling of telecommunication
\end{IEEEbiography}





\clearpage

\newpage 

\appendices
\section*{Appendix A (Supplemental Material)}
\label{AppendixA}

\subsection*{Proof of Proposition \ref{VoronoiPerc}}
Denote by $(R,\Theta)$ the polar coordinates of the closest $\Phi$ atom from the typical user. The r.v. $R$ is Rayleigh distributed, with scale parameter $(\lambda 2 \pi)^{-1/2}$. Because of the isotropy of a stationary PPP, $R$ and $\Theta$ are independent r. v. and $\Theta$ is uniformly distributed over $[0,2\pi)$. Therefore, its density function is given by 
\begin{equation*}
f_{(R,\Theta)}(r,\theta)=\lambda r e^{-\lambda\pi r^2}\mathbf{1}_{\{r>0\}}\mathbf{1}_{\{\theta\in [0,2\pi)\}}
\end{equation*} 
Then, 
\begin{equation*}
\begin{split}
\mathbb{P}(0 & \curvearrowright \Phi^{(2)}) = \mathbb{E}\left( \mathbb{P}(0\curvearrowright \Phi^{(2)}|R,\Theta) \right) \\
& = \int^\infty_0 \int^{2\pi}_0 \mathbb{P}(0\curvearrowright \Phi^{(2)}|R=r,\Theta=\theta)\lambda r e^{-\lambda\pi r^2} d\theta dr
\end{split}
\end{equation*}
Fix a realisation $\phi$. Denote by $x$ and $y$ two different atoms from $\phi$, whose polar coordinates are $(r,\theta)$ and $(s,\varphi)$, respectively. If $\rho$ denotes the Euclidean distance between $x_1$ and $x_2$, then, 
\begin{equation*}
\rho^2=r^2+s^2-2rscos(\theta-\varphi)
\end{equation*}
If we suppose that $x$ is the nearest neighbor atom from $\phi$ to the origin, then, the atoms $x$ and $y$ are in MNNR iff 
\begin{equation}
D(x,y):=(B(x,\rho)\cup B(y,\rho))\backslash B(0,r)
\end{equation}
is empty of atoms from $\phi \backslash \{x,y\}$. Denote by $F(r,s,\theta,\varphi)$ the Euclidean surface of $D(x,y)$, the empty space function of a PPP implies that
\begin{equation*}
\begin{split}
\mathbb{P}(0\curvearrowright & \Phi^{(2)}|R=r,\Theta=\theta) \\
& = \int^\infty_0 \int^{2\pi}_0 e^{-\lambda F(r,s,\theta,\varphi)}sdsd\varphi,
\end{split}
\end{equation*}
and therefore, 
\begin{equation*}
\begin{split}
\mathbb{P}(0 & \curvearrowright \Phi^{(2)}) \\
& = \lambda^2 \int^\infty_0 \int^{2\pi}_0 \int^\infty_0 \int^{2\pi}_0 e^{-\lambda F(r,s,\theta,\varphi)-\lambda\pi r^2} rs dsd\varphi d\theta dr 
\end{split}
\end{equation*}
In some cases, it is actually possible to find explicit values for $F(r,s,\theta,\varphi)$. For example, the case $\rho \geq 2 r$ implies that $B(0,r) \subset B(x,\rho)$. Thus, 
\begin{equation*}
\mathcal{S}(D(x,y))=\mathcal{S}(B(y,\rho)\backslash B(x,\rho) )+\mathcal{S}(B(x,\rho))-\mathcal{S}(B(0,r))
\end{equation*}
Then, we have that
\begin{equation*}
\mathcal{S}(B(y,\rho)\backslash B(x,\rho))=\pi \rho^2(1-\gamma),
\end{equation*}
and
\begin{equation*}
\mathcal{S}(B(x,\rho))-\mathcal{S}(B(0,r))=\pi \rho^2-\pi r^2.
\end{equation*}
Unfortunately, in other cases is arduous to obtain $F(r,s,\theta,\varphi)$. 

\section*{Appendix B (Supplemental Material)}
\label{AppendixB}

\subsection*{Proof of Theorem \ref{convergence}}
For a natural number $n$, denote $B_n:=B(0,n)$ and fix $\Phi_n:=\Phi^{(1)}_{B_n}$, as done in equation \eqref{finiteWindow}. We will prove that, for every compact subset $E$ of $\mathbb{R}^2$,
\begin{itemize}
\item[(i)] \begin{equation*}
\lim_{n\rightarrow \infty} \mathbb{P}(\Phi_n(E)=0)=\mathbb{P}(\Phi^{(1)}(E)=0)
\end{equation*}
\item[(ii)]
\begin{equation*}
\limsup_{n\rightarrow \infty} \mathbb{P}(\Phi_n(E)\leq 1)\geq \mathbb{P}(\Phi^{(1)}(E)\leq 1)
\end{equation*} 
\item[(iii)] 
\begin{equation*}
\lim_{t \nearrow \infty}\limsup_{n\rightarrow \infty}\mathbb{P}(\Phi_n(E)>t)=0
\end{equation*}
\end{itemize}
The previous being equivalent to convergence in distribution of the sequence of point processes $(\Phi_n)$ to $\Phi^{(1)}$ \cite{KalRanMea}. 
Fix a compact $E\subset \mathbb{R}^2$. Let us start to prove $(i)$. Being $\Phi_n$ a thinning of the PPP $\Phi$,
\begin{equation*}
\begin{split}
\mathbb{P}(\Phi_n(E)=0) = & e^{-\lambda \mathcal{S}(E)} \\
&+\mathbb{P}(\Phi_n(E)=0,\Phi(E)>0) \\
= & e^{-\lambda \mathcal{S}(E)} \\
 &+ \mathbb{P}(\Phi_n(E)=0,\Phi^{(1)}(E)=0,\Phi(E)>0) \\
 & + \mathbb{P}(\Phi_n(E)=0,\Phi^{(1)}(E)>0,\Phi(E)>0)
\end{split}
\end{equation*}
Given that the compact subset $E$ is fixed, take a natural number $n_1$ such that 
\begin{equation*}
n_1>3 sup_{y \in E} \|y\|
\end{equation*} 
and such that $E\subset B_n$, for every $n>n_1$. Therefore, for every atom belonging to $\Phi^{(1)}$, but not to $\Phi_n$, the distance to its nearest neighbor must exceed $\frac{2}{3}sup_{y \in E} \|y\|$. Thus, there exits a constant $C_1>0$ such that, for every $n>n_1$,
\begin{equation*}
\begin{split}
\mathbb{P}(\Phi_n(E)=0,\Phi^{(1)}(E)>0,\Phi(E)>0) \leq e^{-\lambda \pi C_1 n^2 }
\end{split}
\end{equation*}
In the same fashion, 
\begin{equation*}
\begin{split}
\mathbb{P}(\Phi^{(1)}(E)=0)  & = e^{-\lambda \mathcal{S}(E)}  \\
& +\mathbb{P}(\Phi_n(E)=0,\Phi^{(1)}(E)=0,\Phi(E)>0)\\
& + \mathbb{P}(\Phi^{(1)}(E)=0,\Phi_n(E)>0,\Phi(E)>0)
\end{split}
\end{equation*}
and there must exists a natural number $n_2$, and a constant $C_2>0$ such that, for every $n>n_2$,
\begin{equation*}
\mathbb{P}(\Phi^{(1)}(E)=0,\Phi_n(E)>0,\Phi(E)>0) \leq  e^{-\lambda \pi C_2 n^2 }
\end{equation*}
Take $N=max\{n_1,n_2\}$, then, for every $n>N$,
\begin{equation*}
\begin{split}
|\mathbb{P}(\Phi_n(E)=0)-\mathbb{P}(\Phi^{(1)}(E)=0)| \leq e^{-\lambda \pi C_1 n^2 }+e^{-\lambda \pi C_2 n^2 }
\end{split}
\end{equation*}
We conclude that 
\begin{equation*}
\begin{split}
\lim_{n\rightarrow \infty} \mathbb{P}(\Phi_n(E)=0)=\mathbb{P}(\Phi^{(1)}(E)=0)
\end{split}
\end{equation*}
To prove $(ii)$, remark that 
\begin{equation*}
\begin{split}
\mathbb{P}(\Phi_n(E)=1) = & \ \ \ \mathbb{P}(\Phi_n(E)=1,\Phi^{(1)}(E)\neq 1) \\
& + \mathbb{P}(\Phi_n(E)=1,\Phi^{(1)}(E)=1) \\
\mathbb{P}(\Phi^{(1)}(E)=1) = & \ \ \ \mathbb{P}(\Phi^{(1)}(E)=1,\Phi_n(E)\neq 1) \\
& + \mathbb{P}(\Phi^{(1)}(E)=1,\Phi_n(E)=1) \\
\end{split}
\end{equation*}
hence
\begin{equation*}
\begin{split}
|\mathbb{P}&(\Phi_n(E)=1)-\mathbb{P}(\Phi^{(1)}(E)=1)| \\
\leq & | \ \mathbb{P}(\Phi_n(E)=1,\Phi^{(1)}(E)\neq 1) \\
& - \mathbb{P}(\Phi^{(1)}(E)=1,\Phi_n(E)\neq 1)|  
\end{split}
\end{equation*}
In the same way as we did before, we can prove that 
\begin{equation*}
\begin{split}
\lim_{n\rightarrow \infty} | & \ \mathbb{P}(\Phi_n(E)=1,\Phi^{(1)}(E)\neq 1) \\
& - \mathbb{P}(\Phi^{(1)}(E)=1,\Phi_n(E)\neq 1)|=0,
\end{split}
\end{equation*}
and this leads to 
\begin{equation*}
\lim_{n\rightarrow \infty}\mathbb{P}(\Phi_n(E)=1)=\mathbb{P}(\Phi^{(1)}(E)=1)
\end{equation*}
Finally, we prove $(iii)$. Being $\Phi_n$ a thinning of the PPP $\Phi$, 
\begin{equation*}
\begin{split}
\mathbb{P}(\Phi_n(E)>t) & \leq \mathbb{P}(\Phi(E)>t) \\
& \leq \frac{\mathbb{E} \Phi(E)}{t}\\
& = \frac{\lambda \mathcal{S}(E)}{t} 
\end{split}
\end{equation*} 
which goes to zero, as $t \nearrow \infty$.

Take a sequence $(A_n)$ of compact sets. To conclude that $(\Phi^{(1)}_{A_n})$ converges in distribution to $\Phi^{(1))}$, then it must fulfil that, for every natural number $m$, there exits another natural number $N$ such that, for every $n>N$, then, $B_m \subset A_n$. We can prove that $(\Phi^{(2)}_{A_n})$ converges in distribution to $\Phi^{(2)}$.
\section*{Appendix C (Supplemental Material)}
\subsection*{Proof of Proposition \ref{denR2ZR2}.} 
Denote by $A$ and $B$ the Cartesian coordinates of the nearest parent to the typical user and his daughter, respectively, and also denote by $(R_2,\Theta)$ and $(Z_2,\Psi)$ their respective polar coordinates. Define $C:=A-B$ and denote its polar coordinates by $(W,\Omega)$ (see Figure \ref{prueba}). The random variables $R_2$ and $W$ are Rayleigh distributed, with scale parameters $\zeta$ and $\alpha$, respectively. Moreover, the random angles $\Theta$, $\Psi$ and $\Omega$ are considered uniformly distributed over $[0,2\pi)$, to preserve the isotropy in the PPP case. Also, the random variables $R_2$, $\Theta$, $W$, and $\Omega$ are independent between them, as in the PPP case. 

Our first goal is to find the joint distribution of the random vector $(R_2,Z_2)$ and, as a consequence, find also the distribution of $Z_2$. The cartesian coordinates of a point around a center, wich has rayleigh radial distance from the origin and uniform angle, are distibuted as an independent Gaussian vector \cite[pp. 276, Ex. 7b]{AfirstCourseRoss}. Hence, there exist independent random variables $A_x,A_y,C_x,C_y$, where $A_x,A_y$ are Normal distributed, with parameter $(0,\zeta^2)$, and $C_x,C_y$ are also Normal distributed, with parameters $(0,\alpha^2)$, and such that 
\begin{equation*}
\begin{split}
(A_x,A_y) \stackrel{d}{=} (R_2cos\Theta,R_2sin\Theta), \\
(C_x,C_y) \stackrel{d}{=} (Wcos\Omega,Wsin\Omega).
\end{split}
\end{equation*}
\begin{figure}[htbp]
\centering
\subfigure[]{\includegraphics[width=0.2\textwidth]{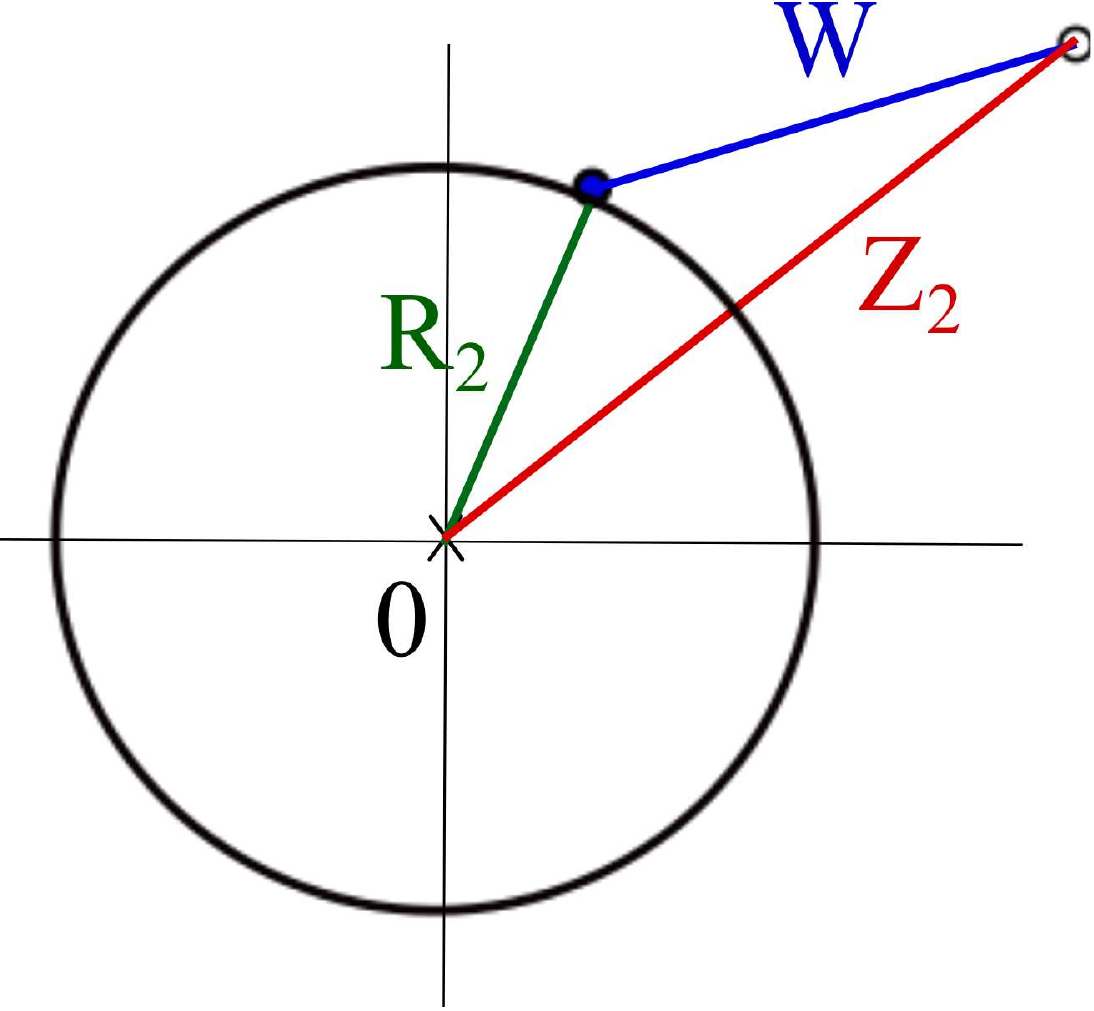}}
\caption{}\label{prueba}
\end{figure}
By definition, 
\begin{equation*}
\begin{split}
A_x & \stackrel{d}{=} R_2 cos\Theta, \ \ \ \  C_x \stackrel{d}{=} R_2cos\Theta-Z_2cos\Psi, \\
A_y & \stackrel{d}{=} R_ 2 sin\Theta, \ \ \ \ C_y \stackrel{d}{=} R_2sin\Theta-Z_2 sin\Psi.
\end{split}
\end{equation*}
The absolute value of the Jacobian of the above transformation is $R_2 Z_ 2$. Denote by  $f_{A_x,A_y,C_x,C_y}$ the joint PDF of $(A_x,A_y,C_x,C_y)$, if $f_{R,\Theta,Z,\Psi}$ denotes the joint PDF  of $(R_2,\Theta,Z_2,\Psi)$, then, by the change of variable Theorem \cite[pp. 274]{AfirstCourseRoss},  
\begin{equation*}
\begin{split}
f & _{R,\Theta,Z,\Psi} (r,\theta,z,\psi) \\
& = f_{A_x,A_y,C_x,C_y}(rcos\theta,rsin\theta,rcos\theta-zcos\psi,rsin\theta-zsin\psi)rz \\
& \stackrel{(a)}{=} \frac{rz}{(2\pi\alpha \zeta)^2}e^{-\left( \frac{ r^2cos^2\theta}{2\zeta^2}+\frac{ r^2sin^2\theta}{2\zeta^2}+\frac{ (rcos\theta-zcos\psi)^2}{2\alpha^2}+\frac{ (rsin\theta-zsin\psi)^2}{2\alpha^2} \right)} \\
& \stackrel{(b)}{=} \frac{rz}{(2\pi\alpha \zeta)^2}e^{-\left( \frac{r^2}{2} \left(\frac{1}{\alpha^2}+\frac{1}{\zeta^2}\right)
+\frac{ z^2}{2\alpha^2}-\frac{ rzcos(\theta-\psi)}{\alpha^2} \right)},
\end{split}
\end{equation*}
where $(a)$ comes from the formula of the distribution of independent Gaussian random variables, and $(b)$ follows from the trigonometric identities $cos^2\theta+sin^2\theta =1$ and $cos\theta cos\psi+sin\theta sin\psi =cos(\theta-\psi)$. To obtain the joint PDF of $(R_2,Z_2)$, denoted by $f_{R_2,Z_2}$, we integrate the previous expression over $[0,2\pi)\times [0,2\pi)$, with respect to the variables $\theta$ and $\psi$, 
\begin{equation*}
\begin{split}
f & _{R_2,Z_2} (r,z) \\
& \stackrel{(c)}{=} \frac{rz}{( \alpha \zeta)^2}e^{-\left( \frac{r^2}{2} \left(\frac{1}{\alpha^2}+\frac{1}{\zeta^2}\right)+\frac{z^2}{2\alpha^2} \right) } \frac{1}{2\pi}\int^{2\pi}_0  e^{\frac{ rzcos w}{\alpha^2} }dw \\
& \stackrel{(d)}{=} \frac{rz}{(\alpha \zeta)^2}e^{-\left( \frac{r^2}{2} \left(\frac{1}{\alpha^2}+\frac{1}{\zeta^2}\right)+\frac{z^2}{2\alpha^2} \right) } I_0\left(\frac{rz}{\alpha^2} \right),
\end{split}
\end{equation*}
where $(c)$ comes from the change of variable $w=\theta-\psi$ and $(d)$ follows because the integral representation $I_0(x)=\frac{1}{2\pi}\int^{2\pi}_0  e^{x cosw }dw$ \cite{ModBessEric}. Let us denote by $f_{Z_2}$ the PDF of the random variable $Z_2$ and by $\eta=\left(\frac{1}{\alpha^2}+\frac{1}{\zeta^2}\right)$. To obtain $f_{Z_2}$, we integrate over $[0,\infty)$ with respect to the variable $r$ the preceding equation
\begin{align*}
f_{Z_2}(z) & \stackrel{(e)}{=} \int^\infty_0 \frac{rz}{(\alpha\zeta)^2}e^{-\left(\frac{r^2}{2} \eta+\frac{z^2}{2\alpha^2}\right)}\sum^\infty_{n=0}\frac{(1/4)^n}{(n!)^2}\left( \frac{rz}{\alpha^2} \right)^{2n} dr \\
& =  \frac{z}{(\alpha\zeta)^2}e^{-\frac{z^2}{2\alpha^2}} \sum^\infty_{n=0} \frac{(1/4)^n}{(n!)^2}\left( \frac{z^2}{\alpha^4} \right)^{n}  \int^\infty_0 r^{2n} e^{-\frac{r^2}{2}  \eta} r dr \\
& \stackrel{(f)}{=} \frac{z}{(\alpha\zeta)^2\eta}e^{-\frac{z^2}{2\alpha^2}} \sum^\infty_{n=0} \frac{\left(\frac{z^2}{2 \alpha^4 \eta }\right)^n}{n!} \\
& = \frac{z}{(\alpha\zeta)^2\eta} e^{-\frac{z^2}{2\alpha^2} }e^{\frac{z^2}{2 \alpha^4\eta} } \\
& \stackrel{(g)}{=}  \frac{z}{\alpha^2+\zeta^2} e^{-\frac{z^2}{2(\alpha^2+\zeta^2) } }, \\
\end{align*}
where $(e)$ comes from the series representation $I_0(x)=\sum^\infty_{n=0}\frac{(1/4)^n}{(n!)^2}x^{2n}$ \cite{ModBessEric}, while $(f)$ follows after the formula
\begin{equation*}
\int^\infty_0 r^{2n} e^{-\frac{r^2}{2}\eta} r dr=\frac{2^n}{\eta^{n+1}}n!,
\end{equation*}
and $(g)$ after soma algebraic manipulations and from the definition of $\eta$.

\section*{Appendix D (Supplemental Material)}
\subsection*{Proof of Proposition \ref{CovProbNN} }
We split the proof in three parts.

\subsubsection*{A first expression}

The events $\{R_1<\min\{R_2,Z_2\}\}$, $\{R_2<\min\{R_1,Z_2\}\}$, and $\{Z_2<\min\{R_1,R_2\}\}$ are mutually independent, then, from equation \eqref{SINR},  
\begin{equation}
\label{CP2}
\begin{split}
\mathbb{P} & (\mathrm{SINR}>T) \\
 & = \mathbb{P}\left( \frac{\tilde{f}\left( R_1 \right)}{\sigma^2+\hat{\mathcal{I}}(R_1,R_1)}>T,R_1<\min\{R_2,Z_2\} \right) \\
& + \mathbb{P}\left( \frac{\tilde{g}\left( R_2,Z_2 \right)}{\sigma^2+\hat{\mathcal{I}}(R_2,R_2)}>T,R_2<\min\{R_1,Z_2\} \right) \\
& + \mathbb{P}\left( \frac{\tilde{g}\left( R_2,Z_2 \right)}{\sigma^2+\hat{\mathcal{I}}(Z_2,R_2)},Z_2<\min\{R_1,R_2\} \right) \\
& = \mathbb{E}\left[ \mathbf{1}_{ \left\{ \frac{\tilde{f}\left( R_1 \right)}{\sigma^2+\hat{\mathcal{I}}(R_1,R_1)}>T \right\}} \mathbf{1}_{ \left\{ R_1<\min\{R_2,Z_2\} \right\} } \right] \\
& + \mathbb{E}\left[ \mathbf{1}_{ \left\{ \frac{\tilde{g}\left( R_2,Z_2 \right)}{\sigma^2+\hat{\mathcal{I}}(R_2,R_2)}>T \right\} } \mathbf{1}_{ \left\{ R_2<\min\{R_1,Z_2\} \right\} } \right] \\
& + \mathbb{E}\left[ \mathbf{1}_{ \left\{ \frac{\tilde{g}\left( R_2,Z_2 \right)}{\sigma^2+\hat{\mathcal{I}}(Z_2,R_2)} >T \right\} } \mathbf{1}_{ \left\{ Z_2<\min\{R_1,R_2\} \right\} } \right]
\end{split} 
\end{equation}
For the first term we have that  
\begin{equation*}
\begin{split}
\mathbb{E} &\Bigg[  \mathbf{1}_{\Big\{\frac{\tilde{f}\left( R_1 \right)}{\sigma^2+\hat{\mathcal{I}}(R_1,R_1)}>T \Big\} }  \mathbf{1}_{\{R_1<\min{\{R_2,Z_2\} } \} }   \Bigg]  \\
&  = \mathbb{E}  \Bigg[ \mathbb{E}\Bigg[  \mathbf{1}_{\Big\{\frac{\tilde{f}\left( R_1 \right)}{\sigma^2+\hat{\mathcal{I}}(R_1,R_1)}>T \Big\} }  \mathbf{1}_{\{R_1<\min{\{R_2,Z_2\} } \} }\Big | R_1,R_2,Z_2 \Bigg] \Bigg]  \\
& \stackrel{(a)}{=} \mathbb{E} \Bigg[ \mathbf{1}_{\{R_1<\min{\{R_2,Z_2\} } \} } \mathbb{E}\Bigg[  \mathbf{1}_{\Big\{\frac{\tilde{f}\left( R_1 \right)}{\sigma^2+\hat{\mathcal{I}}(R_1,R_1)}>T \Big\} } \Big | R_1,R_2,Z_2 \Bigg] \Bigg]  \\
& \stackrel{(b)}{=} \mathbb{E} \Bigg[ \mathbf{1}_{\{R_1<\min{\{R_2,Z_2\} } \} } \mathbb{E}\Bigg[  \mathbf{1}_{\Big\{\frac{\tilde{f}\left( R_1 \right)}{\sigma^2+\hat{\mathcal{I}}(R_1,R_1)}>T \Big\} } \Big | R_1 \Bigg] \Bigg]  \\
& = \mathbb{E} \Bigg[ \mathbf{1}_{\{R_1<\min{\{R_2,Z_2\} } \} }  \mathbb{P}\Bigg( \frac{\tilde{f}\left( R_1 \right)}{\sigma^2+\hat{\mathcal{I}}(R_1,R_1)}>T  \Big | R_1 \Bigg) \Bigg],  
\end{split}
\end{equation*}
where $(a)$ comes from the properties of the conditional expectation and $(b)$ follows because the event $\left\{\frac{\tilde{f}\left( R_1 \right)}{\sigma^2+\hat{\mathcal{I}}(R_1,R_1)}>T\right\}$ is independent of $R_2$ and $Z_2$.

After a similar analysis for the two terms with the cooperative signal 
\begin{equation*}
\begin{split}
\mathbb{E} &\Bigg[  \mathbf{1}_{\Big\{\frac{\tilde{g}\left( R_2,Z_2 \right)}{\sigma^2+\hat{\mathcal{I}}(R_2,R_2)}>T \Big\} }  \mathbf{1}_{\{R_2<\min{\{R_1,Z_2\} } \} }   \Bigg]  \\
& = \mathbb{E} \Bigg[ \mathbf{1}_{\{R_2<\min{\{R_1,Z_2\} } \} }  \mathbb{P}\Bigg( \frac{\tilde{g}\left( R_2,Z_2 \right)}{\sigma^2+\hat{\mathcal{I}}(R_2,R_2)}>T  \Big | R_2,Z_2 \Bigg) \Bigg], \\
\mathbb{E} &\Bigg[  \mathbf{1}_{\Big\{\frac{\tilde{g}\left( R_2,Z_2 \right)}{\sigma^2+\hat{\mathcal{I}}(Z_2,R_2)}>T \Big\} }  \mathbf{1}_{\{Z_2<\min{\{R_1,R_2\} } \} }   \Bigg]  \\
& = \mathbb{E} \Bigg[ \mathbf{1}_{\{Z_2<\min{\{R_1,R_2\} } \} }  \mathbb{P}\Bigg( \frac{\tilde{g}\left( R_2,Z_2 \right)}{\sigma^2+\hat{\mathcal{I}}(Z_2,R_2)}>T  \Big | R_2,Z_2 \Bigg) \Bigg]  
\end{split}
\end{equation*}
\subsubsection*{Some functions}
Denote by 
\begin{equation*}
\begin{split}
& \hat{G}(r) := \mathbb{P}\Bigg( \frac{\tilde{f}\left( R_1 \right)}{\sigma^2+\hat{\mathcal{I}}(R_1,R_1)}>T \Big| R_1=r \Bigg) \\
& \hat{H}(r,z) := \mathbb{P}\Bigg( \frac{\tilde{g}\left(R_2,Z_2 \right)}{\sigma^2+\hat{\mathcal{I}}(R_2,R_2)}>T \Big|R_2=r,Z_2=z\Bigg) \\
& \hat{K}(r,z) := \mathbb{P}\Bigg( \frac{\tilde{g}\left(R_2,Z_2 \right)}{\sigma^2+\hat{\mathcal{I}}(Z_2,R_2)}>T \Big| R_2=r,Z_2=z \Bigg).
\end{split}
\end{equation*}
For a given $r>0$, because $R_1$ is independent from $\hat{\mathcal{I}}(R_1,R_1)$, 
\begin{equation*}
\hat{G}(r)=\mathbb{P}\left( \tilde{f}\left( r \right)>T(\sigma^2+\hat{\mathcal{I}}(r,r)) \right)
\end{equation*}
Consider $\tilde{f}(r)$ as in \eqref{signalsin}, then it follows an exponential distribution with parameter $\frac{r^\beta}{p}$. Since $\hat{\mathcal{I}}(r,r)$ is independent of $\tilde{f}(r)$,  
\begin{align*}\label{Fun1}
\hat{G}(r) & =  \mathbb{E}\Big[\mathbb{P}\Big(\tilde{f}(r)>T\left( \sigma^2+\hat{\mathcal{I}}(r,r) \right)\Big| \hat{\mathcal{I}}(r,r) \Big)\Big] \nonumber \\
& = e^{\frac{- T r^\beta}{p} \sigma^2 } \mathcal{L}_{\hat{\mathcal{I}}^{(1)}}\Bigg(\frac{Tr^\beta}{p};r \Bigg) \mathcal{L}_{\hat{\mathcal{I}}^{(2)}}\Bigg(\frac{Tr^\beta}{p};r \Bigg), 
\end{align*}
where the deterministic functions $\mathcal{L}_{\hat{\mathcal{I}}^{(1)}}(s;\rho)$ and $\mathcal{L}_{\hat{\mathcal{I}}^{(2)}}(s;\rho)$ are given by \eqref{LPempty}. 

In the same fashion, for $r>0$ and $z>0$, because $(R_2,Z_2)$ is independent of $\hat{\mathcal{I}}(R_2,R_2)$,
\begin{align*}
\hat{H}(r,z)=\mathbb{P} \left(\tilde{g} \left(r,z\right)>T\left( \sigma^2+\hat{\mathcal{I}}(r,r) \right)\right)
\end{align*}
Using the general expression in \eqref{TAIL} for $\tilde{g}(r,z)$,
\begin{align*}
& \hat{H}(r,z) = \\ 
& \sum^n_{i=1}c_i \big(r,z\big) e^{-T d_i(r,z)\sigma^2} \mathcal{L}_{\hat{\mathcal{I}}^{(1)}}\big(Td_i(r,z);r\big)\mathcal{L}_{\hat{\mathcal{I}}^{(2)}}\big(Td_i(r,z);r\big)
\end{align*}
We do the same to find and expression for $\hat{K}(r,z)$.
 
\subsubsection*{Final expression}
To complete the analysis, we need to find the coverage probability expressed in equation \eqref{CP2}, thus, we need expressions for   
\begin{equation*}
\begin{split}
\mathbb{E} & \left[\hat{G}(R_1)\textbf{1}_{\{R_1<\min\{R_2,Z_2\}\}}\right], \\
\mathbb{E} & \left[\hat{H}(R_2,Z_2)\textbf{1}_{\{R_2<\min\{R_1,Z_2\}\}}\right], \\
\mathbb{E} & \left[\hat{K}(R_2,Z_2)\textbf{1}_{\{Z_2<\min\{R_1,R_2\}\}}\right]. \\
\end{split}
\end{equation*}
Let us begin by the first one,
\begin{equation*}
\begin{split}
\mathbb{E} & \left[\hat{G}(R_1)\textbf{1}_{\{R_1<\min{R_2,Z_2}\}}\right] \\
& \stackrel{(a)}{=} \mathbb{E}\left[ \mathbb{E}\left[ \hat{G}(R_1)\textbf{1}_{\{R_1<\min{R_2,Z_2}\}}|R_1\right] \right] \\
& = \mathbb{E}\left[\hat{G}(R_1) \mathbb{E}\left[ \textbf{1}_{\{R_1<\min{R_2,Z_2}\}}|R_1\right] \right] \\
& = \mathbb{E}\left[ \hat{G}(R_1) \mathbb{P}\left( \min\{R_2,Z_2\}>R_1|R_1\right) \right],
\end{split}
\end{equation*}
where $(a)$ follows by properties of the conditional expectation. Define 
\begin{equation*}
G(r)=\hat{G}(r)\mathbb{P}\left( \min\{R_2,Z_2\}>R_1|R_1 = r\right),
\end{equation*} 
we only have left to find an explicit expression for $\mathbb{P}\left( \min\{R_2,Z_2\}>R_1|R_1=r\right)$. Because $R_1$ is independent of $(R_2,Z_2)$, 
\begin{equation*}
\mathbb{P}(\min{\{R_2,Z_2\}}>R_1|R_1=r) = \mathbb{P}(\min{\{R_2,Z_2\}}>r),
\end{equation*}
and then  
\begin{equation*}
\begin{split}
\mathbb{P}(\min{\{R_2,Z_2\}}>r) = 1-F_{R_2}(r)-F_{Z_2}(r)+F_{R_2,Z_2}(r,r),
\end{split}
\end{equation*}
where $F_{R_2}$, $F_{Z_2}$, and $F_{R_2,Z_2}$ are the CDF of $R_2$, $Z_2$, and $(R_2,Z_2)$ that can be explicitly obtained from equation \eqref{jointDensityFunction}.

In the same fashion,
\begin{equation*}
\begin{split}
\mathbb{E} & \left[\hat{H}(R_2,Z_2)\textbf{1}_{\{R_2<\min{R_1,Z_2}\}}\right] \\
& = \mathbb{E} \left[\hat{H}(R_2,Z_2)\mathbb{P}(\min\{R_1,Z_2\}>R_2|R_2,Z_2)\right], \\
\mathbb{E} & \left[ \hat{K}(R_2,Z_2)\textbf{1}_{\{Z_2<\min{R_1,R_2}\}} \right] \\
& = \mathbb{E} \left[\hat{K}(R_2,Z_2)\mathbb{P}(\min\{R_1,R_2\}>Z_2|R_2,Z_2)\right] \\
\end{split}
\end{equation*}
Define 
\begin{equation*}
\begin{split}
H(r,z):= & \hat{H}(r,z)\mathbb{P}(\min\{R_1,Z_2\}>R_2|R_2=r,Z_2=z), \\
K(r,z):= & \hat{K}(r,z)\mathbb{P}(\min\{R_1,R_2\}>Z_2|R_2=r,Z_2=z)
\end{split}
\end{equation*}
To obtain explicit formulas for $H(r,z)$ and $K(r,z)$, we proceed as before to find out that 
\begin{equation*}
\begin{split}
\mathbb{P}(\min{\{R_1,Z_2\}}>R_2|R_2=r,Z_2=z) & = (1-F_{R_1}(r))\mathbf{1}_{\{z>r\}}, \\
\mathbb{P}(\min{\{R_1,R_2\}}>Z_2|R_2=r,Z_2=z) & = (1-F_{R_1}(z))\mathbf{1}_{\{r>z\}},
\end{split}
\end{equation*}
where $F_{R_1}$ is the CDF of $R_1$. Having done this, we can evaluate the coverage probability given by equation \eqref{CP2}.

\end{document}